\newcommand{\blind}{0}
\theoremstyle{definition}
\renewcommand\paragraph{\@startsection{paragraph}{4}{\z@}%
	{\parskip}%{3.25ex \@plus1ex \@minus.2ex}%
	{-1em}%
	{\normalfont\normalsize\bfseries}}
\theoremstyle{plain}
\newtheorem{thm}{Theorem}
\newmdtheoremenv{dfn}{Definition}
\newmdtheoremenv{cl}{Claim}
\newtheorem{lemma}{Lemma}
\newtheorem{asm}{A.\hspace{-4pt}}
\theoremstyle{definition}
\newtheorem{remark}{Remark}
\DeclareMathOperator*{\argmin}{arg\,min } 
\DeclareMathOperator*{\argmax}{arg\,max }
\DeclareMathOperator{\E}{\mathbb{E}}
\DeclareMathOperator{\var}{\mathrm{Var}}
\DeclarePairedDelimiter{\abs}{\lvert}{\rvert}
\newcommand{\norm}[1]{\left\lVert#1\right\rVert}
\newcommand{\bz}{\bm{z}}
\newcommand{\bd}{\bm{d}}
\newcommand{\bA}{\bm{A}}
\newcommand{\bSigma}{\bm{\Sigma}}
\newcommand{\bx}{\bm{x}}
\newcommand{\bb}{\bm{b}}
\newcommand{\bH}{\bm{H}}
\newcommand{\bZ}{\bm{Z}}
\newcommand{\be}{\bm{e}}
\newcommand{\bv}{\bm{v}}
\newcommand{\bV}{\bm{V}}
\newcommand{\bEta}{\bm{\eta}}
\newcommand{\bPsi}{\bm{\Psi}}
\newcommand{\bw}{\bm{w}}
\newcommand{\bQ}{\bm{Q}}
\newcommand{\bI}{\bm{I}}
\newcommand{\btheta}{\bm{\theta}}
\DeclareMathOperator{\I}{\mathbb{I}}
\DeclareMathOperator{\R}{\mathbb{R}}
\DeclarePairedDelimiter{\curl}{\lbrace}{\rbrace}
 \mathchardef\mhyphen="2D %
\begin{document}

	\if1\blind
	{
			\bigskip
			\bigskip
				\bigskip

			\title{ {\sc Unit Averaging \\ for Heterogeneous Panels} }
			 
		\maketitle
		\medskip
	} \fi

	\if0\blind
	{

		\title{ {\sc Unit Averaging \\ for Heterogeneous Panels} }
		
%		\author{\normalsize{Christian Brownlees}$^{\dag}$ \and \normalsize{Vladislav Morozov}$^{\ddag, *}$} 
		
		\long\def\symbolfootnote[#1]#2{\begingroup\def\thefootnote{\fnsymbol{footnote}}\footnote[#1]{#2}\endgroup}

		\author{ {Christian Brownlees$^{\dag}$} \and 
		Vladislav Morozov$^{\ddag}$\thanks{			 
			$^{\dag}$ Department of Economics and Business, Universitat Pompeu Fabra and    Barcelona School of Economics; e-mail: \href{mailto:christian.brownlees@upf.edu}{christian.brownlees@upf.edu};
		\newline
		$^{\ddag}$ Department of Economics and Business, Universitat Pompeu Fabra and    Barcelona School of Economics; 	e-mail: 
			 \href{mailto:vladislav.morozov@barcelonagse.eu}{vladislav.morozov@barcelonagse.eu}.  \emph{Corresponding author}.   
			 \newline
			We thank Jan Ditzen, Kirill Evdokimov, Geert Mesters, Luca Neri, Katerina Petrova, Barbara Rossi, Wendun Wang, and the participants at 26th IPDC, 7th RCEA Time Series Workshop, 9th SIDE WEEE, the 2021 ERFIN workshop, and the 2023 EEA-ESEM conference for comments and discussion.
			Christian Brownlees acknowledges support from the Spanish Ministry of Science and Technology (Grant MTM2012-37195)
			and the Spanish Ministry of Economy and Competitiveness through the Severo Ochoa Programme for Centres of Excellence in R\&D (SEV-2011-0075).}\hspace{.2cm}}
		\date{May 10, 2024}

		\maketitle
	} \fi

%\date{\today}

%\maketitle

\begin{abstract}

In this work we introduce a unit averaging procedure to efficiently recover unit-specific parameters in a heterogeneous panel model.
The procedure consists in estimating the parameter of a given unit using a weighted average of all the unit-specific parameter estimators in the panel.
The weights of the average are determined by minimizing an MSE criterion we derive. 
We analyze the properties of the resulting minimum MSE unit averaging estimator in a local heterogeneity framework inspired by the literature on frequentist model averaging,
and we derive the local asymptotic  distribution of the estimator and the corresponding weights.
The benefits of the procedure are showcased with an application to forecasting unemployment rates for a panel of German regions.	

{\bigskip \noindent \footnotesize \textbf{Keywords:} heterogeneous panels, frequentist model averaging, prediction}

{\bigskip \noindent \footnotesize \textbf{JEL:} C33, C52, C53 }

\end{abstract}

\def\spacingset#1{\renewcommand{\baselinestretch}{#1}\small\normalsize} \spacingset{1}
\spacingset{1.25} % DON'T change the spacing!
 
\section{Introduction}

Estimation of unit-specific parameters in panel data models with heterogeneous parameters is a topic of active research in econometrics \citep{Maddala1997,Pesaran1999,Wang2019,Liu2020}.  
Estimation of unit-specific parameters is relevant, for instance, when interest lies in constructing forecasts for the individual units in the panel \citep{Baltagi2013,Zhang2014,Wang2019,Liu2020},
which typically arises in the analysis of international panels of macroeconomic time series \citep{Marcellino2003}.   Other unit-specific parameters of interest include individual coefficients \citep{Maddala1997, Maddala2001, Wang2019} and long-run effects of a change in a covariate \citep{Pesaran1995, Pesaran1999}.

There are three natural strategies for estimating unit-specific parameters \citep{Baltagi2008a}. 
The simplest approach consists in estimating each unit-specific parameter from its individual time series.
While this strategy  typically leads to approximately unbiased estimation, such estimators suffer from large estimation variability when the time dimension is small.
In the second approach, an assumption of parameter homogeneity is imposed and a common panel-wide estimator is used for all unit-specific parameters.
This strategy leads to small variability; however, it suffers from large bias in the presence of heterogeneity.  
The third strategy is a compromise between the first two. It uses panel-wide information to reduce the variability of the individual estimator  to obtain an estimator with favorable risk properties \citep{Maddala2001, Wang2019, Liu2020}.
This is appealing when the time dimension is moderate in the sense  that there is a nontrivial bias-variance trade-off between individual-specific and panel-wide estimation.\label{page:intro:moderate_T}

In this paper we propose a novel compromise estimator for unit-specific ``focus'' parameters --- the unit averaging estimator. 
Focus parameters considered are smooth transformations of   unit-specific parameters,  including the examples mentioned above.
The unit averaging estimator for the unit-specific focus parameter is  defined as a weighted average of all the unit-specific focus parameter estimators in the panel.
The weights  %of the average
are chosen by minimizing one of the two unit-specific mean squared error (MSE) criteria we derive.
One of the criteria can leverage prior information about similarities between cross-sectional units in terms of their parameters.  
The other criterion is agnostic and requires no prior information. 
In both cases, the weights solve a straightforward quadratic optimization problem.
The estimator is fairly general and is designed for possibly nonlinear and dynamic panel models estimated by M-estimation.

We analyze the theoretical properties of the our unit averaging methodology.
We focus on a moderate-$T$ setting --- a setting in which the amount of information in each time series is limited and the variance of individual estimators is of the same order of magnitude as the coefficients.
In this setting, we  derive the leading terms  of the MSE % 
of the unit averaging estimator.
We do so using a limited information local asymptotic technique under a  %limited information 
local heterogeneity framework, in which the unit-specific coefficients are local in the time dimension to a common mean. 
This theoretical device emulates a moderate-$T$ setting % and permits us to use limited information local asymptotic techniques.
and the trade-off between unit-specific and panel-wide information. %  
It is  inspired by  the local misspecification technique used in the frequentist model averaging literature for analyzing finite-sample properties of estimators \citep{Hjort2003,Liu2015, Hansen2016}.
 
We propose and analyze minimum MSE weights that minimize an estimator of the leading terms of the MSE.
%
% %
 As we show, these minimum MSE weights  minimize an appropriately defined notion of the population MSE contaminated by a noise component that we characterize explicitly.
 We obtain the limiting distribution of the minimum MSE unit averaging   estimator in a local heterogeneity setting, similarly to \cite{Liu2015}.
Finally, we  argue that the minimum MSE weights also have desirable properties  a large-$T$ setting,  in which the amount of information in each time series grows without bound.

In a simulation study, we  assess the finite sample properties of the our methodology. 
We compare our minimum MSE unit averaging estimator against the unit-specific and mean group    estimators, along with AIC and BIC weighted averaging estimators \citep{Buckland1997}. 
The proposed methodology performs favorably relative to these benchmarks.
Gains in the MSE are  possible without prior information about unit similarity. 
However, leveraging prior information may lead to stronger improvements. % 

An application to forecasting regional unemployment in Germany showcases the methodology \citep{Schanne2010}.
Unemployment forecasting is a
natural application of the unit averaging methodology since the literature documents both
evidence of regional heterogeneity and the benefits of pooling data \citep{Schanne2010, Graaff2018}.
We find that unit averaging using minimum MSE weights improves prediction accuracy. The gains in the MSE are larger for shorter panels.

This paper is related to two strands of the literature. 
First, it contributes to the literature on estimation of  unit-specific parameters. % using compromise estimators.
Important contributions in this area include \cite{Zhang2014}, \cite{Wang2019}, \cite{Issler2009} and \cite{Liu2020}.
%
%The main difference with respect
In contrast to these contributions, we focus on a setting where the time dimension is moderate  (as opposed to either large or small). %
Moreover, the existing literature largely focuses on linear models under strict exogeneity \citep{Baltagi2008a, Wang2019} whereas our framework allows for nonlinear and dynamic models.
Second, 
 our paper is related to the literature on frequentist model averaging. 
Important contributions in this area include \cite{Hjort2003}, \citet{Hansen2007}, \citet{Hansen2008}, \citet{Wan2010}, \citet{Hansen2012}, \citet{Liu2015}, and \citet{Gao2016}, among others.
\cite{Gao2016, Yin2019} deal with model averaging estimators specifically tailored for panel models.
The main difference with respect to these  contributions is that we focus on averaging different units with the same model whereas these papers average different models for a given fixed unit or the pooled data.

The rest of the paper is structured as follows. 
Section \ref{section:methodology} introduces the unit averaging methodology. 
Section \ref{section:theory} studies the theoretical properties of the procedure.
Section \ref{section:simulation} contains the simulation study. 
Section \ref{section:empirical} contains the empirical application.
Concluding remarks follow in section \ref{section:end}.
All proofs are collected in the proof appendix. Further theoretical, numerical, and empirical results are collected in an online appendix. % available from the authors' websites.

\section{Methodology}\label{section:methodology}

We introduce our unit averaging methodology within the framework of a fairly general class of panel data models with heterogeneous parameters. 
Let $\{ {\bz_{i\,t}} \}$ with $i=1,\ldots, N$ and $t=1,\ldots, T$ denote a panel where 
$\bz_{i\,t}$ denotes a random vector of observations taking values in $\mathcal Z \subset \mathbb R^d$.
%estimated by an appropriate M-estimator.
For each unit in the panel, we define the unit-specific parameter $\btheta_{i} \in \Theta \subset \mathbb R^p$ as 
\begin{equation*}
	\btheta_{i} = \argmax_{\btheta\in \Theta} \mathbb E \left(\dfrac{1}{T} \sum_{t=1}^T m(\btheta, \bz_{i\,t}) \right)~,
\end{equation*}
where $m : \Theta \times \mathcal Z \rightarrow \mathbb R $ is a smooth criterion function.

Our interest lies in estimating the unit-specific ``focus'' parameter $\mu(\btheta_{i})$ for a {fixed} unit $i$ with minimal MSE, where $\mu : \Theta \rightarrow \mathbb R$ is a smooth function (similarly to the setup in \cite{Hjort2003}). 
For example, $\mu(\btheta_i)$ may denote a component of $\btheta_i$, the conditional mean of a response variable given the covariates, or the long-run effect of a covariate.
To simplify exposition and without loss of generality, we focus on the problem of estimating the focus parameter $\mu(\btheta_{1})$ for unit 1.
In this paper we consider the case in which the focus function $\mu$ is scalar-valued. %takes values in $\mathbb R$. 
It is straightforward to generalize the framework to a focus function taking values in $\mathbb R^q$ for some $q >1$.

To estimate $\mu(\btheta_1)$ we consider the class of   unit averaging estimators given by
\begin{equation}\label{eqn:avgest}
	\hat{\mu}(\bw) = \sum_{i=1}^N w_{i} \mu(\hat{\btheta}_i) ~,
\end{equation}
where 
$\bw  = ( w_{i} ) $ is a $N$-vector such that $ w_{i} \geq 0$ for all $i$ and $\sum_{i=1}^N w_{i}=1$,
and $\hat{\btheta}_i$ is the unit-specific estimator of unit $i=1, \dots, N$,  given by
\begin{equation}\label{equation:unitSpecificEstimator}
	\hat{\btheta}_i = \argmax_{\btheta\in \Theta} \dfrac{1}{T} \sum_{t=1}^T m(\btheta, \bz_{i\,t}) ~.
\end{equation}
The class of estimators in \eqref{eqn:avgest} is fairly broad and contains a number of important special cases.
It includes the individual estimator of unit 1 \( \hat{\mu}_1 = \mu(\hat{\btheta}_1) \) and the mean group estimator
%\begin{equation}
$\hat \mu_{MG} = N^{-1} \sum_{i=1}^N \mu(\hat{\btheta}_i)$.
%\footnote{An alternative mean group estimation approach consists in setting $\hat{\btheta}_{MG} = N^{-1}\sum_{i=1}^N \hat{\btheta}_i$ and defining $\hat{\mu}_{MG} = \mu(\hat{\btheta}^{MG})$. As follows from lemma \ref{lemma:individual} and theorem \ref{theorem:fixed}, the two approaches have identical asymptotic properties in our setup. The two definitions are also numerically identical if $\mu$ is affine.} 
%Other important special cases 
It also includes estimators based on smooth AIC/BIC weights \citep{Buckland1997}  
as well as  Stein-type estimators \citep{Maddala1997}.
 
The class of estimators in \eqref{eqn:avgest} may be motivated by the following representation for the individual parameters $\btheta_i$.
Assume that $ \btheta_i$ can be written as $ \btheta_i = \btheta_0 + \bEta_i $,  
where $\btheta_0$ is a common mean component and $\bEta_i$ is a zero-mean random component.
All units in the panel carry information on $\btheta_0$, and so all units may be useful for estimating $\btheta_1=\btheta_0+ \bEta_1$. 
The vector of weights $\bw$ controls the balance between the bias and the variance of   estimator   \eqref{eqn:avgest}.
Assigning a large weight to unit 1 leads to low bias but may also lead to excessive variability.
Alternatively, assigning larger weights to units other than unit 1 induces bias but may substantially reduce variability.
This bias-variance trade-off is most relevant in a moderate-$T$ setting, defined  as the range of values of $T$ for which the variability of the individual estimators $\hat{\btheta}_i$ is of the same order of magnitude as $\bEta_i$ (see remark \ref{remark:moderate_t} below for a heuristic criterion for detecting a moderate-$T$ setting).
	\label{page:methodology:moderate_T}

In this work we introduce two weighting schemes ---  the fixed-$N$ and the large-$N$ minimum-MSE unit averaging estimators.
 The key practical difference between the two is that the large-$N$ estimator  uses prior information about the similarity of cross-sectional units in terms of the focus parameter. In contrast, the fixed-$N$ estimator requires no prior information
(see the discussion following eq.\ \eqref{equation:largeNweights} explaining the names of the approaches)
These estimators seek to strike a balance between the bias and variance of the unit averaging estimator. 
For both, the weights are chosen by minimizing an estimator of the  local  approximation to the MSE (LA-MSE) of the unit averaging estimator. 
The LA-MSE contains the leading terms of the 
  the moderate-$T$ MSE of the unit averaging estimator and is justified in detail in the next section.

The fixed-$N$ approach provides an agnostic way to determine the weights. It imposes no structure on the weights. 
%Each unit may have a small or large weight.   
All of the weights  are determined only by the data.
%and no structure is imposed on the weights.
%
Formally, let $\bar{N}<\infty$ be the number of units. % to average.
 Let $\bw^{\bar{N}}=(w_i^{\bar{N}})$ be a $\bar{N}$-vector such that $w_i^{\bar{N}}\geq 0$ for all $i$ and $\sum_{i=1}^{\bar{N}} w_i^{\bar{N}}=1$.
The fixed-$N$ LA-MSE estimator associated with  $\bw^{\bar{N}}$ is given by
\begin{equation}\label{equation:fixedNLAMSEestimatorMethodology}
	\widehat{LA\mhyphen MSE}_{\bar{N}}(\bw^{\bar{N}}) = \sum_{i=1}^{\bar{N}} \sum_{j=1}^{\bar{N}} w_{i}^{\bar{N}}[\hat{\bPsi}_{\bar{N}}]_{i\,j} w_{j}^{\bar{N}} ~,
\end{equation}
where $\hat{\bPsi}_{\bar{N}} \in \mathbb R^{\bar{N} \times \bar{N}}$   with entries 
	$[\hat{\bPsi}_{\bar{N}}]_{i\,i} = \nabla \mu(\hat{\btheta}_1)' (  T ( \hat{\btheta}_i -\hat{\btheta}_1 )( \hat{\btheta}_i -\hat{\btheta}_1 )' + \hat{\bV}_i ) \nabla \mu(\hat{\btheta}_1)$ and $[\hat{\bPsi}_{\bar{N}}]_{i\,j} = \nabla \mu(\hat{\btheta}_1)' T ( \hat{\btheta}_i -\hat{\btheta}_1 )( \hat{\btheta}_j -\hat{\btheta}_1 )' \nabla \mu(\hat{\btheta}_1)$ when $i\neq j$. Here $\hat \bV_i$ is an estimator of the asymptotic variance of $\hat \btheta_i$, and $\nabla \mu(\cdot)$ is the gradient of $\mu$. 
The terms \( \nabla \mu(\hat{\btheta}_1)'T (\hat{\btheta}_i -\hat{\btheta}_1 )( \hat{\btheta}_i -\hat{\btheta}_1 )'  \nabla \mu(\hat{\btheta}_1)\) and \( \nabla \mu(\hat{\btheta}_1)'\hat{\bV}_i \nabla \mu(\hat{\btheta}_1)\) are estimators of, respectively, the squared bias and variance of $\mu (\hat \btheta_i )$ as estimators of $\mu (\btheta_1 )$. 
The fixed-$N$ minimum MSE weights are defined as
\begin{equation} \label{equation:fixedNweights}
	\hat{\bw}^{\bar{N}} = \argmin_{\bw\in\Delta^{\bar{N}}}   \widehat{LA\mhyphen MSE}_{\bar{N}}(\bw) ~,
\end{equation}
where $\Delta^{\bar{N}} = \curl{\bw\in \R^{\bar{N}}: \sum_{i=1}^{\bar{N}} w_i=1, w_i\geq 0, i=1,\dots, \bar{N} }$.

Alternatively, the researcher may have prior information on which units are potentially more important for estimating $\mu(\btheta_1)$ (in terms of having a similar $\mu(\btheta_i)$ or being similar in observables, see below). 
Accordingly,  units are partitioned into two sets  -- a set of $\bar{N} \geq 0$ \emph{unrestricted} potentially important units, % that are allowed to have an arbitrary weight, 
and a set of the remaining $N-\bar{N}$ \emph{restricted} units.  
The number of restricted units $N-\bar{N}$ is assumed to be at least somewhat large  for the partition of units to have a meaningful impact on the resulting estimator.

The large-$N$ estimator leverages prior information expressed through these two sets.
Intuitively, the weights of the unrestricted units are freely determined by the data. For the restricted units, the optimization problem determines only the total mass assigned to the whole restricted set. This mass is then equally split over its members, though we note that other weighting schemes are allowed for the restricted units; see theorem \ref{theorem:randomWeights:large} below. %  Any  sequence of weights such that the weights of the restricted units are sufficiently small as $N\to\infty$ leads to the same LA-MSE; for simplicity, we opt for equal weights here..
Formally, let $\bw^{N,\infty}=(w_{i}^{N,\infty})$ be an $N$-vector and assume that the weights of the unrestricted units are placed in the first $\bar{N}$ positions.
The vector of weights $\bw^{N,\infty}$ is such that $w_i^{N,\infty}\geq 0$ for all $i$, $\sum_{i=1}^N w_i^{N,\infty}= 1$, and the weights of the restricted units  ($i > \bar N$) are equal and given by \( w_i^{N,\infty} = ( 1 - \sum_{j=1}^{\bar N} w^{N,\infty}_j )/( N - \bar N) \).
Let $\bw^{\bar{N},\infty}=(w_i^{\bar{N},\infty})$ be a $\bar{N}$-vector such that $ w_i^{N,\infty}=w_i^{\bar{N},\infty} $ for $i=1,\ldots,\bar N$. These are the weights of the unrestricted units.
The large-$N$ LA-MSE estimator associated with $\bw^{N,\infty}$ is controlled by $\bw^{\bar{N},\infty}$ and  given by
\begin{align}
	& \widehat{LA\mhyphen MSE}_{\infty}(\bw^{\bar{N}, \infty}) \label{equation:largeNLAMSEestimatorMethodology} \\ 
	& = \sum_{i=1}^{{\bar{N}}} \sum_{j=1}^{{\bar{N}}} w_{i}^{\bar{N}, \infty} [\hat{\bPsi}_{{\bar{N}}}]_{i\,j} w_{j}^{\bar{N}, \infty} + \left[   \left(1-\sum_{i=1}^{\bar{N}}w_{i}^{\bar{N}, \infty} \right) \left( 	\sqrt{T}\nabla \mu(\hat{\btheta}_1)'\left(\hat{\btheta}_1- \frac{1}{N} \sum_{i=1}^N \hat{\btheta}_i  \right)\right)
	\right. 
	\\ & \quad
	\left. - 2\sum_{i=1}^{\bar{N}}w_{i}^{\bar{N}, \infty}\nabla \mu(\hat{\btheta}_1) \sqrt{T}\left(\hat{\btheta}_i-\hat{\btheta}_1\right)  \right]
	\left( 1-\sum_{i=1}^{\bar{N}}w_{i}^{\bar{N}, \infty} \right)\hspace{-4pt}
	\left( 	\sqrt{T}\nabla \mu(\hat{\btheta}_1)'\left(\hat{\btheta}_1- \frac{1}{N} \sum_{i=1}^N \hat{\btheta}_i  \right) \hspace{-4pt}
	\right).
\end{align}
\vspace{-25pt} % compensates for the extra line induced by the long expression in the above align environment.

The above approximation to the MSE assumes that the number $N-\bar{N}$ of restricted units is large.
In this case the restricted units have an impact on the bias of the estimator, but only a negligible contribution to its variance (asymptotically as $N\to\infty$).

The large-$N$ minimum MSE weights $\hat{\bw}^{N, \infty} = (\hat{w}^{N, \infty}_i)$ are given by
\begin{equation}\label{equation:largeNweights}
	\hat{w}^{N, \infty}_i = 
	\begin{cases}
		\hat{w}^{\bar{N}, \infty}_i & i \leq \bar N \\
		{\left(1-\sum_{j=1}^{\bar N} \hat{w}^{\bar{N}, \infty}_j  \right)(N - \bar N )^{-1}} & i > \bar N 
	\end{cases}
\end{equation}
where
\[ 
\hat{\bw}^{\bar{N}, \infty} = \argmin_{\bw\in\tilde{\Delta}^{\bar{N}}} \widehat{LA\mhyphen MSE}_{\infty}(\bw)
\] 
with  $\tilde \Delta^{\bar{N}}= \curl{\bw\in \R^{\bar{N}}: w_i\geq 0, \sum_{i=1}^{N} w_i\leq 1}$. 
%It is important to emphasize that
Note that the optimization problem defining $\hat{\bw}^{\bar{N}, \infty}$ is $\bar N$-dimensional and can be solved by standard quadratic programming methods.

Three comments are in order before we proceed. 
First, the names of the approaches come from the frameworks used to study their properties. 
The fixed-$N$ estimator is studied in a setting where the number of units $\bar{N}$ is held finite and fixed, regardless of whether $\bar{N}$ is small or large in practical terms.	
In contrast, the large-$N$ estimator is studied in a framework where the size of the restricted set grows without bound.

Second, using the large-$N$ estimator requires choosing the set of unrestricted units. 
In principle, this set may be chosen arbitrarily,  with   weights \eqref{equation:largeNweights} adapting to the choice.
However, larger reductions in bias are possible if the unrestricted set contains units with $\mu(\btheta_i)$  similar to $\mu(\btheta_1)$.
%
%This similarity m
For example, when dealing with country-level, this similarity may be established by using previous country-level studies focusing on the parameter of interest or related parameters.
We explore several ways of specifying this set  in sections \ref{section:simulation}-\ref{section:empirical}.

Last, the fixed- and large-$N$ LA-MSE estimators have the appealing property of being applicable  both   when the amount of time series information in the panel is moderate or large. 
When the amount of time series information is moderate, the LA-MSE approximates the infeasible population problem of minimizing the MSE, along with uncertainty about individual parameters (see the discussion following theorem \ref{theorem:randomWeights:fixed}). 
When the amount of time series information is large, the bias term in the MSE dominates. Then the unit averaging estimator based on the minimum MSE weights converges to the individual estimator $\mu(\hat{\btheta}_1)$, if the coefficients $\btheta_i$ are continuously distributed (see remark \ref{remark:large_t} in the next section).

	\begin{remark}[Practical criterion for a moderate-$T$ setting]
		\label{remark:moderate_t}
		
		In practice, the small-, moderate- and large-$T$ settings may be differentiated  using the following heuristic criterion. If the realized $t$-statistic(s) of the individual-specific estimates is between 1 and 5, the setting is a moderate-$T$ one. 
	Larger $t$-statistics signal a large-$T$ setting. %
	If the $t$-statistics are smaller than 1 or the individual estimators cannot be computed, the setting is a small-$T$ one.
	%
%	Unit averaging may be applied in all of these, provided the individual estimators can be computed, but will likely have the greatest impact in moderate-$T$ settings.
	\end{remark}
	
	\begin{remark}[Non-MSE criteria]
		\label{remark:non_mse}
 
The quality of the estimator may also be measured using  notions of risk different from the MSE.
In the Online Appendix, we extend the analysis of the paper to risks of the form $R_l(\mu(\btheta_1), \hat{\mu}(\bw_N)) = \E\left[l(\mu(\btheta_1), \hat{\mu}(\bw_N)) \right]$, where $l$ is some loss function.
If $l$ is a strictly convex smooth function, we show that $R_l$ behaves essentially like the MSE.  Weights \eqref{equation:fixedNweights} and \eqref{equation:largeNweights} are feasible minimum risk weights for $R_l$.
In contrast, if $l$ is the absolute loss, the local approximation to  $R_l$ (the mean absolute deviation in this case)  is different from  LA-MSE. 
However, optimal weights may be obtained similarly. 
	\end{remark}

\section{Theory}\label{section:theory}

\subsection{Assumptions}\label{section:assumptions}

We focus on a moderate-$T$ setting ---  in which the variance of the individual estimators is of the same order of magnitude as the individual components $\bEta_i$. In this case, the amount of information in each individual time series is limited.  To emulate this  and the trade-off between unit-specific and panel-wide information, we make a \emph{local heterogeneity} assumption.
  
\begin{asm}[Local Heterogeneity]\label{asm:local}
	The sequence of unit-specific parameters $\{ \btheta_i \}$ is such that 
	\begin{equation}
		\btheta_{i} = \btheta_0 + \frac{\bEta_i}{\sqrt{T}} ~,
	\end{equation}
	where $\{\bEta_i\}$ is 	a sequence of independent random vectors  that satisfy $\E_{\bEta}[\bEta_i]=\bm 0$ and $\sup_i \E_{\bEta}[\norm{\bEta_{i} }^{12}] < \infty$ (here and below $\norm{\cdot} $ means the 2-norm; $\E_{\bEta}$ means that the expectation according to the joint distribution of $\curl{\bEta_i}$).
	\noindent All analysis is done conditional on $\sigma(\bEta_1, \bEta_2, \dots)$ and all  statements below are conditional on $\sigma(\bEta_1, \bEta_2, \dots)$ unless specifically stated otherwise.
\end{asm}
 
Scaling $\bEta_i$  by $\sqrt{T}$ is a mathematical device that allows us to approximate a limited-information moderate-$T$ setting using asymptotic techniques with $T\to\infty$. 
Intuitively, as $T$ becomes larger, the signal strength becomes proportionally weaker, so that the amount of information in each time series is unchanged and bounded even if $T\to\infty$.
At the same time, this assumption will permit us to apply asymptotic techniques to characterize the leading terms of the bias and the variance of the unit averaging estimator.
The local heterogeneity  assumption is
analogous to the local misspecification device used in the frequentist model averaging literature \citep{Hjort2003,Hjort2003rejoinder, Hansen2016, Yin2019}. %
It is also similar to the techniques of  weak instrument asymptotics  \citep{Staiger1997} and   local alternatives used in test evaluation  \citep{Lehmann2022}. 
Like in those settings, this assumption should not be interpreted literally as meaning that the true parameters change depending on time series length (see \cite{Raftery2003} and \cite{Hjort2003rejoinder} for some important criticism of such an interpretation of locality).
% 

%

% ,
 Since the focus %of the analysis
 lies on recovering the realized individual parameter  $\mu(\btheta_1)$,
 all probability statements are implicitly conditional on $\sigma(\bEta_1, \bEta_2, \dots)$.
 Such conditioning is typical when individual parameters are of interest \citep{Vaida2005,Donohue2011, Zhang2014}. 
Importantly, all the results % establish 
are shown to hold with $\bEta$-probability 1 (for almost any realization of $\{ \bEta_i \}$).

In this paper we assume that the cross-sectional units are independent.

\begin{asm}[Independence]\label{asm:mom_and_dep}	 
	For each $i, j_1, \dots, j_k, k$ such that $i\neq j_1, \dots, j_k$ $\{ \{ \bz_{i\,t} \}_{t=0}^\infty , \bEta_i \}$ and $\curl*{\{ \{ \bz_{j_1\,t} \}_{t=0}^\infty , \bEta_{j_1} \}, \dots, \{ \{ \bz_{j_k\,t} \}_{t=0}^\infty , \bEta_{j_k} \} }$ are independent. 
\end{asm}
Note that together \ref{asm:local} and \ref{asm:mom_and_dep} permit cross-sectional heterogeneity.  In particular,  $\bEta_i$ may be heterogeneously distributed, provided the coefficients $\btheta_i$  share a common mean $\btheta_0$. 

The unit-specific estimators $\hat{\btheta}_i$ are assumed to satisfy a number of regularity conditions.

\begin{asm}[Individual Objective Function]\label{asm:mestimation}~
	\begin{enumerate}[label=(\roman*), noitemsep,topsep=0pt,parsep=0pt,partopsep=0pt]
		\item The parameter space $\Theta$ is convex. 
		\item 
		The function $m(\btheta, \bz): \Theta\times \mathcal{Z}\to \R$ is twice continuously differentiable in $\btheta$ for each value of $\bz$. $m(\btheta, \bz)$ is measurable as a function of $\bz$ for every value of $\btheta$.  
		
		\item 
		There exists a positive finite constant $T_0$ (which does not depend on $i$) such that for all $i$ and $T>T_0$ it holds that 
		the unit-specific estimator satisfies $\hat{\btheta}_i \in \mathrm{int}( \Theta )$ a.s..
		
		\item The gradient of the unit-specific objective function satisfies \[ \frac{1}{\sqrt{T}}\sum_{t=1}^T \nabla m(\btheta_i, \bz_{i\,t})\Rightarrow N(\bm 0, \bSigma_i) ~,\]
		where $ \bSigma_i  =\lim_{T\to\infty} T^{-1}\sum_{t=1}^T \E\left[ \left( \sum_{t=1}^T \nabla m(\btheta_i,\bz_{i\,t}) \right)\left( \sum_{t=1}^T \nabla m(\btheta_i,\bz_{i\,t}) \right)'\right]$.
		
		\item There exist a positive finite constant $C_{\nabla m}$ (which does not depend on $i$ or $T$) such that, for all $i$ and all $T>T_0$ and for some $\delta>0$, it holds that
		\[	
		\E\norm{  \frac{1}{\sqrt{T}}\sum_{t=1}^T \nabla  m({\btheta}_i, \bz_{i\,t})  }^{2(1+\delta)} \leq  C_{\nabla m} ~.
		\]

		\item The Hessian of the unit-specific objective function satisfies
		\[
		\sup_{ \btheta \in [\btheta_i, \hat{\btheta}_i]} \norm{ \dfrac{1}{T}\sum_{t=1}^T \nabla^2  m({\btheta}, \bz_{i\,t})  - \bH_i }  \xrightarrow{p} 0 ~,
		\] 
		where $ \bH_i = \lim_{T\to\infty}\E(T^{-1} \sum_{t=1}^T \nabla^2 m( \btheta_i, \bz_{i\,t}))$. 
		
		\item  Let $ D_{i\, T} = \sup_{ \btheta \in [\btheta_i, \hat{\btheta}_i]}  \norm{ \left(T^{-1}\sum_{t=1}^T \nabla^2 m( {\btheta}, \bz_{i\, t}) \right)\bH^{-1}_i-\bI   }_{\infty}$. 
		$D_{i\, T}<1$ a.s.~for all $i$ and all $T>T_0$.
		There exists a positive constant $C_{\nabla^2 m}$ such that, for all $i$ and all $T>T_0$ and for $\delta$ as in (v), it holds that 
		\begin{equation}
			\E\left[\left( \dfrac{D_{i\, T}}{1-D_{i\, T}} \right)^{\frac{2(2+\delta)(1+\delta)}{\delta} }\right]  \leq C_{\nabla^2 m}.
		\end{equation}
		
		\item The matrices $\bSigma_i$ and $\bH_i$ satisfy
		$\underline{\lambda}_{\bSigma} \leq \lambda_{\min}( \bSigma_i ) \leq \lambda_{\max}(\bSigma_i) \leq \overline \lambda_{\bSigma} $ and
		$\underline \lambda_{\bH} \leq \lambda_{\min}( \bH_i ) \leq \lambda_{\max}(\bH_i) \leq \overline \lambda_{\bH} $ where 
		$\underline \lambda_{\bSigma}$, $\overline \lambda_{\bSigma}$, $\underline \lambda_{\bH}$ and $\overline \lambda_{\bH}$ are positive constants that do not depend on $i$.		
		
		\item Let $\bV_i = \bH_i^{-1} \bSigma_i \bH_i^{-1}$. 
		Then, there is a sequence of estimators $\{ \hat{\bV}_i \}$ such that, for all $i$, $\hat{\bV}_i$ is consistent for $\bV_i$, and, for all $T>T_0$, $\lambda_{\min}(\hat{\bV}_i) > 0$ holds almost surely.
	\end{enumerate}
\end{asm}

  \ref{asm:mestimation} requires the unit-specific estimators to be consistent, asymptotically normal and to satisfy a number of regularity conditions.
%This set of assumptions is fairly straightforward to verify for   linear and nonlinear regression models with a convex and smooth objective function and continuous covariates.
%
%
This assumption allows for a fair amount of dependence, heterogeneity, and  non-stationarity in the unit-specific time series;
 we refer to ch.\ 11 of \cite{Potscher1997} for a catalog of	 low-level conditions.
Assumption \ref{asm:mestimation}$(iii)$ states that the unit-specific estimator lies in the interior of the parameter space almost surely. 
%. 
If the problem is linear or  defined by a convex smooth objective function and continuous covariates, the parameter space can be taken to be $\R^p$, and the condition holds automatically. 
Assumption \ref{asm:mestimation}$(iv)$ is standard in the M-estimation literature, it requires the gradient of the objective function evaluated at $\btheta_i$ to satisfy a CLT.
Assumption \ref{asm:mestimation}$(v)$ is a moment condition on the gradient of the objective function.
In an i.i.d.~setting such an assumption translates into a moment condition on the individual gradients.
More generally, this would be implied by appropriate moment and dependence assumption on the individual gradients.
Assumption \ref{asm:mestimation}$(vi)$ is also standard in the M-estimation literature; it requires the Hessian to satisfy a uniform law of large numbers.
Assumption \ref{asm:mestimation}$(vii)$ effectively  requires that the sample Hessian is nonsingular in a small enough neighborhood of $\btheta_i$. In a scalar problem,   $(vii)$ restricts the possible range of the second derivative as $\btheta$ ranges over a shrinking interval around $\btheta_i$. In addition, $(vii)$ places an assumption on the moments of   deviation from the population limit Hessian.
In case of linear regression, the sample and population Hessians do  not depend on the slope parameters and $(vii)$ is an assumption on moments of covariates.
Assumption \ref{asm:mestimation}$(viii)$ % restricts the spectrum of the matrices $\bSigma_i$ and $\bH_i$. In particular, it 
implies a uniform restriction on the asymptotic variance $\bV_i$ of the individual estimators. 
Assumption \ref{asm:mestimation}$(ix)$ states that there exists a sequence of nonsingular estimators $\{ \hat \bV_i \}$ for the asymptotic variance-covariance matrix of the individual estimator.
We remark that Assumptions \ref{asm:mestimation}$(iii)$ and $(vii)$ state that the sequence of unit-specific estimation problems satisfies approprite uniformity conditions. 
Such conditions allow us to distill the key arguments relevant to our averaging theory and, in a sense, should be intrepreted as a simplifying approximation.
In general, $(iii)$ and $(vii)$  would hold with probability approaching one for each unit.   
In this case all our results would still hold, though under appropriate rate conditions on $(N, T)$ and trimming to ensure certain well-behavedness of individual estimators. 
We further note that assumptions $(iii)$ and $(vii)$ might hold in practice in certain special cases  regardless (such as linear or nonlinear models with a convex and smooth objective function and continuous covariates).

\begin{asm}[Unit-specific Bias]\label{asm:bias}
	There exists a constant $C_{Bias}$, which does not depend on $i$, such that $\norm{  \E[\hat{\btheta}_i-\btheta_i] }_1 \leq C_{Bias}/T$ for all $T>T_0$. 
\end{asm}

Assumption \ref{asm:bias}  requires  that the bias of individual estimators for  {their} own parameters is bounded uniformly in $i$. 
% %
The order of the bias is  consistent with the results obtained by \cite{Rilstone1996} and \cite{Bao2007}. The  higher order terms can be subsumed into the $T^{-1}$ term for a sufficiently large $C_{Bias}$.
%\footnote{Explicitly including terms of order $-3/2$ and higher   does not change the analysis, as long as all the constants do not depend on $i$ }
Assumption \ref{asm:bias}  is satisfied for linear models under assumption \ref{asm:mestimation}. For nonlinear models it is sufficient that for all $s$ and $i$ it holds that $ \E(\norm{\nabla^s m(\btheta_i, z_{i\,t})}^2)\leq C_s<\infty$ \citep{Bao2007}.   %  

\begin{asm}[Focus Parameter]\label{asm:focus}
	The focus function $\mu:\Theta\to \R$ is twice-differentiable. 
	There exists a constant $C_{\nabla \mu}$ such that $\norm{\nabla \mu(\btheta)}<C_{\nabla \mu}$  for all $\btheta \in \Theta$.
	There exists a constant $C_{\nabla^2 \mu}$ such that for all $\btheta \in \Theta$ the largest and smallest eigenvalues of the Hessian $\nabla^2 \mu(\btheta)$ are bounded in absolute value by $C_{\nabla^2 \mu}$. 
	Let $\bd_0 = \nabla \mu(\btheta_0)$ be the gradient of $\mu$ at $\btheta_0$. Then $\bd_0 \neq 0$.
\end{asm}

Assumption \ref{asm:focus} lays out mild smoothness assumptions on $\mu$.
For simplicity we assume that $\mu$ is a scalar focus parameter. However, all our results can be extended to the case in which $\mu$ is a vector focus parameter.

\subsection{Properties of the Minimum MSE Unit Averaging Estimator}

We begin with a lemma that establishes the properties of the unit-specific estimators ${\mu}(\hat{\btheta}_i)$ as estimators for the target parameter $\mu(\btheta_1)$ of unit 1 in  limited-information local  setting.
% of the unit-specific estimators in the local asymptotic framework of assumption \ref{asm:local}.  

\begin{lemma}\label{lemma:individual} 
	Assume that assumptions \ref{asm:local}--\ref{asm:focus} are satisfied. Let the unit-specific estimators  $\hat{\btheta}_i$ for $i=1,2,\ldots$ be defined as in eq. \eqref{equation:unitSpecificEstimator}. 
	Then 
	\begin{align}
		\sqrt{T}\left(\hat{\btheta}_i-\btheta_1 \right)	 &  \Rightarrow N(\bEta_i-\bEta_1, \bV_i) \eqqcolon \bZ_i ~, \\
		\sqrt{T}\left(\mu(\hat{\btheta}_i) -\mu(\btheta_1) \right) & \Rightarrow N(\bd_0'\left(\bEta_i-\bEta_1\right), \bd_0'\bV_i\bd_0) \eqqcolon \Lambda_i 
	\end{align}
	holds as $T\to\infty$ for $i=1,2,\ldots$. Convergence is joint (that is, with respect to the product topology), and all $\bZ_i$ and $\Lambda_i$ are independent across $i$.
	
\end{lemma}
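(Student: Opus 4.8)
The plan is to treat each $\hat{\btheta}_i$ as a standard M-estimator for its own (drifting) parameter $\btheta_i$, obtain the usual asymptotic normality, re-center on $\btheta_1$ via assumption \ref{asm:local}, and transfer to $\mu$ by a mean-value expansion. Throughout I work conditionally on $\sigma(\bEta_1,\bEta_2,\dots)$, so that $\btheta_i = \btheta_0 + \bEta_i/\sqrt{T}$ is a deterministic (though $T$-dependent) sequence converging to $\btheta_0$, while $\bEta_i - \bEta_1$ is a fixed constant of finite norm (by $\E_{\bEta}\norm{\bEta_i}^{12}<\infty$); assumption \ref{asm:bias} is not needed for a purely distributional claim. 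Since convergence in the product topology is equivalent to convergence of all finite-dimensional marginals, it suffices to fix a finite index set containing $1$ and establish the claimed joint limit for those indices.

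For a given $i$ and $T>T_0$, assumption \ref{asm:mestimation}$(ii)$--$(iii)$ gives the first-order condition $T^{-1}\sum_{t=1}^T \nabla m(\hat{\btheta}_i,\bz_{i\,t})=\bm 0$; expanding the gradient around $\btheta_i$ with an integral-form remainder yields
\[
    \bm 0 = \frac{1}{\sqrt{T}}\sum_{t=1}^T \nabla m(\btheta_i,\bz_{i\,t}) + \bar{\bH}_{i\,T}\,\sqrt{T}(\hat{\btheta}_i - \btheta_i),
\]
where $\bar{\bH}_{i\,T} = \int_0^1 T^{-1}\sum_t \nabla^2 m(\btheta_i + s(\hat{\btheta}_i - \btheta_i),\bz_{i\,t})\,ds$. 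By assumption \ref{asm:mestimation}$(vi)$ (the sup there is over exactly this segment) $\bar{\bH}_{i\,T}\xrightarrow{p}\bH_i$, and $\bH_i$ is nonsingular by assumption \ref{asm:mestimation}$(viii)$; assumptions \ref{asm:mestimation}$(iii)$ and $(vii)$ further ensure $\bar{\bH}_{i\,T}$ is itself invertible a.s.\ for $T>T_0$ (a Neumann-series argument using $D_{i\,T}<1$), so the display can be solved at finite $T$. Hence $\sqrt{T}(\hat{\btheta}_i - \btheta_i) = -\bar{\bH}_{i\,T}^{-1}\,T^{-1/2}\sum_t \nabla m(\btheta_i,\bz_{i\,t})$, and combining the CLT of assumption \ref{asm:mestimation}$(iv)$ with Slutsky's theorem gives $\sqrt{T}(\hat{\btheta}_i-\btheta_i)\Rightarrow N(\bm 0,\bH_i^{-1}\bSigma_i\bH_i^{-1}) = N(\bm 0,\bV_i)$; in particular $\hat{\btheta}_i\xrightarrow{p}\btheta_0$.

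The decomposition $\sqrt{T}(\hat{\btheta}_i - \btheta_1) = \sqrt{T}(\hat{\btheta}_i - \btheta_i) + \sqrt{T}(\btheta_i - \btheta_1) = \sqrt{T}(\hat{\btheta}_i - \btheta_i) + (\bEta_i - \bEta_1)$, valid by assumption \ref{asm:local}, then yields the first display, since conditionally $\bEta_i-\bEta_1$ is a constant shift. For the second display I would mean-value expand $\mu$, which has a continuous gradient by assumption \ref{asm:focus}: $\sqrt{T}(\mu(\hat{\btheta}_i)-\mu(\btheta_1)) = \nabla\mu(\tilde{\btheta}_i)'\sqrt{T}(\hat{\btheta}_i-\btheta_1)$ with $\tilde{\btheta}_i$ on the segment $[\btheta_1,\hat{\btheta}_i]$; since $\tilde{\btheta}_i\xrightarrow{p}\btheta_0$ we have $\nabla\mu(\tilde{\btheta}_i)\xrightarrow{p}\bd_0$, and Slutsky delivers $\Rightarrow \bd_0'\bZ_i\sim N(\bd_0'(\bEta_i-\bEta_1),\bd_0'\bV_i\bd_0) = \Lambda_i$ (alternatively the remainder is controlled directly via the Hessian bound $C_{\nabla^2\mu}$). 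Joint convergence over the finite index set, together with independence of the limits $\bZ_i$ (resp.\ $\Lambda_i$) across $i$, follows from assumption \ref{asm:mom_and_dep}: each $\hat{\btheta}_i$ is a measurable functional of $\{\bz_{i\,t}\}_t$, these are independent across $i$ given $\sigma(\bEta_1,\bEta_2,\dots)$, so the characteristic function of the stacked vector factorizes in the limit; passing to the countable product topology finishes. Since every invoked part of assumption \ref{asm:mestimation} holds (in conditional form) with constants free of $i$, the conclusions hold for $\bEta$-almost every realization.

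I expect the main obstacle to be bookkeeping rather than analysis: the mean-value expansion, the Neumann-series invertibility of the sample Hessian, and the delta method are all routine. What needs care is handling the conditioning on $\sigma(\bEta_1,\bEta_2,\dots)$ simultaneously with the drifting true value $\btheta_i = \btheta_0+\bEta_i/\sqrt{T}$ — one must read the conditions of assumption \ref{asm:mestimation} in their conditional form, check that the conditional CLT of \ref{asm:mestimation}$(iv)$ applies along the drifting sequence, and verify the conclusions hold for $\bEta$-almost every realization rather than merely on average — and the (mild) upgrade from finite-dimensional joint convergence to product-topology convergence with mutually independent limits, which rests on assumption \ref{asm:mom_and_dep}.
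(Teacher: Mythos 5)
Your proposal is correct and follows essentially the same route as the paper: first-order condition, expansion of the gradient around the drifting $\btheta_i$, invertibility of the sample Hessian via $D_{i\,T}<1$, the CLT of assumption \ref{asm:mestimation}$(iv)$ plus Slutsky, the shift $\sqrt{T}(\btheta_i-\btheta_1)=\bEta_i-\bEta_1$, a mean-value/delta-method step for $\mu$ using $\nabla\mu(\btheta_1)\to\bd_0$, and independence across $i$ from assumption \ref{asm:mom_and_dep}. The only (cosmetic) divergence is that you use an integral-form remainder, which sidesteps the measurable-selection argument (Kuratowski--Ryll--Nardzewski) the paper needs for its coordinate-wise mean value theorem; since assumption \ref{asm:mestimation}$(vii)$ bounds the sup over the whole segment, your $\bar{\bH}_{i\,T}$ is covered by the same $D_{i\,T}$ bound and the argument goes through.
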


Lemma \ref{lemma:individual} approximates the exact moderate-$T$ bias and variance of $\mu(\hat{\btheta}_i)$ with their leading terms, which appear as the mean and variance of $\Lambda_i$.
This approximation relies on the locality assumption \ref{asm:local}: as $T\to\infty$, the amount of information in each individual time series remains limited (see the discussion after \ref{asm:local}).
Consequently, both the asymptotic mean and variance are non-negligible and of the same order.

We now establish a local asymptotic approximation to the MSE (LA-MSE) of the unit averaging estimator \eqref{eqn:avgest}.
Let %\{ \bw_N \} =
 $\{ \bw_1, \bw_2 , \ldots \}$ be a (non-random) sequence where $\bw_k$ is a $k$-vector of weights. Suppose that $\bw_N$ converges to some $\bw \in \R^{\infty}$ in the sense defined below. 
 In what follows we treat $\bw_k = (w_{i\,k})$ as   an element both in  $\R^{k}$ and in $\R^{\infty}$ (with coordinates $i>k$ restricted to zero). 
%This duality will not cause any confusion.
%
 Consider the unit averaging estimator $\hat \mu( \bw_N )$ \eqref{eqn:avgest}.

\begin{thm}
	\label{theorem:risk} 
	
	Let assumptions \ref{asm:local}--\ref{asm:focus} be satisfied. Let $\{ \bw_1, \bw_2, \dots \}$ be such that 
	$(i)$ for each $N$, $\bw_N$ is measurable with respect to $\sigma(\bEta_1, \dots, \bEta_N)$, 
	$(ii)$ for each $N$,  $w_{i\,N}\geq 0$ for all $i$, $\sum_{i=1}^N w_{i\,N}=1$, $w_{j\,N}=0$ for $j>N$, 
	$(iii)$ $\sup_{i} \abs*{w_{i\,N}-w_i} = o(N^{-1/2})$ where $\bw=(w_i) \in \mathbb R^{\infty}$ is a vector such that $w_i\geq 0$ and $\sum_{i=1}^{\infty} w_i\leq1$. Let $T_0$ be as in assumption \ref{asm:mestimation}.
	
	Then $(i)$ $\sum_{i=1}^{\infty} w_i\bd_0'\bEta_i$ and $\sum_{i=1}^{\infty}w_i^2\bd_0'\bV_i\bd_0$ exist;
	$(ii)$ for any $N$ and $T>T_0$ the MSE of the averaging estimator is finite; and 
	$(iii)$  as $N, T\to\infty$ jointly it holds that
	\begin{equation} \label{equation:la_mse_theorem}
		T\times MSE\left(\hat{\mu}(\bw_N) \right) \to  \left( \sum_{i=1}^{\infty} w_i \bd_0'\bEta_i-\bd_0'\bEta_1 \right)^2 + \sum_{i=1}^{\infty}w_i^2 \bd_0'\bV_i\bd_0. \eqqcolon LA\text{-}MSE(\bw).
	\end{equation} 
	
\end{thm}

Theorem \ref{theorem:risk} provides a local  approximation to the MSE (LA-MSE) %  
of the averaging estimator.
The LA-MSE consists of the leading   terms of the   moderate-$T$ bias and variance of the  estimator. %, %
This result
 parallels local approximations for the finite-sample risk in the  model averaging  literature (e.g.~\cite{Hjort2003, Hansen2016}).

 The LA-MSE highlights the bias-variance trade-off associated with the choice of the weights. 
The two extremes of the trade-off correspond to the individual estimator $\mu(\hat{\btheta}_1)$ of the first unit   and the mean group estimator $\hat{\mu}_{MG} = N^{-1}\sum_{i=1}^N \mu(\hat{\btheta}_i)$. 
%The individual estimator of unit 1 
$\mu(\hat{\btheta}_1)$  is obtained by setting $w_{1\,N}=1$ for all $N$.
It is asymptotically unbiased, and its LA-MSE is equal to $\bd_0'\bV_i\bd_0$, the asymptotic variance of the individual estimator.
The mean group estimator is obtained by setting $w_{i\,N} = (N)^{-1}\I_{i\leq N}$ for $i=1,\ldots, N$ for all $N$. 
The variance term for 
$\hat{\mu}_{MG}$ is zero, and the LA-MSE is equal to $(\bd_0'\bEta_1)^2$. %\footnote{   This corresponds to the estimator of \cite{Issler2009}. In a  genuine large-$T$ setting  it is feasible to estimate a bias of such of type and correct for it.  However, in a moderate-$T$ setting this quantity cannot be consistently estimated.}

The weight convergence condition $(iii)$ characterizes the spaces of weights over  which the MSE is validly approximated by the LA-MSE.
$(iii)$ requires the  sequence  $\{ \bw_1, \bw_2, \dots \}$ of weight vectors  to converge uniformly to some limit $\bw$ as the cross-section grows. %
Note that the sum of the limit $\bw$ can be less than one, as is the case for the mean group estimator.

We now specialize the LA-MSE expression to the fixed-$N$ and large-$N$ averaging approaches of section \ref{section:methodology}. \label{page:ref1_weights}
In the fixed-$N$ case, suppose that 
only the first $\bar{N}$  units are being averaged, where $\bar{N}$ is fixed and finite.
Only these units affect the bias and the variance of the estimator, and both sums in eq.\ \eqref{equation:la_mse_theorem} are finite sums. 
 The LA-MSE is a quadratic function of the weights.
Formally, for all $N\geq\bar{N}$, let $\bw_N=(w_{i\,N})$ satisfy two conditions. First, set $w_{i\,N}=0$ for all $i>\bar{N}$. 
Second, let $\bw^{\bar{N}}$ be a $\bar{N}$-vector that satisfies  $\sum_{i=1}^{\bar{N}} w_i^{\bar{N}}=1, w_{i}^{\bar{N}}\geq 0$.
Then let $w_{i\, N} = w_{i}^{\bar{N}}$.
  The condition that $N\to\infty$ becomes superfluous and condition $(iii)$   holds automatically. 
The LA-MSE is controlled by the $\bar{N}$-vector $\bw^{\bar{N}}$ and can be written as 
\begin{equation}\label{equation:mseFixed}
	LA\mhyphen MSE_{\bar{N}}(\bw^{\bar{N}}) = \sum_{i=1}^{\bar{N}} \sum_{j=1}^{\bar{N}} w_{i}^{\bar{N}} [{\bPsi}_{\bar{N}}]_{i\,j} w_{j}^{\bar{N}}\equiv \bw^{\bar{N}'}\bPsi_{\bar{N}}\bw^{\bar{N}}~,
\end{equation}
where $\bPsi_{\bar{N}}$ is an $\bar{N} \times \bar{N}$ matrix with elements
$[\bPsi_{\bar{N}}]_{i\,i} =\bd_0'\left((\bEta_i-\bEta_1)\left(\bEta_i-\bEta_1 \right)' + \bV_i\right)\bd_0 $  and
$[\bPsi_{\bar{N}}]_{i\,j}   = \bd_0'(\bEta_i-\bEta_1)\left(\bEta_j-\bEta_1 \right)'\bd_0$ when $ i\neq j$.

In the large-$N$ case,  let the $\bar{N}$ unrestricted units be placed in 
the first $\bar{N}$ positions, with the $N-\bar{N}$  remaining units forming the restricted set. 
By eq.\ \eqref{equation:largeNweights}, the individual weights of  the restricted units 
converge to 0 uniformly and satisfy $(iii)$.
The restricted units contribute only to the bias component of the LA-MSE. 
The LA-MSE itself is fully determined by the individual weights of the unrestricted units and the total mass assigned to the restricted set.
Formally, let $\bw^{\bar{N}, \infty}$ be a $\bar{N}$-vector that satisfies $\sum_{i=1}^{\bar{N}} w_{i}^{\bar{N}, \infty} \leq 1$, $w_{i}^{\bar{N}, \infty} \geq 0$;
the vector $\bw^{\bar{N}, \infty}$ holds the weights of the unrestricted units. 
 Set $w_{iN} = w_i^{\bar{N}, \infty}$ for $i\leq \bar{N}$ and $w_{iN} = (1-\sum_{j=1}^{\bar{N}} w_{j}^{\bar{N}, \infty})/(N-\bar{N})$, $i \in\curl{\bar{N}+1, \dots, N}$. Let $\bw=(w_i)$ where $w_i=w_i^{\bar{N}, \infty}$, $i\leq \bar{N}$ and $w_i =0$, $i>\bar{N}$.
 Then $\sup_{i} \abs*{w_{iN} - w_i} = O(N^{-1})$. %, satisfying $(iii)$.
Note that the mass of the restricted units  $(1-\sum_{i=1}^{\bar{N}} w_{i}^{\bar{N}, \infty})$ may lie anywhere between 0 and 1 (the latter being the case for the mean group estimator). 
The LA-MSE is controlled by $\bw^{\bar{N}, \infty}$ as
\begin{align}
	& LA\mhyphen MSE_{\infty}(\bw^{\bar{N},\infty}) =  \sum_{i=1}^{\bar{N}} \sum_{j=1}^{\bar{N}} w_{i}^{\bar{N}, \infty} [{\bPsi}_{\bar{N}}]_{i\,j} w_{j}^{\bar{N}, \infty} \\ 
	& \quad + \left( \left( 1-\sum_{i=1}^{\bar{N}}w_i^{\bar{N}, \infty}\right)\bd_0'\bEta_1 - 2\sum_{i=1}^{\bar{N}}w_i^{\bar{N}, \infty}\bd_0'(\bEta_i-\bEta_1)  \right) \left( 1-\sum_{i=1}^{\bar{N}}w_i^{\bar{N}, \infty}\right)\bd_0'\bEta_1~.
\end{align} 
 The same expression for the LA-MSE can be obtained with other weighting schemes for the restricted set.  
The weights in $\bw_N$ beyond $\bar{N}$ can display   strong variations in orders of magnitude, with some weights decaying like $N^{-1/2-\varepsilon}$, and some at a faster rate. 
 
The above arguments also show that it is internally consistent to use the fixed-$N$ and large-$N$ approaches to minimize the MSE. These approaches minimize (an estimator) of the LA-MSE. % 
The weights returned lie within the class of weights for which the LA-MSE provides a valid approximation to the MSE.

The quantities $\widehat{LA\mhyphen MSE}_{\bar{N}}$ and $\widehat{LA\mhyphen MSE}_{\infty}$ used to define the minimum MSE weights introduced in section \ref{section:methodology}
are estimators of the population expressions for the LA-MSE  given above.
 In the rest of the section we focus on the properties of these estimators as well as the   optimal weights \eqref{equation:fixedNweights} and \eqref{equation:largeNweights} associated with them.

We begin by noting that in our framework the population LA-MSE  cannot be consistently estimated. Under local heterogeneity the idiosyncratic components $\bEta_i$  cannot be consistently estimated, as the amount of information in each time series is finite and bounded under \ref{asm:local} \citep{Hjort2003}. 
Instead, we form $\widehat{LA\mhyphen MSE}_{\bar{N}}$ and $\widehat{LA\mhyphen MSE}_{\infty}$ by plugging in asymptotically unbiased estimators for  $\bEta_i-\bEta_1$ and $\bEta_1$ \citep{Hjort2003}.
Such estimators are provided by $\sqrt{T}(\hat{\btheta}_i-\hat{\btheta}_1)$ and $\sqrt{T}(\hat{\btheta}_1-N^{-1} \sum_{i=1}^N \hat{\btheta}_i)$, respectively: %, as the following lemma establishes.
\begin{lemma}\label{lemma:etaEstimators} 
	Let assumptions \ref{asm:local}-\ref{asm:focus} hold. 
	Then as $N, T\to \infty$ jointly, it holds that
	\begin{align} 
		\sqrt{T}\left( \hat{\btheta}_i -\hat{\btheta}_1 \right) &  \Rightarrow N\left(\bEta_i-\bEta_1, \bV_i+\bV_1 \right) = \bZ_i- \bZ_1, \\
		\sqrt{T}\left(\hat{\btheta}_1- \frac{1}{N} \sum_{i=1}^N \hat{\btheta}_i  \right)  & \Rightarrow       N(\bEta_1, \bV_1) = \bZ_1  + \bEta_1 .
	\end{align}
	%	holds as $N, T\to\infty$ jointly. 
	Convergence is joint for all $i$.
\end{lemma}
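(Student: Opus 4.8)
The plan is to obtain both statements from Lemma~\ref{lemma:individual} together with a law-of-large-numbers argument for the average of the unit-specific estimation errors. For the first display I would write $\sqrt{T}(\hat{\btheta}_i-\hat{\btheta}_1)=\sqrt{T}(\hat{\btheta}_i-\btheta_1)-\sqrt{T}(\hat{\btheta}_1-\btheta_1)$. Lemma~\ref{lemma:individual} gives the joint (product-topology) weak convergence $\sqrt{T}(\hat{\btheta}_j-\btheta_1)\Rightarrow\bZ_j$ with the $\bZ_j$ mutually independent, so by the continuous mapping theorem $\sqrt{T}(\hat{\btheta}_i-\hat{\btheta}_1)\Rightarrow\bZ_i-\bZ_1$ jointly in $i$; since $\bZ_i\sim N(\bEta_i-\bEta_1,\bV_i)$ is independent of $\bZ_1\sim N(\bm 0,\bV_1)$, the limit is $N(\bEta_i-\bEta_1,\bV_i+\bV_1)$ for $i\neq 1$, as claimed.

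For the second display I would use $\btheta_j-\btheta_1=(\bEta_j-\bEta_1)/\sqrt{T}$ from Assumption~\ref{asm:local} to decompose
\[
\sqrt{T}\Bigl(\hat{\btheta}_1-\frac1N\sum_{j=1}^N\hat{\btheta}_j\Bigr)=\sqrt{T}(\hat{\btheta}_1-\btheta_1)+\Bigl(\bEta_1-\frac1N\sum_{j=1}^N\bEta_j\Bigr)-\frac1N\sum_{j=1}^N\sqrt{T}(\hat{\btheta}_j-\btheta_j).
\]
The first term converges weakly to $\bZ_1\sim N(\bm 0,\bV_1)$ by Lemma~\ref{lemma:individual}. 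The second term is deterministic given the conditioning $\sigma(\bEta_1,\bEta_2,\dots)$ and tends to $\bEta_1$ almost surely by Kolmogorov's strong law, using $\E_{\bEta}\norm{\bEta_j}^{12}<\infty$. For the third term I would prove convergence to $\bm 0$ in (conditional) $L^2$: by Assumption~\ref{asm:mom_and_dep} the errors $\sqrt{T}(\hat{\btheta}_j-\btheta_j)$ are independent across $j$ conditionally on $\sigma(\bEta_1,\bEta_2,\dots)$, so the trace of the conditional covariance of the average equals $N^{-2}\sum_{j}\tr\var(\sqrt{T}(\hat{\btheta}_j-\btheta_j))=O(N^{-1})$ once $\E\norm{\sqrt{T}(\hat{\btheta}_j-\btheta_j)}^2$ is bounded uniformly in $j$ and $T>T_0$, while its conditional mean has $1$-norm at most $\sqrt{T}\,N^{-1}\sum_j\norm{\E(\hat{\btheta}_j-\btheta_j)}_1\le C_{Bias}/\sqrt{T}$ by Assumption~\ref{asm:bias}; hence the third term is $O_{L^2}(N^{-1/2}+T^{-1/2})=o_p(1)$ as $N,T\to\infty$ jointly. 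Slutsky's theorem then gives $\sqrt{T}(\hat{\btheta}_1-N^{-1}\sum_j\hat{\btheta}_j)\Rightarrow\bZ_1+\bEta_1\sim N(\bEta_1,\bV_1)$.

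The main obstacle is the uniform moment bound $\sup_{j,\,T>T_0}\E\norm{\sqrt{T}(\hat{\btheta}_j-\btheta_j)}^2<\infty$ that makes the noise average vanish, and this is exactly where the regularity conditions of Assumption~\ref{asm:mestimation} do the work. I would derive it from the M-estimation expansion $\sqrt{T}(\hat{\btheta}_j-\btheta_j)=-\bigl(T^{-1}\sum_t\nabla^2 m(\bar{\btheta}_j,\bz_{j\,t})\bigr)^{-1}T^{-1/2}\sum_t\nabla m(\btheta_j,\bz_{j\,t})$ for an intermediate point $\bar{\btheta}_j\in[\btheta_j,\hat{\btheta}_j]$, bounding the inverse-Hessian factor by a multiple of $(1-D_{j\,T})^{-1}$ via a Neumann-series argument that uses Assumption~\ref{asm:mestimation}$(vii)$ and the spectral bounds in $(viii)$, and then applying H\"older's inequality with the gradient moment bound $(v)$ and the moment bound on $D_{j\,T}/(1-D_{j\,T})$ in $(vii)$, whose exponent is calibrated for precisely this step; this is, in effect, the uniform bound already underlying the proof of Lemma~\ref{lemma:individual}, so I would invoke it rather than re-derive it. The joint convergence of all displayed quantities (over $i$ and the mean-group part) then follows because each is built from the jointly convergent family $\{\sqrt{T}(\hat{\btheta}_j-\btheta_1)\}_j$ --- in particular from its coordinate $\sqrt{T}(\hat{\btheta}_1-\btheta_1)$ --- by continuous maps, deterministic convergent perturbations, and an $o_p(1)$ remainder, so the continuous mapping theorem and Slutsky's theorem apply in the product topology; the joint-$(N,T)$ nature of the limit causes no trouble, since the $L^2$ bound $\lesssim N^{-1}+T^{-1}$ vanishes along every path with $N,T\to\infty$.
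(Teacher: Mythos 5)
Your proof is correct, and while the first display is handled exactly as in the paper (decompose $\sqrt{T}(\hat{\btheta}_i-\hat{\btheta}_1)=\sqrt{T}(\hat{\btheta}_i-\btheta_1)-\sqrt{T}(\hat{\btheta}_1-\btheta_1)$ and apply lemma \ref{lemma:individual} with the continuous mapping theorem), your treatment of the second display takes a genuinely different and more elementary route. The paper obtains $\sqrt{T}(N^{-1}\sum_{i}\hat{\btheta}_i-\btheta_1)\xrightarrow{p}-\bEta_1$ by invoking theorem \ref{theorem:fixed} with the identity focus map and mean-group weights, which requires a footnote extending that scalar-focus theorem to a vector-valued $\mu$ via the Cram\'er--Wold device and implicitly leans on the whole Phillips--Moon-style joint-limit machinery of lemma \ref{lemma:seqJoint}. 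You instead decompose the mean-group deviation into $\sqrt{T}(\hat{\btheta}_1-\btheta_1)$, a deterministic $\bEta$-average handled by Kolmogorov's strong law, and a noise average killed by a Chebyshev/$L^2$ bound: independence across units (assumption \ref{asm:mom_and_dep}) gives a conditional variance of order $N^{-1}$ once the uniform second-moment bound $\sup_{j,T>T_0}\E\norm{\sqrt{T}(\hat{\btheta}_j-\btheta_j)}^2\leq C_{\hat{\btheta},2}$ is in hand (this is exactly lemma \ref{lemma:individualMoments}, which you correctly identify as already established rather than re-deriving), while assumption \ref{asm:bias} controls the mean at order $T^{-1/2}$. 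Your route buys self-containedness and sidesteps the vector-focus extension entirely, at the cost of re-doing a small piece of bookkeeping that theorem \ref{theorem:fixed} already packages; the paper's route buys brevity by reusing a heavier theorem. Both arguments are valid along every joint $(N,T)$ path, and your final Slutsky step and the joint-convergence claim over $i$ are justified for the reason you give.
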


The following  two theorems   establish the properties of our LA-MSE estimators  and the associated minimum MSE weights  \eqref{equation:fixedNweights} and \eqref{equation:largeNweights}.
The theorem also characterizes the asymptotic distribution of the minimum MSE unit averaging estimators. %$\hat{\mu}(\hat{\bw}^{\bar{N}})$  in this regime.
First, we state a result for the fixed-$N$ estimator. Recall that $\Delta^{\bar{N}} = \curl{\bw\in \R^{\bar{N}}: \sum_{i=1}^{\bar{N}} w_i=1, w_i\geq 0, i=1,\dots, \bar{N} }$.  
\begin{thm}[Fixed-$N$ Minimum MSE Unit Averaging]\label{theorem:randomWeights:fixed}
	Let assumptions \ref{asm:local}-\ref{asm:focus} hold and $\bar{N}<\infty$ be a fixed positive integer. 
	
	\begin{enumerate}[label=(\roman*), noitemsep,topsep=0pt,parsep=0pt,partopsep=0pt]
		
		\item For any $\bw^{\bar{N}} \in \Delta^{\bar{N}}$ it holds that
		\( \widehat{LA\mhyphen MSE}_{\bar{N}}(\bw^{\bar{N}})\Rightarrow \overline{LA\mhyphen MSE}_{\bar{N}}(\bw^{\bar{N}}) \coloneqq \bw^{\bar{N}'}\overline{\bPsi}_{\bar{N}}\bw^{\bar{N}} \)
		as $T\to\infty$,
		where ${\overline{\bPsi}}_{\bar{N}}$ is an $\bar{N}\times \bar{N}$ matrix with  $[{\overline{\bPsi}}_{\bar{N}}]_{i\,j} = \bd_0'((\bZ_i-\bZ_1)(\bZ_i-\bZ_1)' +\bV_i)\bd_0$ when $i=j$ and $\bd_0'((\bZ_i-\bZ_1)(\bZ_j-\bZ_1)' )\bd_0$ when $i\neq j$; and $\bZ_i$ is as in lemma \ref{lemma:individual}.
		
		\item As $T\to\infty$, the minimum MSE weights satisfy
		\begin{equation} 
			\hat{\bw}^{\bar{N}} = \argmin_{\bw^{\bar{N}}\in\Delta^{\bar{N}}}  \widehat{LA\mhyphen MSE}_{\bar{N}}(\bw^{\bar{N}}) \Rightarrow \overline{\bw}^{\bar{N}}=  \argmin_{\bw^{\bar{N}}\in\Delta^{\bar{N}}}  \overline{LA\mhyphen MSE}_{\bar{N}}(\bw^{\bar{N}}) .
		\end{equation}

		\item As $T\to\infty$, for $\Lambda_i$ of lemma \ref{lemma:individual}, the minimum MSE unit averaging estimator satisfies
		\[ \sqrt{T}\left(\hat{\mu}(\hat{\bw}^{\bar{N}}) - \mu(\btheta_1) \right)\Rightarrow  \sum_{i=1}^{\bar N} \overline{w}_{i}^{\bar{N}} \Lambda_i.  \]
		
	\end{enumerate}
\end{thm}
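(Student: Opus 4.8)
The plan is to deduce all three parts from the joint weak convergence supplied by Lemma~\ref{lemma:individual}, using in each case an appropriate continuous mapping argument; the only genuinely delicate point is the continuity of the $\argmin$ functional in part $(ii)$. First I would record the ingredients. Since $\hat{\btheta}_1-\btheta_1\xrightarrow{p}\bm 0$ (Lemma~\ref{lemma:individual}) and $\btheta_1=\btheta_0+\bEta_1/\sqrt{T}\to\btheta_0$, continuity of $\nabla\mu$ gives $\nabla\mu(\hat{\btheta}_1)\xrightarrow{p}\bd_0$; Assumption~\ref{asm:mestimation}$(ix)$ gives $\hat{\bV}_i\xrightarrow{p}\bV_i$; and $\sqrt{T}(\hat{\btheta}_i-\hat{\btheta}_1)=\sqrt{T}(\hat{\btheta}_i-\btheta_1)-\sqrt{T}(\hat{\btheta}_1-\btheta_1)\Rightarrow\bZ_i-\bZ_1$ jointly in $i$, again by Lemma~\ref{lemma:individual}. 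Each entry of $\hat{\bPsi}_{\bar{N}}$ is a fixed continuous function of these quantities, so the continuous mapping theorem (combined with Slutsky for the $\xrightarrow{p}$ pieces) yields $\hat{\bPsi}_{\bar{N}}\Rightarrow\overline{\bPsi}_{\bar{N}}$, jointly with the rescaled individual estimators. Applying the continuous map $\bA\mapsto \bw^{\bar{N}\prime}\bA\bw^{\bar{N}}$ for a fixed $\bw^{\bar{N}}\in\Delta^{\bar{N}}$ then proves part $(i)$.

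For part $(ii)$ the key observation is that the limiting matrix is positive definite. Writing $a_i=\bd_0'(\bZ_i-\bZ_1)$ and $v_i=\bd_0'\bV_i\bd_0$, one has $\overline{\bPsi}_{\bar{N}}=\ba\ba'+\diag(v_1,\dots,v_{\bar{N}})$ with $\ba=(a_1,\dots,a_{\bar{N}})'$; since $\ba\ba'\succeq 0$ and, by Assumption~\ref{asm:mestimation}$(viii)$ and $\bd_0\neq\bm 0$ (Assumption~\ref{asm:focus}), $v_i\geq (\underline{\lambda}_{\bSigma}/\overline{\lambda}_{\bH}^{\,2})\norm{\bd_0}^2>0$, the matrix $\overline{\bPsi}_{\bar{N}}$ is positive definite. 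Hence $\bw^{\bar{N}}\mapsto \overline{LA\mhyphen MSE}_{\bar{N}}(\bw^{\bar{N}})$ is strictly convex and, being minimized over the nonempty compact convex set $\Delta^{\bar{N}}$, has a unique minimizer $\overline{\bw}^{\bar{N}}$. Because $\bar{N}$ is finite and the criterion is a quadratic form depending continuously on its coefficient matrix, the convergence $\hat{\bPsi}_{\bar{N}}\Rightarrow\overline{\bPsi}_{\bar{N}}$ upgrades to uniform convergence of $\widehat{LA\mhyphen MSE}_{\bar{N}}(\cdot)$ to $\overline{LA\mhyphen MSE}_{\bar{N}}(\cdot)$ over $\Delta^{\bar{N}}$ (in the relevant joint-convergence sense, made precise via a Skorokhod coupling). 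Together with the unique, well-separated minimizer, this lets me invoke the $\argmin$ continuous mapping theorem, exactly as in \cite{Liu2015}, to conclude $\hat{\bw}^{\bar{N}}\Rightarrow\overline{\bw}^{\bar{N}}$, jointly with all the objects above.

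For part $(iii)$ I would use that $\sum_{i=1}^{\bar{N}}\hat{w}_i^{\bar{N}}=1$ on $\Delta^{\bar{N}}$, so that $\sqrt{T}(\hat{\mu}(\hat{\bw}^{\bar{N}})-\mu(\btheta_1))=\sum_{i=1}^{\bar{N}}\hat{w}_i^{\bar{N}}\,\sqrt{T}(\mu(\hat{\btheta}_i)-\mu(\btheta_1))$. By Lemma~\ref{lemma:individual}, $\sqrt{T}(\mu(\hat{\btheta}_i)-\mu(\btheta_1))\Rightarrow\Lambda_i$ jointly in $i$; both $\hat{\bw}^{\bar{N}}$ (by part $(ii)$) and this vector are continuous images of the same finite-dimensional weakly convergent vector — namely $(\sqrt{T}(\hat{\btheta}_i-\btheta_1))_{i=1}^{\bar{N}}$ together with the consistent $\nabla\mu(\hat{\btheta}_1)$ and $\{\hat{\bV}_i\}$ — hence converge jointly. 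A final application of the continuous mapping theorem to the bilinear map $(\bw,\lambda_1,\dots,\lambda_{\bar{N}})\mapsto\sum_i w_i\lambda_i$ gives $\sqrt{T}(\hat{\mu}(\hat{\bw}^{\bar{N}})-\mu(\btheta_1))\Rightarrow\sum_{i=1}^{\bar{N}}\overline{w}_i^{\bar{N}}\Lambda_i$, with the coupling $\Lambda_i=\bd_0'\bZ_i$ making $\overline{\bw}^{\bar{N}}$ and the $\Lambda_i$ functions of the common limit $(\bZ_1,\dots,\bZ_{\bar{N}})$.

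I expect the main obstacle to be the rigorous justification of the $\argmin$ step in part $(ii)$: turning distributional convergence of the quadratic criterion into convergence of its constrained minimizer, while allowing the minimizer to lie on the boundary of $\Delta^{\bar{N}}$. The positive-definiteness computation settles uniqueness of the limit minimizer, and an $\argmin$-CMT / epiconvergence argument along the lines of \cite{Liu2015} handles the rest, but this is where the care is needed; the remaining steps are routine applications of the continuous mapping theorem to Lemma~\ref{lemma:individual}.
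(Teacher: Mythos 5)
Your proposal is correct and follows essentially the same route as the paper: continuous mapping applied to the joint convergence $\sqrt{T}(\hat{\btheta}_i-\hat{\btheta}_1)\Rightarrow \bZ_i-\bZ_1$ (plus consistency of $\nabla\mu(\hat{\btheta}_1)$ and $\hat{\bV}_i$) for part $(i)$, an argmax/argmin continuous-mapping theorem with uniqueness of the limit minimizer from positive definiteness of $\overline{\bPsi}_{\bar{N}}$ for part $(ii)$, and joint convergence plus continuous mapping for part $(iii)$. Your rank-one-plus-diagonal decomposition $\overline{\bPsi}_{\bar{N}}=\ba\ba'+\diag(v_1,\dots,v_{\bar N})$ is just a more explicit rendering of the paper's ``MSE equals squared bias plus variance'' bound, and your care about upgrading pointwise to uniform convergence of the quadratic criterion over the simplex is implicit in the paper's appeal to the argmax theorem for a finite-dimensional quadratic form.
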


The quantity $\overline{LA\mhyphen MSE}_{\bar{N}}$ plays the same role to $\widehat{LA\mhyphen MSE}_{\bar{N}}$  as $\bZ_i$ does to $\sqrt{T}(\hat{\btheta}_i-\btheta_1)$ in lemma \ref{lemma:individual}. $\overline{LA\mhyphen MSE}_{\bar{N}}$ uses  a local approximation to express   $\widehat{LA\mhyphen MSE}_{\bar{N}}$ in terms of the leading components of the MSE and the approximate distribution of the individual estimators.
We can see that $\overline{LA\mhyphen MSE}_{\bar{N}}$ is composed of the population LA-MSE, a bias term, and a noise component.
In fact, the entries of the matrix $\overline{\bPsi}_{\bar{N}}$   may be expressed as
\begin{align}%\label{key}
	[\overline{\bPsi}_{\bar{N}}]_{i\, i} & 
	=[{\bPsi}_{\bar{N}}]_{i\, i}  + \bd_0'(\bV_1+\bV_i)\bd_0 + \bd_0'\be_{i\,i}\bd_0 ~,\\
	[\overline{\bPsi}_{\bar{N}}]_{i\, j}  &  = [ {\bPsi}_{\bar{N}}]_{i\, j}  + \bd_0'\bV_1\bd_0 + \bd_0'\be_{i\,j}\bd_0, \quad i\neq j ~,
\end{align}
where $\be_{i\,j} = (\bZ_i-\bZ_1)(\bZ_j-\bZ_1)' - \E\left((\bZ_i-\bZ_1)(\bZ_j-\bZ_1)' \right)$.
The noise terms $\be_{i\,j}$ may be interpreted as the  result of the fact that in a moderate-$T$ setting there is limited  information about the idiosyncratic components $\bEta_i$. These terms are mean zero and independent conditional on unit 1.
The bias terms guarantee that $\overline{\bPsi}_{\bar{N}}$ is positive definite  and arise as a consequence of using the biased positive definite estimator $\hat{\bPsi}_{\bar{N}}$ (see remark \ref{remark:psi_estimator} below).  The bias  can be split into two components.
The $\bd'_0\bV_1\bd_0$ is common for all elements of $\overline{\bPsi}_{\bar{N}}$ and does not affect the solution of the MSE minimization problem. 
The second component  $\bd_0'\bV_i\bd_0$ only affects the diagonal of $\overline{\bPsi}_{\bar{N}}$ and measures the individual variances.  %
This component does not modify the ordering of the estimators in terms of their variances. 

Result $(iii)$ shows that the   minimum MSE unit averaging estimator has a nonstandard asymptotic distribution in the local heterogeneity framework. 
The limit distribution is a randomly weighted sum of independent normal random variables.
This result is somewhat similar to the distributional results for model averaging estimators \citep{Liu2015}.
In the Online Appendix, we show how to construct confidence intervals based on theorem \ref{theorem:randomWeights:fixed}.
%

%Third,

The following theorem establishes an analogous result for the large-$N$   estimator.  % Recall that  $\tilde \Delta^{\bar{N}}= \curl{\bw\in \R^{\bar{N}}: w_i\geq 0, \sum_{i=1}^{N} w_i\leq 1}$.
\begin{thm}[Large-$N$ Minimum MSE Unit Averaging]\label{theorem:randomWeights:large}
	Let assumptions \ref{asm:local}-\ref{asm:focus} hold and $\bar{N}<\infty$ be a fixed non-negative integer. 
	
	\begin{enumerate}[label=(\roman*), noitemsep,topsep=0pt,parsep=0pt,partopsep=0pt]
		
		\item  For any $\bw^{\bar{N}, \infty}\in  \tilde{\Delta}^{\bar{N}}$
		it holds that 
		$\widehat{LA\mhyphen MSE}_{\infty}(\bw^{\bar{N}, \infty}) 	\Rightarrow \overline{LA\mhyphen MSE}_{\infty}(\bw^{\bar{N}, \infty}) $
		as $N, T\to \infty$ jointly
		where		$\tilde \Delta^{\bar{N}}= \curl{\bw\in \R^{\bar{N}}: w_i\geq 0, \sum_{i=1}^{N} w_i\leq 1}$ and
		\begin{align}
			\overline{LA\mhyphen MSE}_{\infty}(\bw^{\bar{N}, \infty}) & = \bw^{\bar{N}, \infty'} \overline{\bPsi}_{\bar{N}}\bw^{{\bar{N}, \infty}} +  \Bigg[   \left(1-\sum_{i=1}^{\bar{N}}w^{\bar{N}, \infty}_i \right) \bd_0'\left(\bEta_1+ \bZ_1\right) \\
			& \quad 	- 2\sum_{i=1}^{\bar{N}}w^{\bar{N}, \infty}_i\bd_0'  \left(\bZ_i-\bZ_1\right)  \Bigg]
			\left( 1-\sum_{i=1}^{\bar{N}}w_i^{\bar{N}, \infty} \right)
			\bd_0'\left(\bEta_1 + \bZ_1
			\right). 
		\end{align}
		
		\item As $N, T\to\infty$,  the   minimum MSE weights satisfy \begin{equation} 
			\hat{\bw}^{\bar{N}, \infty} = \argmin_{\bw\in\tilde{\Delta}^{\bar{N}}}  \widehat{LA\mhyphen MSE}_{\infty}(\bw)\Rightarrow \overline{\bw}^{\bar{N},\infty} =  \argmin_{\bw\in\tilde{\Delta}^{\bar{N}}} \overline{LA\mhyphen MSE}_\infty(\bw) .
		\end{equation}

		\item  Let $\bv_{N-\bar N}=(v_{\bar{N}\, N}, \dots, v_{N\, N})$ be a $(N-\bar{N})$-vector such that $\sup_{i} v_{i\, N-\bar N}=o(N^{-1/2})$, $v_{i\,N-\bar{N}}\geq 0$, for each $N$ it holds that $\sum_{i=N-\bar{N}}^{N} v_{i\,N - \bar N}=1$. 
		Then  as $N, T\to\infty$ jointly
		\begin{align}
			& \quad \sqrt{T}\left(\sum_{i=1}^{\bar{N}}\hat{w}_{i}^{\bar{N}, \infty} \mu\left(\hat{\btheta}_i \right) + \left(1-\sum_{i=1}^{\bar{N}}\hat{w}_{i}^{\bar{N}, \infty} \right)\sum_{j=N-\bar{N}}^{N} v_{j\,N-\bar N}\mu(\hat{\btheta}_j)  - \mu(\btheta_1) \right)\\ &\Rightarrow
			\sum_{i=1}^{\bar{N}} \overline{w}_{i}^{\bar{N},\infty} \Lambda_i - \left(1-\sum_{i=1}^{\bar{N}}\overline{w}_{i}^{\bar{N},\infty} \right)\bd_0'\bEta_1  . \label{equation:optimalWeightsInfinite}
		\end{align}

	\end{enumerate}
\end{thm}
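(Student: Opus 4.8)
The plan is to follow the pattern of the proof of Theorem~\ref{theorem:randomWeights:fixed}, adding the bookkeeping needed for the term in $\sqrt{T}(\hat{\btheta}_1-N^{-1}\sum_{j=1}^N\hat{\btheta}_j)$ and for the block of restricted units; throughout, all statements are to be read as holding with $\bEta$-probability one. The inputs are the joint weak convergences of Lemmas~\ref{lemma:individual} and \ref{lemma:etaEstimators} --- $\sqrt{T}(\hat{\btheta}_i-\btheta_1)\Rightarrow\bZ_i$ and $\sqrt{T}(\mu(\hat{\btheta}_i)-\mu(\btheta_1))\Rightarrow\Lambda_i$ for $i\le\bar{N}$, $\sqrt{T}(\hat{\btheta}_i-\hat{\btheta}_1)\Rightarrow\bZ_i-\bZ_1$, $\sqrt{T}(\hat{\btheta}_1-N^{-1}\sum_{j=1}^N\hat{\btheta}_j)\Rightarrow\bZ_1+\bEta_1$, jointly --- together with $\hat{\bV}_i\xrightarrow{p}\bV_i$ and $\nabla\mu(\hat{\btheta}_1)\xrightarrow{p}\bd_0$ from Assumptions~\ref{asm:mestimation} and \ref{asm:focus}. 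For assertion~(i), I would fix $\bw^{\bar{N},\infty}\in\tilde{\Delta}^{\bar{N}}$ and note that $\widehat{LA\mhyphen MSE}_{\infty}(\bw^{\bar{N},\infty})$ is a fixed polynomial in these quantities: the first summand uses $\hat{\bPsi}_{\bar{N}}\Rightarrow\overline{\bPsi}_{\bar{N}}$ (the argument of Theorem~\ref{theorem:randomWeights:fixed}(i), involving only the $\bar{N}$ fixed units, applies unchanged as $N,T\to\infty$ jointly), while the remaining summand uses $\sqrt{T}\nabla\mu(\hat{\btheta}_1)'(\hat{\btheta}_1-N^{-1}\sum_j\hat{\btheta}_j)\Rightarrow\bd_0'(\bEta_1+\bZ_1)$ and $\sqrt{T}\nabla\mu(\hat{\btheta}_1)'(\hat{\btheta}_i-\hat{\btheta}_1)\Rightarrow\bd_0'(\bZ_i-\bZ_1)$; the continuous mapping theorem then delivers $\widehat{LA\mhyphen MSE}_{\infty}(\bw^{\bar{N},\infty})\Rightarrow\overline{LA\mhyphen MSE}_{\infty}(\bw^{\bar{N},\infty})$.

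For assertion~(ii) I would argue as for Theorem~\ref{theorem:randomWeights:fixed}(ii): since $\bw\mapsto\widehat{LA\mhyphen MSE}_{\infty}(\bw)$ is quadratic in $\bw$ with coefficients converging jointly in distribution, it converges weakly (as a random function) uniformly over the compact set $\tilde{\Delta}^{\bar{N}}$, so the minimizers converge provided $\overline{LA\mhyphen MSE}_{\infty}$ has an a.s.\ unique minimizer on $\tilde{\Delta}^{\bar{N}}$. That uniqueness is the one point not already covered by the fixed-$N$ case, and I would obtain it from strict convexity: by the definition in Theorem~\ref{theorem:randomWeights:fixed}(i) one has $\overline{\bPsi}_{\bar{N}}=\bg\bg'+\diag(\bd_0'\bV_i\bd_0)$ with $\bg=(\bd_0'(\bZ_i-\bZ_1))_{i=1}^{\bar{N}}$, and a direct computation shows that the Hessian of $\overline{LA\mhyphen MSE}_{\infty}$ in $\bw$ equals $2\diag(\bd_0'\bV_i\bd_0)+2(\bg+\bd_0'(\bEta_1+\bZ_1)\mathbf{1})(\bg+\bd_0'(\bEta_1+\bZ_1)\mathbf{1})'$, which is positive definite because $\bd_0'\bV_i\bd_0\ge\lambda_{\min}(\bV_i)\norm{\bd_0}^2>0$ by Assumptions~\ref{asm:mestimation}(viii) and \ref{asm:focus}. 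Hence $\overline{\bw}^{\bar{N},\infty}$ is well defined and $\hat{\bw}^{\bar{N},\infty}\Rightarrow\overline{\bw}^{\bar{N},\infty}$, jointly with the convergences of Lemmas~\ref{lemma:individual}--\ref{lemma:etaEstimators}, as $\hat{\bw}^{\bar{N},\infty}$ is an a.s.-continuous transformation of the same random vectors.

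For assertion~(iii), I would set $\hat{a}=1-\sum_{i=1}^{\bar{N}}\hat{w}_i^{\bar{N},\infty}$; using $\sum_j v_{j\,N-\bar{N}}=1$, the recentred estimator equals $\sum_{i=1}^{\bar{N}}\hat{w}_i^{\bar{N},\infty}\sqrt{T}(\mu(\hat{\btheta}_i)-\mu(\btheta_1))+\hat{a}\,\big(\sqrt{T}(\sum_j v_{j\,N-\bar{N}}\mu(\hat{\btheta}_j)-\mu(\btheta_1))\big)$. The first summand converges to $\sum_{i=1}^{\bar{N}}\overline{w}_i^{\bar{N},\infty}\Lambda_i$ by assertion~(ii), Lemma~\ref{lemma:individual}, and the continuous mapping theorem. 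For the second, the $v_{j\,N-\bar{N}}$ are nonnegative, sum to one, and satisfy $\sup_j v_{j\,N-\bar{N}}=o(N^{-1/2})$, so taking the weight sequence equal to $v_{j\,N-\bar{N}}$ on the restricted indices and zero elsewhere (in particular at index $1$) places us in the setting of Theorem~\ref{theorem:fixed} with $\bar{N}=0$, giving $\sqrt{T}(\sum_j v_{j\,N-\bar{N}}\mu(\hat{\btheta}_j)-\mu(\btheta_1))\xrightarrow{p}-\bd_0'\bEta_1$. Since $\hat{a}\Rightarrow 1-\sum_{i=1}^{\bar{N}}\overline{w}_i^{\bar{N},\infty}$ jointly with the other terms and $-\bd_0'\bEta_1$ is a constant given $\sigma(\bEta_1,\bEta_2,\dots)$, Slutsky's lemma gives $-(1-\sum_{i=1}^{\bar{N}}\overline{w}_i^{\bar{N},\infty})\bd_0'\bEta_1$ for the second summand, and adding the two jointly convergent limits yields \eqref{equation:optimalWeightsInfinite}.

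The main obstacle is the uniqueness step in assertion~(ii) --- showing that the limiting criterion $\overline{LA\mhyphen MSE}_{\infty}$ is strictly convex in the $\bar{N}$ free weights via the Hessian computation, so that the argmin-continuity argument applies. The remaining steps are routine given Lemmas~\ref{lemma:individual}--\ref{lemma:etaEstimators} and Theorems~\ref{theorem:risk}--\ref{theorem:fixed}, the only subtlety being the handling of the restricted block --- in particular that Theorem~\ref{theorem:fixed} can be invoked there with $\bar{N}=0$ after appending unit $1$ with zero weight.
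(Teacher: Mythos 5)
Your proposal is correct and follows essentially the same route as the paper: continuous mapping plus Lemma \ref{lemma:etaEstimators} for assertion (i), the argmax theorem with a strict-convexity/uniqueness check for assertion (ii), and the decomposition into the unrestricted block and the restricted block handled by Theorem \ref{theorem:fixed} with $\bar N=0$ for assertion (iii). The only divergence is in the uniqueness step: the paper lifts the problem to $\bx=(\bw,1-\sum_i w_i)\in\Delta^{\bar N+1}$ and proves positive definiteness of the lifted matrix $\overline{\bQ}$ (which needs the a.s.\ event $\bd_0'(\bEta_1+\bZ_1)\neq 0$ to cover the corner $\bw=0$), whereas your direct Hessian computation $2\diag(\bd_0'\bV_i\bd_0)+2(\bg+\bd_0'(\bEta_1+\bZ_1)\mathbf{1})(\bg+\bd_0'(\bEta_1+\bZ_1)\mathbf{1})'$ in the $\bar N$ free weights is correct (the criterion equals $((\bg+c\mathbf 1)'\bw-c)^2+\bw'\diag(\bd_0'\bV_i\bd_0)\bw$) and yields strict convexity from the diagonal term alone.
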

\noindent
Note that the estimator in equation \eqref{equation:optimalWeightsInfinite} is a valid averaging estimator, with %as the 
weights  summing to unity. % by construction. 
The exact way $\bv_N$ is picked does not matter, as long as the decay condition holds. All admissible choices lead to the same limit. In particular,  we may pick equal weights $v_{i\,N} = 1/(N-\bar{N})$, as we do in eq. \eqref{equation:largeNweights}.
Also note that the convergence result $(ii)$ applies to the  vector $\hat{\bw}^{\bar{N}, \infty}$ of the weights of the unrestricted units, a vector  of fixed length $\bar{N}$.   \label{page:r1_comment4}

\begin{remark}[Large-$T$ properties]
	\label{remark:large_t}
	 Minimizing  $\widehat{LA\mhyphen MSE}_N$ is natural even in a non-local (fixed parameters) setting where we drop assumption 
	\ref{asm:local} and allow the amount of information in each time series to grow as $T\to\infty$.  
	Asymptotically, this approach will place zero weights on units with $\btheta_i\neq \btheta_1$, while the weights on units with $\btheta_i= \btheta_1$ will follow theorem \ref{theorem:randomWeights:fixed}.
	%
%, 
Specifically,	for all $i$ such that $\btheta_i\neq \btheta_1$, the bias estimators $\sqrt{T}(\hat{\btheta}_i-\hat{\btheta}_1)$  will diverge.
In contrast, for the units with $\btheta_i = \btheta_1$, the bias estimators $\sqrt{T}(\hat{\btheta}_i-\hat{\btheta}_1)$ will instead behave as in lemma \ref{lemma:etaEstimators} (with $\bEta_i-\bEta_1=0$).
Accordingly, asymptotically no weight will be assigned to units with $\btheta_i\neq \btheta_1$.
	Similarly, $\sqrt{T}(\hat{\btheta}_1-N^{-1}\sum_{i=1}^N\hat{\btheta}_i)$ will diverge, leading the approach to place no weight on the restricted set, if it is present.
	Such a result has a parallel in fixed parameter asymptotics for model averaging \citep{Zhang2019b, Zhang2024}.
	The units with $\btheta_i\neq \btheta_1$ play the role of under-fitted models (asymptotically zero weights), while the units $\btheta_i=\btheta_1$ correspond to the just-fitted and over-fitted models (random weights characterized by a normal vector	).
Moreover, the  difference between the averaging estimator with   minimum MSE weights and the individual estimator will converge to zero in probability if there are no other units $i$ with $\btheta_i=\btheta_1$ (as would happen if the distribution of $\bEta$ is continuous).  
\end{remark}

\begin{remark}[Bias in $\hat{\bPsi}_{\bar{N}}$ and an alternative estimator for $\bPsi_{\bar{N}}$]
	\label{remark:psi_estimator}
	The matrix  $\hat{\bPsi}_{\bar{N}}$ of equations \eqref{equation:fixedNLAMSEestimatorMethodology} and \eqref{equation:largeNLAMSEestimatorMethodology}   is a biased   estimator of $\bPsi_{\bar{N}}$. 
	Such a bias   ensures that $\widehat{LA\mhyphen MSE}$ is nonnegative for all admissible weight vectors.  An  asymptotically unbiased estimator $\tilde{\bPsi}_{\bar{N}}$  instead would have  elements		
	$[\tilde{\bPsi}_{\bar{N}}]_{i\, j} = \hat{\bd}'_1( T ( \hat{\btheta}_i -\hat{\btheta}_1 )( \hat{\btheta}_j -\hat{\btheta}_1 )' - (\hat{\bV}_i \I\curl{i=j}+\hat{\bV}_1 ))\hat{\bd}_1$. 
	However, $\tilde{\bPsi}_{\bar{N}}$ can  fail to be positive definite,
	as it involves a difference of   positive definite matrices, leading to the undesirable possibility of 	 negative estimates of the LA-MSE. %\footnote{An additional argument in favor of focusing on $\hat{\bPsi}_{\bar{N}}$ is given by \cite{Liu2015} who examines the behavior of both $\hat{\bPsi}_{\bar{N}}$ and $\tilde{\bPsi}_{\bar{N}}$ in the context of model averaging. \cite{Liu2015} finds that $\hat{\bPsi}_{\bar{N}}$ leads to superior performance of resulting weights.}
\end{remark}

\section{Simulation Study}\label{section:simulation}

In this section, we study the performance of our minimum MSE unit averaging estimator for a variety of sample sizes via a simulation exercise.
We consider a model similar to the one we use in our empirical application -- a linear dynamic heterogeneous panel model defined as
\begin{align} 
	y_{i\,t} & = \lambda_{i} y_{i\,t-1} + \beta_i x_{i\,t}  + u_{i\,t}~, \hspace{1em} u_{i\,t} \stackrel{i.i.d.}{\sim} N( 0 , \sigma^2_i )~, \quad i=1, \dots, N, \quad t=1, \dots, T ~. \label{equation:simulation_model}
\end{align} 
The  error   $u_{i\,t}$ is cross-sectionally heteroskedastic, with variance $\sigma^2_i$ drawn independently from  an exponential(1) distribution.
$u_{i\, t}$ is independent from the coefficients and the covariates.
The exogenous variable $x_{i\,t}$ is independently drawn from a $N(0, 1)$ distribution. 
The initial conditions $y_{i0}$ are drawn from a $N(0, (\beta_i^2+\sigma^2_i)/(1-\lambda_i^2))$ distribution to ensure that $\{ y_{it} \}_t$ is covariance stationary.
The  two components of the parameter $\btheta_i=(\beta_i,\lambda_i)'$ are independently drawn from a $N(0, 1)$ and a Beta$(5,5)$ distribution on $[0.2, 0.8]$, respectively.
	Note that, in order to measure the impact of increasing information and to compare results across $T$, we model the distribution of $\btheta_i$ as independent from $T$. 
Under this (fixed parameter) approach, the amount of information in each time series increases as $T$ grows. 

 We study both moderate-$T$ and large-$T$ settings for a variety of cross-sectional sample sizes $N$.
Specifically, we consider $N=50, 150, 450$, and $T=50, 60, 600$.
 $T=30$ and $T=60$ are moderate values of $T$, according to the heuristic criterion of remark \ref{remark:moderate_t}: the average $t$-statistic of the parameter estimates  is 2 for $T=30$ and 3.5 for $T=60$.
In contrast, $T=600$ is a large value of $T$, with an average $t$-statistic value of 10.
We also note that %  the setting with
 $N=150, T=60$ is one of the  estimation sample sizes in our empirical application.

 The measures of interest are the MSE, bias, and variance of the unit averaging estimators (see below) for the focus parameter $\mu(\btheta_1)=\lambda_1$. \label{page:r2_minor_1}
 Specifically, we evaluate the MSE of the form $\E\left[(\hat{\mu}(\bw) - \mu(\btheta_1))^2|\lambda_1=c\right]$, where $c$ ranges through a grid of values in $[0.2, 0.8]$, and the expectation is over the distribution of data, $\beta_1$, and the  parameters of units 2-$N$.
 The bias and variance of interest are defined similarly.
 We draw 10000 datasets for each value of $c$ and $(N, T)$.
 For each sample, we estimate eq.\ \eqref{equation:simulation_model} by OLS, compute the  estimators, and record the estimates and  estimation errors. 
 The MSE is approximated with the average square Monte Carlo estimation error; we compute biases and variances similarly.

We estimate the focus parameter using the fixed-$N$ and large-$N$ minimum MSE estimators.  
We consider three specifications for the large-$N$ estimator. 
\begin{enumerate}[ %label=(\roman*),
	 noitemsep,topsep=0pt,parsep=0pt,partopsep=0pt, leftmargin=1.2em]
	\item For the \emph{most similar} specification,  an oracle selects the $10\%$ units whose parameter vector $\btheta_i$ is most similar to $\btheta_1$ in terms of the 2-norm. These units are set as the unrestricted units. %; the remaining units are restricted. 
	This approach measures the impact of prior information on unit similarity.
	\item For the \emph{Stein-like} specification, only the target unit is unrestricted. 
	\item For the \emph{top units} approach, we first run the fixed-$N$ estimator. The units are then sorted by the estimated weights. The top 10\% units are set to be the unrestricted units.

\end{enumerate}\label{page:simulation:large_n}
Note that the latter % two 
	specification is data-driven  and thus  not directly covered by theorem \ref{theorem:randomWeights:large}.
The corresponding tuning parameter (number of top units) matches the empirical application; in the Online Appendix we explore the impact of this choice.

The performance of the minimum MSE estimator is benchmarked against  
the individual estimator of unit 1, the mean group estimator, as well as the unit averaging estimator based on AIC/BIC weights \citep{Buckland1997,Vaida2005}.   AIC and BIC generate the same likelihood-based weights, as each unit has the same number of coefficients.

\begin{figure}[!ht]
	\centering
% Use converted pdf figures for arxiv 
 	\includegraphics[width=\linewidth]{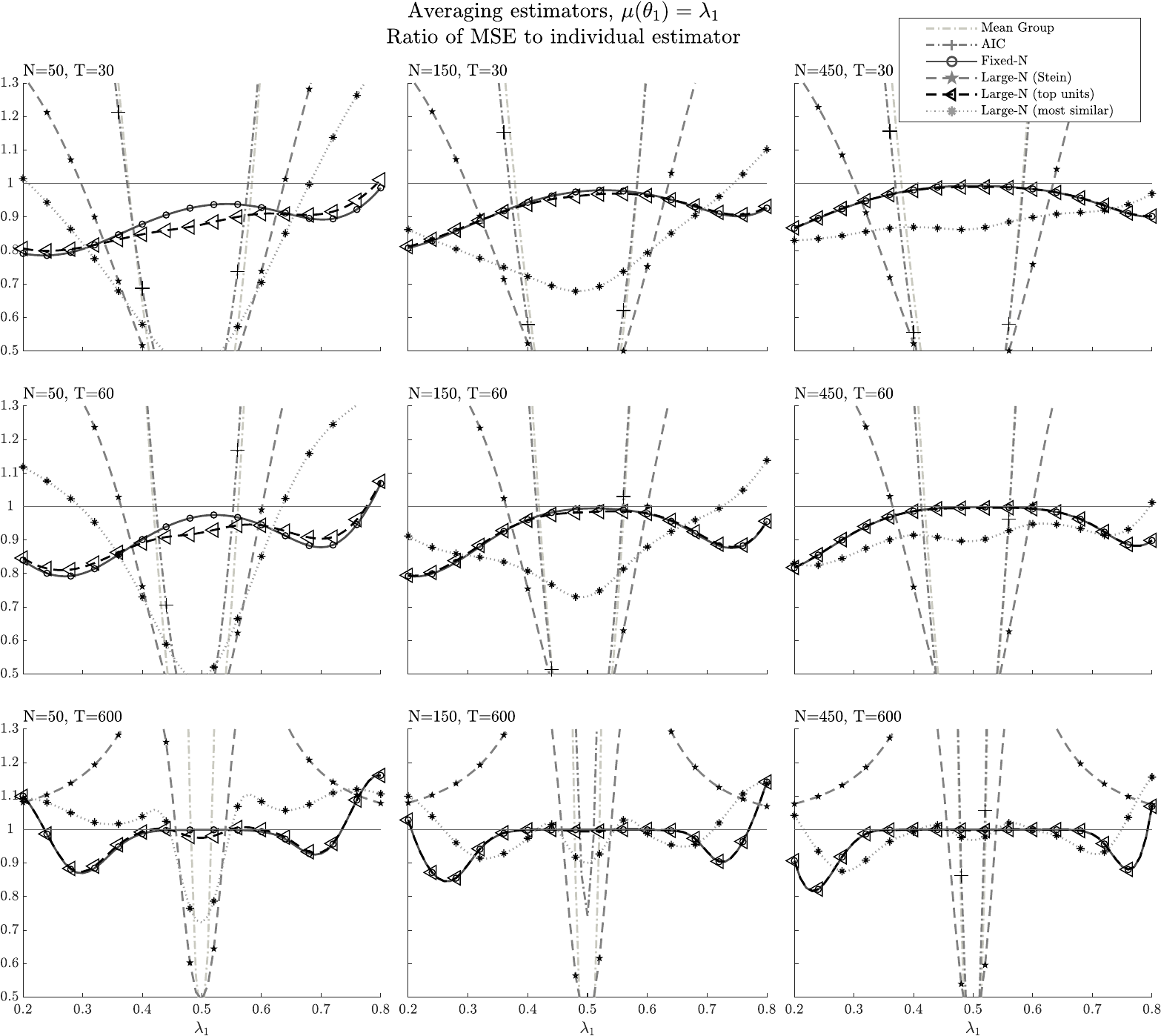}
 
	\caption{MSE of unit averaging estimators relative to the individual estimator }\label{figure:simulation:mse}
\end{figure}

\begin{figure}[!ht]
	\centering
	% Use converted pdf figures for arxiv 
	
	\includegraphics[width=\linewidth]{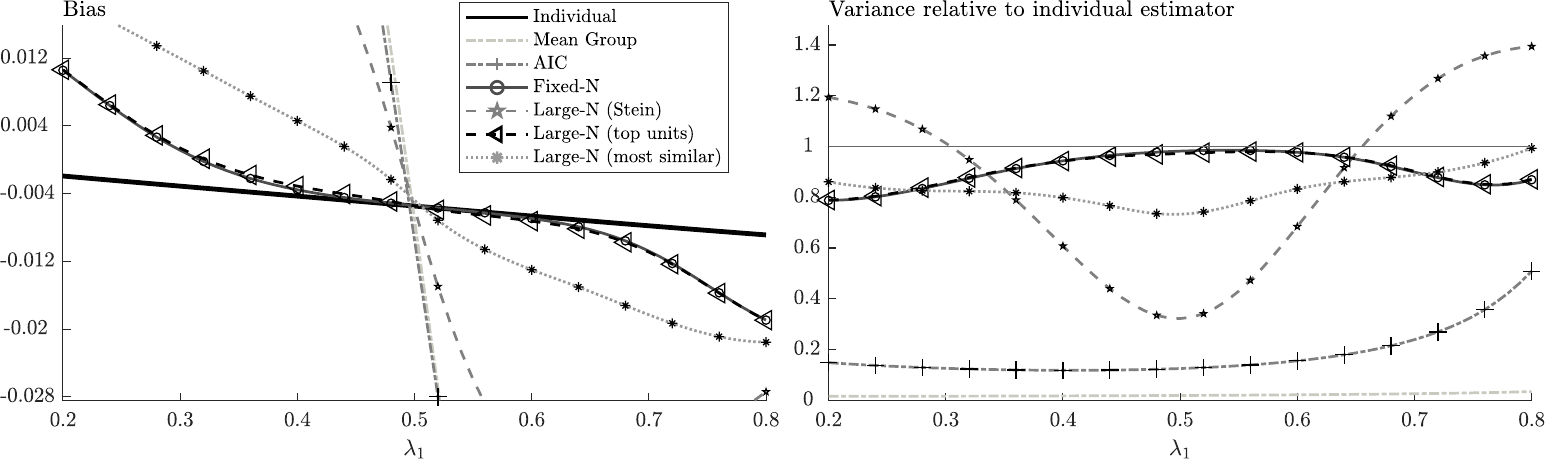}
	
	\caption{Bias and variance of unit averaging estimators for $N=150$, $T=60$. Left panel: bias. Right panel: variance relative to the individual estimators }\label{figure:simulation:bias-variance}
\end{figure}

Our key result is that the minimum MSE estimators generally have lower MSE for both moderate and large-$T$. % with stronger improvements for smaller values of $T$.
As fig.\ \ref{figure:simulation:mse} shows, all of the minimum MSE estimators (bar the Stein-like one)
 perform favorably throughout most  of the parameter space for all $(N, T)$.
 Gains in the MSE are  possible without prior information, as shown by the agnostic fixed-$N$ estimator, and the data-driven top unit large-$N$  specification. 
However, leveraging prior information may lead to stronger improvements for some parameter values (the ``most similar'' line).

Fig.\ \ref{figure:simulation:mse} 
shows a trade-off between stronger improvements for more typical values of $\lambda_1$ vs. for less typical ones (closer to $\E\left[\lambda_1\right]=0.5$ vs.   closer to the boundary of the support of $\lambda_1$).
This trade-off is controlled by the flexibility of the estimator,   determined by the number of free weights it has. 
Importantly, this trade-off is not identical to the bias variance trade-off
(fig.\ \ref{figure:simulation:bias-variance}).
More flexible estimators (such as the fixed-$N$ estimator) have uniformly lower bias for all values of $\lambda_1$. 
However, more flexible estimators also have lower variance for more extreme values of $\lambda_1$, while less flexible estimators have lower variance for $\lambda_1$ close to $\E[\lambda_1]$.
 
Increasing $N$ has a twofold effect.
First, it strictly improves the performance of the similarity-based large-$N$ estimator.
For larger $N$,  more  units will lie within any given neighborhood of $\lambda_1$ on average, reducing bias.
Second, more flexible estimators offer a stronger gain for less typical $\lambda_1$, as larger cross-sections will have more units with similar $\lambda_i$.
At the same time, the region around $\E[\lambda_i]$ in which improvements are modest grows.
 
\begin{figure}[!ht]
	\centering
	% Use converted pdf figures for arxiv 
	
	\includegraphics[width=\linewidth]{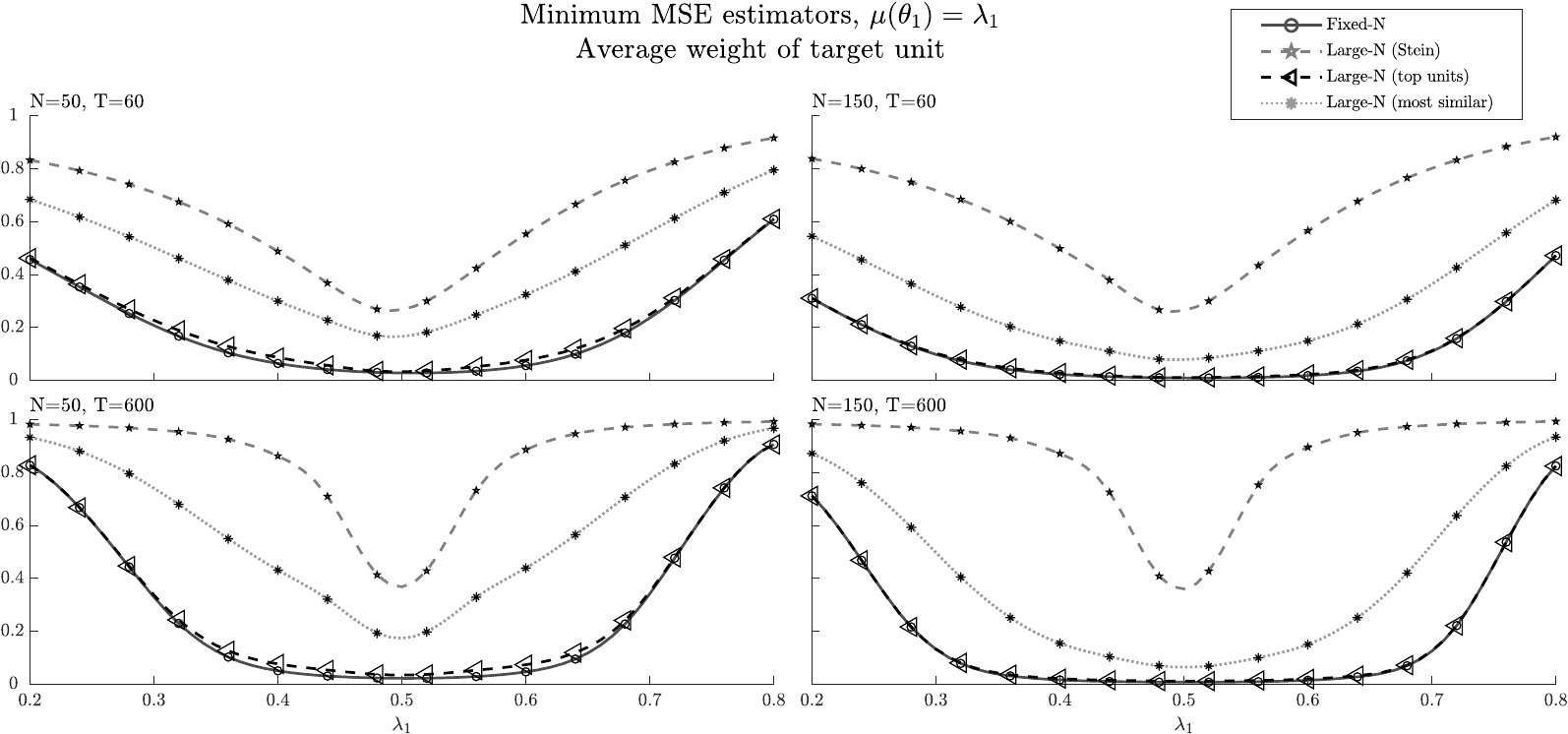}
	
	\caption{Average weight of target unit (unit 1). Select values of $(N, T)$}\label{figure:simulation:own_weight}
\end{figure}

Gains in MSE are strongest for smaller values of $T$. 
The impact is not symmetric around $\E[\lambda_1]=0.5$, with stronger improvements in the left tail than in the right one. 
This asymmetry is due to the increase in the convergence rate of the individual estimator as $\lambda_1$ moves into a near-unit root region.  
 	At the extreme, if $\lambda_1\approx0.8$, most of the other units will have smaller values of $\lambda_i$. Their own individual estimators will converge at a rate closer to $T^{-1/2}$. Accordingly, for larger values of $T$, the variance of the individual estimator of unit $1$ may be significantly smaller than the  variance of most of the individual estimators.
 	 This effect has little impact for $T=30, 60$, but is more notable for $T=600$.

As $T$ increases, more weight is placed on the individual estimator of unit 1, in line with the discussion after theorem \ref{theorem:randomWeights:fixed} (fig.\ \ref{figure:simulation:own_weight}). 
 This effect is more pronounced in smaller cross-sections, for less flexible estimators,  more extreme values of $\lambda_1$, and values of $\lambda_1$ where the individual estimator is more efficient ($\lambda_1\approx 0.8$).

Additional simulation results are reported in the Online Appendix.
 We consider an additional data-driven large-$N$ specification, further focus parameters; perform simulations for the intermediate case $T=180$; analyze
 % how the performance of the large-$N$ estimators depends on 
 the choice of tuning parameters for large-$N$ estimators; and examine the estimated weights.  
The evidence emerging from these simulations is in line with the results presented above.

\section{Empirical Application}\label{section:empirical}

We illustrate our averaging methodology with an application to forecasting  monthly unemployment rates for a panel of German regions.
This setting provides a natural application for two reasons. First, the unemployment dynamics of German regions are heterogeneous  due to differences in sectoral composition, regional laws, and historical trends such as the East-West divide \citep{Graaff2018}. At the same time, using data on other regions at least partially improves prediction. \citep{Schanne2010}. 
	Second, the performance of our methodology can be explicitly measured against realized unemployment rates.
Our application contributes to the growing literature on forecasting regional unemployment 
\citep{Schanne2010, Patuelli2012, Wozniak2020, Aaronson2022}.

The regions of interest are the 150  German labor market districts (\emph{Arbeitsagenturbezirke}, AABs) of the German Federal Employment Agency. % (\emph{Bundesagentur für Arbeit}, BA).
Each AAB is medium-size region,  between a NUTS-2 and a NUTS-3 region in size.
Together, the 150 AABs cover all of Germany.
The AABs are  grouped into 10 regional directorates (RDs).
These RDs correspond either to German federal states or unions of two states (NUTS-2).

We make use of monthly AAB-, RD-, and Germany-wide seasonally adjusted unemployment data from May 2007 to February 2024 (a total of 202 time series observations).
The resulting panel is  balanced with $N=150$.
All data is freely available from the Federal Employment Agency.

We model the AAB-level unemployment rate as a function of the past values of AAB-, RD-, and national-level unemployment rates.
Specifically, let  $y_{i\, t}^{AAB}$ be the unemployment rate in the $i$th AAB at month $t$. Let $y_{i\, t}^{RD}$ be the unemployment rate of the RD to which the $i$th AAB belongs. Finally, let $y_{t}^{DE}$ be the unemployment rate in Germany.
Then $y_{i\, t}^{AAB}$ is modeled as:
\begin{equation}\label{equation:german_model}
	y_{i\,t}^{AAB} = \theta_{i0} + \theta_{i1} y_{i\, t-1}^{AAB} +  \theta_{i2} y_{i\, t-1}^{RD} + \theta_{i3} y_{t-1}^{DE} + u_{i\,t}, \quad \E\left[u_{i\,t}|y_{i-1, t}^{AAB}, y_{i-1, t}^{RD}, y_{i-1, t}^{DE} \right]=0.
\end{equation}
In model \eqref{equation:german_model}, we allow both idiosyncratic and regional dynamics to drive the AAB-level unemployment rate, following \cite{Schanne2010}.
These dynamics may be heterogeneous between AABs, and all coefficients are AAB-specific.

%\paragraph{Target parameter and relative MSE}t

For each AAB, we forecast $y_{i\,t}^{AAB}$ with its conditional mean $\E\left[y_{i\,t}^{AAB}|y_{i\, t-1}^{AAB}, y_{i\, t-1}^{RD}, y_{t-1}^{DE} \right]$ implied by eq.\ \eqref{equation:german_model}. 
Formally, the target parameter for the $i$th AAB in month $t$ is $\mu(\btheta_i) =  \theta_{i0} + \theta_{i1} y_{i\, t-1}^{AAB} +  \theta_{i2} y_{i\, t-1}^{RD} + \theta_{i3} y_{t-1}^{DE} $. Observe that the period $(t-1)$ unemployment rates are treated as part of the parameter $\mu$.
 
The key measure of interest in our study is the  out-of-sample forecasting MSE of our	 unit averaging approaches (see below). \label{page:ae_mse_empirical}
To estimate this MSE, we adopt a rolling-window approach.
% to estimating model \eqref{equation:german_model}.
%
The data is split into all possible contiguous subsamples of window sizes $T=40, 60$, and $80$ months (between 3 and 7 years of data).
On each window we estimate the individual parameters of eq.\  \eqref{equation:german_model} with OLS.
 We compute the one-step-ahead out-of-sample unit averaging forecasts and record the forecast error. 
These errors are used to estimate the MSE for each AAB and averaging approach.
Note that estimating the MSE from rolling windows implicitly assumes that individual parameters are stable over time, see remark \ref{remark:empirical_parameter_stability} below for evidence in favor of this.
We also note that 
the values of $T$ considered satisfy the heuristic criterion for moderate-$T$ of remark \ref{remark:moderate_t}. The average $t$-statistic across coefficients, AABs, and $T$s is approximately 2.
  
We estimate the conditional mean  using our fixed-$N$ and   large-$N$ minimum MSE estimators.
For the large-$N$ approach, we consider  two % three 
specifications. 
For the Stein-like specification, only the target AAB is unrestricted.
For the top units specification, we first run the fixed-$N$ estimator. The 15 AABs (10\% of total) with the largest weights are set as unrestricted units, and the rest are restricted, and the large-$N$  estimator is then ran
(see also the discussion in section \ref{section:simulation}).   The choice of the number of top units is explored in the Online Appendix.  
The pre-averaging fixed-$N$ procedure is done for every AAB in every window subsample.
The performance of our minimum MSE unit averaging estimator is benchmarked against  the individual, mean group, and AIC-weighted averaging estimators.
\label{page:empirical:large_n}

\begin{figure}[!ht]
	
	\includegraphics[width=12.9cm]{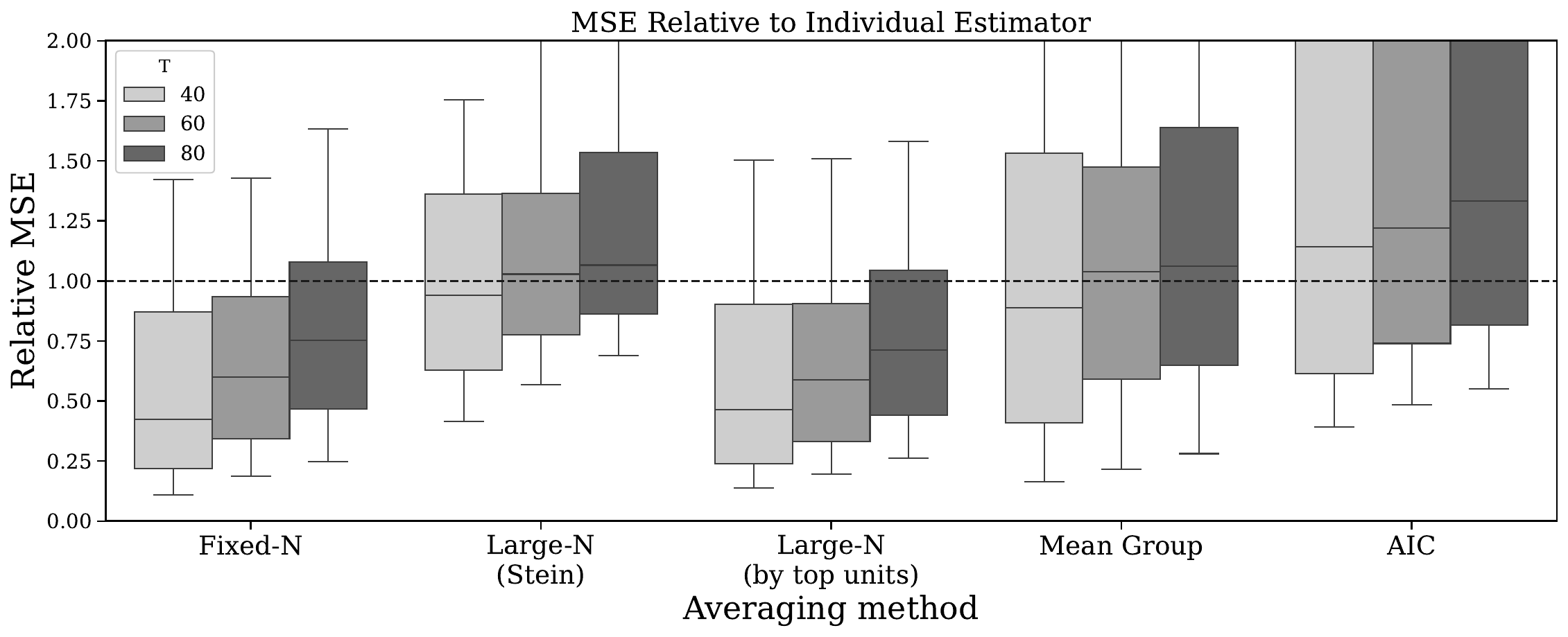}
		\includegraphics[width=3.25cm]{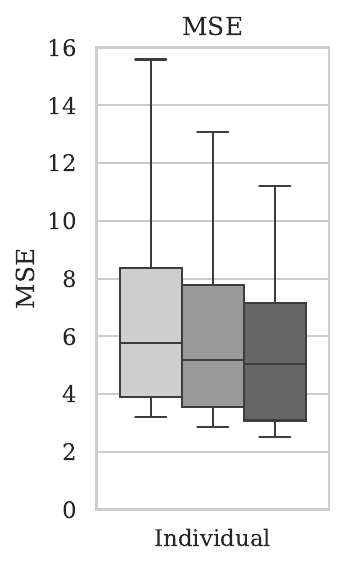}
	\caption{\footnotesize  Left panel: distribution of relative MSEs across AABs. Split by different averaging strategies
		and estimation window size. Right panel: (absolute) MSE of the individual estimator. Both:
		 whiskers -- 10th and 90th percentiles; box boundaries -- 25th and 75th percentiles; box crossbar -- median.}\label{figure:german_mse_box}
\end{figure}
 
\begin{table}[!ht]
	\centering
	\begin{tabular}{l||c|c|c|c|c}
		%	\hline
		T &  Fixed-N & \makecell{Large-N\\(Stein)} & \makecell{Large-N\\(top units)}   & Mean Group & AIC  \\ \hline \hline
		%	T    &         &                &                         &                        &            &      \\ \hline
		40        & 0.62    & 1.08           & 0.66                                      & 1.05      & 2.56 \\ \hline
		60        & 0.76    & 1.21           & 0.76                                    & 1.20       & 2.35 \\ \hline
		80        & 0.91    & 1.34           & 0.87                                   & 1.33       & 2.02 \\ \hline
	\end{tabular}
	\caption{Average (across AAB) MSE of unit averaging estimators relative to the individual estimator.  }\label{table:german_average_relative_mse}
\end{table}

\begin{figure}[!ht]
 
 \includegraphics[width=\linewidth]{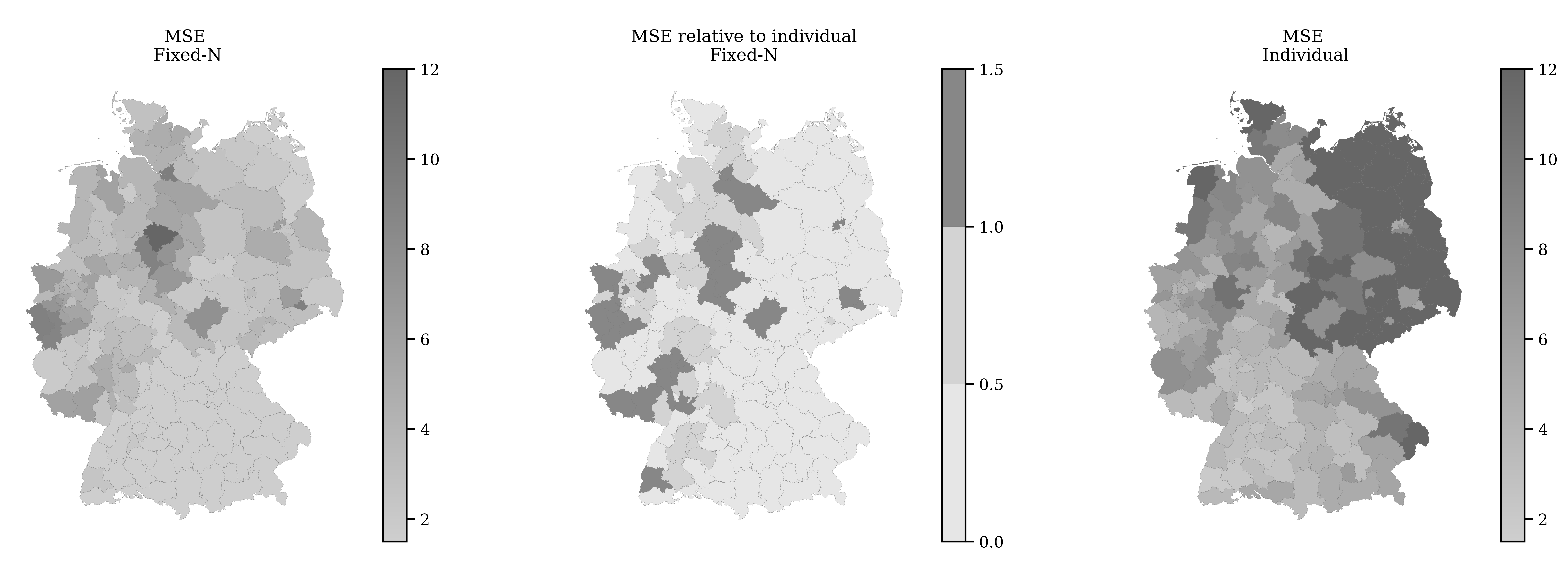}
 
	\caption{Geographic distribution of MSE to $T=40$. Thin lines denote borders of AABs. Left and right panels: MSE of minimum MSE fixed-N and individual estimators respectively. Middle panel: ratio of MSE of fixed-N estimator to individual estimator.}\label{figure:german_mse_map}
\end{figure}

\begin{figure}[!ht]

\includegraphics[width=\linewidth]{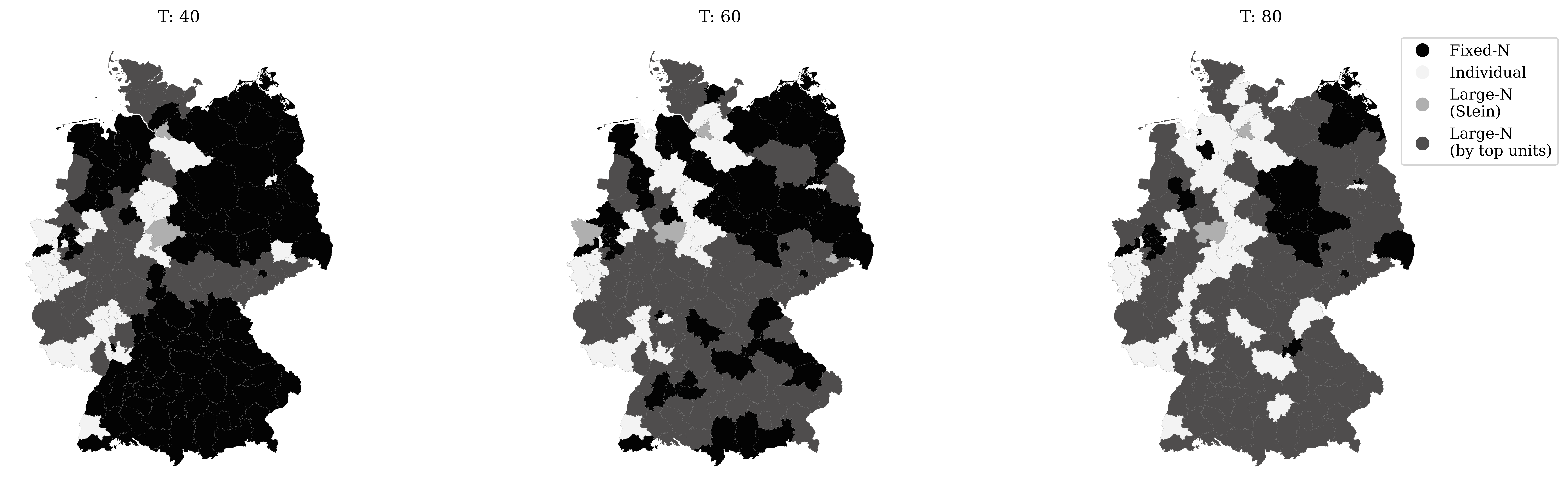}

	\caption{Best averaging approach for every AAB for $T=40, 60, 80$. Thin lines denote borders of AABs.  }\label{figure:german_best_map}
\end{figure}

Figures \ref{figure:german_mse_box}-\ref{figure:german_best_map}  visualize our results for the MSE.
Fig.\ \ref{figure:german_mse_box} provides a box plot for the MSE for all averaging approaches relative to the MSE of the individual estimator, along with a box plot of the (absolute) MSE of the individual estimator.
Table \ref{table:german_average_relative_mse} complements fig.\ \ref{figure:german_mse_box} with the average relative MSEs.
The underlying geographic distribution of the MSE  is plotted on fig.\ \ref{figure:german_mse_map} for $T=40$.
Finally, on fig.\ \ref{figure:german_best_map} we compare the individual and the minimum MSE estimators, and depict the best performing approach for each AAB and each value of $T$.
Maps for all of the averaging approaches  and $T$ are provided in 
the Online Appendix.

Our key finding is that averaging with minimum MSE weights %of section \ref{section:methodology}
 generally improves forecasting performance.
For most AABs,
 at least one % of the
  minimum MSE approach outperforms the individual estimator for all $T$, as can be seen on fig.\ \ref{figure:german_best_map}.
The gain in MSE can be substantial, as fig.\ \ref{figure:german_mse_box} and table	 \ref{table:german_average_relative_mse} show.
These gains are stronger for regions where the individual estimator does relatively poorly (fig.\ \ref{figure:german_mse_map}); these regions are predominantly concentrated in the former East Germany.
The improvement is also stronger for smaller values of $T$, although it is also non-negligible even for $T=80$.

The fixed-$N$ and the top units large-$N$ minimal MSE estimators  emerge as the leading averaging approaches,  in line with the simulation evidence of section \ref{section:simulation}.
Both offer roughly similar gains in MSE (fig.\ \ref{figure:german_mse_box}).
For $T =40$, greater flexibility makes the fixed-$N$ approach the overall best, as fig.\ \ref{figure:german_best_map} shows.
For $T=80$, the leading option is the  top units large-$N$ estimator, which has only 15 unrestricted units.  

The other averaging methods considered perform somewhat worse.
Mean group and AIC weights do not improve forecasting performance on average, although  they
  offer an improvement for a non-trivial share of AABs.
The   Stein-like large-$N$ performs similarly to the mean group estimator, but with smaller variation in the MSEs across AABs.
 
\begin{remark}[Individual parameter stability]
	\label{remark:empirical_parameter_stability}
	Estimating AAB-level MSE implicitly requires that the unemployment rate dynamics of eq.\ \eqref{equation:german_model}  are stable over time.
	As our sample covers 2007-2024, the key possible threat to this stability is 
	the Covid-19 pandemic.
	However, we find no evidence of a corresponding change in dynamics.
First,	the literature finds that employment dynamics are stable across the pre-, intra-, and post-pandemic periods due to the strong German Kurzarbeit scheme, both on the regional \citep{Aiyar2021} and the national level \citep{Adams-Prassl2020, Casey2023}. 
Second, we find no statistical evidence of coefficient breaks with a joint Chow coefficient breakpoint test with a Bonferroni-corrected 5\% level critical value.
\end{remark}

\begin{remark}[Additional empirical results]
The Online Appendix contains further  results, including detailed maps of the MSE and results for several specifications of the top units approach. % %
We also  examine the averaging weights of the minimum MSE estimators. %, including the differences between fixed-$N$ and large-$N$ schemes. 

We also provide an application to  nowcasting quarterly GDP for a panel of European countries.
As above,
the minimum MSE estimator  improves nowcasting performance relative  to competing estimators. %
The gains are  larger for shorter panels.
% lowers the MSE  by 9\% on average, with the gains being larger for shorter panels.
 
\end{remark}

\section{Conclusions}\label{section:end}

In this work we introduce a unit averaging estimator to recover unit-specific parameters in a general class of panel data models with heterogeneous parameters.
The procedure consists in estimating the parameter of a given unit using a weighted average of all the unit-specific parameter estimators in the panel. The weights of the average are determined by minimizing an MSE criterion.
The paper studies the properties of the procedures using a local heterogeneity framework that builds upon the literature on frequentist model averaging \citep{Hjort2003,Hansen2008}.
An application to forecasting regional unemployment for a panel of German regions shows that the procedure performs favorably for prediction relative to a number of alternative procedures.

\begin{small}
%	\setlength{\bibsep}{0pt plus 0.3ex}
	
%	\linespread{1}
%	\singlespacing
	\bibliographystyle{abbrvnat.bst}
	\bibliography{eco.bib}

\begin{thebibliography}{45}
\providecommand{\natexlab}[1]{#1}
\providecommand{\url}[1]{\texttt{#1}}
\expandafter\ifx\csname urlstyle\endcsname\relax
  \providecommand{\doi}[1]{doi: #1}\else
  \providecommand{\doi}{doi: \begingroup \urlstyle{rm}\Url}\fi

\bibitem[Aaronson et~al.(2022)Aaronson, Brave, Butters, Fogarty, Sacks, and
  Seo]{Aaronson2022}
D.~Aaronson, S.~A. Brave, R.~A. Butters, M.~Fogarty, D.~W. Sacks, and B.~Seo.
\newblock {Forecasting Unemployment Insurance Claims in Realtime With Google
  Trends}.
\newblock \emph{International Journal of Forecasting}, 38\penalty0
  (2):\penalty0 567--581, 2022.
\newblock ISSN 01692070.
\newblock \doi{10.1016/j.ijforecast.2021.04.001}.

\bibitem[Adams-Prassl et~al.(2020)Adams-Prassl, Boneva, Golin, and
  Rauh]{Adams-Prassl2020}
A.~Adams-Prassl, T.~Boneva, M.~Golin, and C.~Rauh.
\newblock {Inequality in the Impact of the Coronavirus Shock: Evidence From
  Real Time Surveys}.
\newblock \emph{Journal of Public Economics}, 189:\penalty0 104245, 2020.
\newblock ISSN 00472727.
\newblock \doi{10.1016/j.jpubeco.2020.104245}.

\bibitem[Aiyar and Dao(2021)]{Aiyar2021}
S.~Aiyar and M.~C. Dao.
\newblock {The Effectiveness of Job-Retention Schemes: COVID-19 Evidence From
  the German States}.
\newblock \emph{IMF Working Papers}, 2021\penalty0 (242):\penalty0 1, 2021.
\newblock ISSN 1018-5941.
\newblock \doi{10.5089/9781513596174.001}.

\bibitem[Aliprantis and Border(2006)]{Aliprantis2006}
C.~D. Aliprantis and K.~C. Border.
\newblock \emph{{Infinite Dimensional Analysis: A Hitchhiker's Guide}}.
\newblock Springer Berlin Heidelberg, 3 edition, 2006.
\newblock ISBN 978-3-540-29586-0.
\newblock \doi{10.1007/3-540-29587-9}.

\bibitem[Baltagi(2013)]{Baltagi2013}
B.~H. Baltagi.
\newblock \emph{{Panel data forecasting}}, volume~2.
\newblock Elsevier B.V., 2013.
\newblock ISBN 9780444627315.
\newblock \doi{10.1016/B978-0-444-62731-5.00018-X}.

\bibitem[Baltagi et~al.(2008)Baltagi, Bresson, and Pirotte]{Baltagi2008a}
B.~H. Baltagi, G.~Bresson, and A.~Pirotte.
\newblock {To Pool or Not to Pool}.
\newblock In \emph{The Econometrics of Panel Data}, chapter~16, pages 517--546.
  Springer Berlin Heidelberg, 2008.
\newblock \doi{10.1007/978-3-540-75892-1_16}.

\bibitem[Bao and Ullah(2007)]{Bao2007}
Y.~Bao and A.~Ullah.
\newblock {The second-order bias and mean squared error of estimators in
  time-series models}.
\newblock \emph{Journal of Econometrics}, 140\penalty0 (2):\penalty0 650--669,
  2007.
\newblock ISSN 03044076.
\newblock \doi{10.1016/j.jeconom.2006.07.007}.

\bibitem[Buckland et~al.(1997)Buckland, Burnham, and Augustin]{Buckland1997}
S.~T. Buckland, K.~P. Burnham, and N.~H. Augustin.
\newblock {Model Selection: An Integral Part of Inference}.
\newblock \emph{Biometrics}, 53\penalty0 (2):\penalty0 603--618, 1997.
\newblock \doi{10.2307/2533961}.

\bibitem[Casey and Mayhew(2023)]{Casey2023}
B.~H. Casey and K.~Mayhew.
\newblock {Kurzarbeit/Short-Time Working: Experiences and Lessons from the
  COVID-Induced Downturn}.
\newblock \emph{National Institute Economic Review}, 263:\penalty0 47--60,
  2023.
\newblock \doi{10.1017/nie.2021.46}.

\bibitem[de~Graaff et~al.(2018)de~Graaff, Arribas-Bel, and Ozgen]{Graaff2018}
T.~de~Graaff, D.~Arribas-Bel, and C.~Ozgen.
\newblock {Demographic Aging and Employment Dynamics in German Regions:
  Modeling Regional Heterogeneity}.
\newblock In \emph{Modelling Aging and Migration Effects on Spatial Labor
  Markets}, chapter~11, pages 211--231. Springer Cham, 2018.
\newblock ISBN 978-3-319-68563-2.
\newblock \doi{10.1007/978-3-319-68563-2_11}.

\bibitem[Donohue et~al.(2011)Donohue, Overholser, Xu, and Vaida]{Donohue2011}
M.~C. Donohue, R.~Overholser, R.~Xu, and F.~Vaida.
\newblock {Conditional Akaike Information Under Generalized Linear and
  Proportional Hazards Mixed Models}.
\newblock \emph{Biometrika}, 98\penalty0 (3):\penalty0 685--700, 2011.
\newblock ISSN 00063444.
\newblock \doi{10.1093/biomet/asr023}.

\bibitem[Gao et~al.(2016)Gao, Zhang, Wang, and Zou]{Gao2016}
Y.~Gao, X.~Zhang, S.~Wang, and G.~Zou.
\newblock {Model averaging based on leave-subject-out cross-validation}.
\newblock \emph{Journal of Econometrics}, 192\penalty0 (1):\penalty0 139--151,
  2016.
\newblock ISSN 18726895.
\newblock \doi{10.1016/j.jeconom.2015.07.006}.

\bibitem[Hansen(2007)]{Hansen2007}
B.~E. Hansen.
\newblock {Least squares model averaging}.
\newblock \emph{Econometrica}, 75\penalty0 (4):\penalty0 1175--1189, 2007.
\newblock ISSN 00129682.
\newblock \doi{10.1111/j.1468-0262.2007.00785.x}.

\bibitem[Hansen(2008)]{Hansen2008}
B.~E. Hansen.
\newblock {Least-squares forecast averaging}.
\newblock \emph{Journal of Econometrics}, 146\penalty0 (2):\penalty0 342--350,
  2008.
\newblock ISSN 0304-4076.
\newblock \doi{10.1016/j.jeconom.2008.08.022}.

\bibitem[Hansen(2016)]{Hansen2016}
B.~E. Hansen.
\newblock {Efficient shrinkage in parametric models}.
\newblock \emph{Journal of Econometrics}, 190\penalty0 (1):\penalty0 115--132,
  2016.
\newblock ISSN 18726895.
\newblock \doi{10.1016/j.jeconom.2015.09.003}.

\bibitem[Hansen and Racine(2012)]{Hansen2012}
B.~E. Hansen and J.~S. Racine.
\newblock {Jackknife Model Averaging}.
\newblock \emph{Journal of Econometrics}, 167\penalty0 (1):\penalty0 38--46,
  2012.
\newblock \doi{10.1016/j.jeconom.2011.06.019}.

\bibitem[Hjort and Claeskens(2003{\natexlab{a}})]{Hjort2003}
N.~L. Hjort and G.~Claeskens.
\newblock {Frequentist Model Average Estimators}.
\newblock \emph{Journal of the American Statistical Association}, 98\penalty0
  (464):\penalty0 879--899, 2003{\natexlab{a}}.
\newblock ISSN 01621459.
\newblock \doi{10.1198/016214503000000828}.

\bibitem[Hjort and Claeskens(2003{\natexlab{b}})]{Hjort2003rejoinder}
N.~L. Hjort and G.~Claeskens.
\newblock {Rejoinder to the Focused Information Criterion and Frequentist Model
  Average Estimators}.
\newblock \emph{Journal of the American Statistical Association}, 98\penalty0
  (464):\penalty0 938--945, 2003{\natexlab{b}}.

\bibitem[Horn and Johnson(2012)]{Horn2012}
R.~A. Horn and C.~R. Johnson.
\newblock \emph{{Matrix Analysis}}.
\newblock Cambridge University Press, 2 edition, 2012.
\newblock ISBN 9780521548236.
\newblock \doi{10.1017/CBO9781139020411}.

\bibitem[Issler and Lima(2009)]{Issler2009}
J.~V. Issler and L.~R. Lima.
\newblock {A panel data approach to economic forecasting: The bias-corrected
  average forecast}.
\newblock \emph{Journal of Econometrics}, 152\penalty0 (2):\penalty0 153--164,
  2009.
\newblock ISSN 03044076.
\newblock \doi{10.1016/j.jeconom.2009.01.002}.

\bibitem[Kallenberg(2021)]{Kallenberg2021}
O.~Kallenberg.
\newblock \emph{{Foundations of Modern Probability}}.
\newblock Springer Cham, 3 edition, 2021.
\newblock ISBN 978-3-030-61870-4.
\newblock \doi{10.1007/978-3-030-61871-1}.

\bibitem[Lehmann and Romano(2022)]{Lehmann2022}
E.~L. Lehmann and J.~P. Romano.
\newblock \emph{{Testing Statistical Hypotheses}}.
\newblock Springer Cham, 4 edition, 2022.
\newblock \doi{10.1007/978-3-030-70578-7}.

\bibitem[Liu(2015)]{Liu2015}
C.-A. Liu.
\newblock {Distribution Theory of the Least Squares Averaging Estimator}.
\newblock \emph{Journal of Econometrics}, 186\penalty0 (1):\penalty0 142--159,
  2015.
\newblock ISSN 0148-396X.
\newblock \doi{10.1227/01.NEU.0000349921.14519.2A}.

\bibitem[Liu et~al.(2020)Liu, Moon, and Schorfheide]{Liu2020}
L.~Liu, H.~R. Moon, and F.~Schorfheide.
\newblock {Forecasting with Dynamic Pane Data Models}.
\newblock \emph{Econometrica}, 88\penalty0 (1):\penalty0 171--201, 2020.
\newblock \doi{10.2139/ssrn.2908529}.

\bibitem[Maddala et~al.(1997)Maddala, Trost, Li, and Joutz]{Maddala1997}
G.~S. Maddala, R.~P. Trost, H.~Li, and F.~Joutz.
\newblock {Estimation of Short-Run and Long-Run Elasticities of Energy Demand
  From Panel Data Using Shrinkage Estimators}.
\newblock \emph{Journal of Business and Economic Statistics}, 15\penalty0
  (1):\penalty0 90--100, 1997.

\bibitem[Maddala et~al.(2001)Maddala, Li, and Srivastava]{Maddala2001}
G.~S. Maddala, H.~Li, and V.~K. Srivastava.
\newblock {A Comparative Study of Different Shrinkage Estimators for Panel Data
  Models}.
\newblock \emph{Annals of Economics and Finance}, 2\penalty0 (1):\penalty0
  1--30, 2001.
\newblock ISSN 15297373.

\bibitem[Marcellino et~al.(2003)Marcellino, Stock, and Watson]{Marcellino2003}
M.~Marcellino, J.~H. Stock, and M.~W. Watson.
\newblock {Macroeconomic Forecasting in the Euro Area: Country Specific Versus
  Area-Wide Information}.
\newblock \emph{European Economic Review}, 47\penalty0 (1):\penalty0 1--18,
  2003.
\newblock ISSN 00142921.
\newblock \doi{10.1016/S0014-2921(02)00206-4}.

\bibitem[Patuelli et~al.(2012)Patuelli, Schanne, Griffith, and
  Nijkamp]{Patuelli2012}
R.~Patuelli, N.~Schanne, D.~A. Griffith, and P.~Nijkamp.
\newblock {Persistence of Regional Unemployment: Application of a Spatial
  Filtering Approach to Local Labor Markets in Germany}.
\newblock \emph{Journal of Regional Science}, 52\penalty0 (2):\penalty0
  300--323, 2012.
\newblock ISSN 00224146.
\newblock \doi{10.1111/j.1467-9787.2012.00759.x}.

\bibitem[Pesaran and Smith(1995)]{Pesaran1995}
M.~H. Pesaran and R.~P. Smith.
\newblock {Estimating long-run relationships from dynamic heterogeneous
  panels}.
\newblock \emph{Journal of Econometrics}, 6061:\penalty0 473--477, 1995.
\newblock ISSN 0045-9801.

\bibitem[Pesaran et~al.(1999)Pesaran, Shin, and Smith]{Pesaran1999}
M.~H. Pesaran, Y.~Shin, and R.~P. Smith.
\newblock {Pooled Mean Group Estimation of Dynamic Heterogeneous Panels}.
\newblock \emph{Journal of the American Statistical Association}, 94\penalty0
  (446):\penalty0 621--634, 1999.
\newblock ISSN 1537274X.
\newblock \doi{10.1080/01621459.1999.10474156}.

\bibitem[P{\"{o}}tscher and Prucha(1997)]{Potscher1997}
B.~M. P{\"{o}}tscher and I.~R. Prucha.
\newblock \emph{{Dynamic Nonlinear Econometric Models: Asymptotic Theory}}.
\newblock Springer, 1997.
\newblock ISBN 3662034867.

\bibitem[Raftery and Zheng(2003)]{Raftery2003}
A.~E. Raftery and Y.~Zheng.
\newblock {Discussion: Performance of Bayesian Model Averaging}.
\newblock \emph{Journal of the American Statistical Association}, 98\penalty0
  (464):\penalty0 931--938, 2003.
\newblock ISSN 01621459.
\newblock \doi{10.1198/016214503000000891}.

\bibitem[Rilstone et~al.(1996)Rilstone, Srivastava, and Ullah]{Rilstone1996}
P.~Rilstone, V.~K. Srivastava, and A.~Ullah.
\newblock {The Second-Order Bias and Mean Squared Error of Nonlinear
  Estimators}.
\newblock \emph{Journal of Econometrics}, 75\penalty0 (2):\penalty0 369--395,
  1996.
\newblock ISSN 03044076.
\newblock \doi{10.1016/0304-4076(96)89457-7}.

\bibitem[Rohatgi(1971)]{Rohatgi1971}
V.~K. Rohatgi.
\newblock {Convergence of Weighted Sums of Independent Random Variables}.
\newblock \emph{Mathematical Proceedings of the Cambridge Philosophical
  Society}, 69\penalty0 (2):\penalty0 305--307, 1971.
\newblock \doi{10.1017/S0305004100046685}.

\bibitem[Schanne et~al.(2010)Schanne, Wapler, and Weyh]{Schanne2010}
N.~Schanne, R.~Wapler, and A.~Weyh.
\newblock {Regional Unemployment Forecasts with Spatial Interdependencies}.
\newblock \emph{International Journal of Forecasting}, 26\penalty0
  (4):\penalty0 908--926, 2010.
\newblock ISSN 01692070.
\newblock \doi{10.1016/j.ijforecast.2009.07.002}.

\bibitem[Staiger and Stock(1997)]{Staiger1997}
D.~Staiger and J.~H. Stock.
\newblock {Instrumental Variables Regression with Weak Instruments}.
\newblock \emph{Econometrica}, 65\penalty0 (3):\penalty0 557, 1997.
\newblock ISSN 00129682.
\newblock \doi{10.2307/2171753}.

\bibitem[Vaida and Blanchard(2005)]{Vaida2005}
F.~Vaida and S.~Blanchard.
\newblock {Conditional Akaike Information for Mixed-Effects Models}.
\newblock \emph{Biometrika}, 92\penalty0 (2):\penalty0 351--370, 2005.
\newblock \doi{10.1093/biomet/92.2.351}.

\bibitem[{Van der Vaart} and Wellner(1996)]{VanderVaart1996}
A.~{Van der Vaart} and J.~A. Wellner.
\newblock \emph{{Weak Convergence and Empirical Processes}}.
\newblock Springer, 1996.
\newblock ISBN 978-1-4757-2547-6.
\newblock \doi{10.1007/978-1-4757-2545-2}.

\bibitem[Wan et~al.(2010)Wan, Zhang, and Zou]{Wan2010}
A.~T.~K. Wan, X.~Zhang, and G.~Zou.
\newblock {Least Squares Model Averaging by Mallows Criterion}.
\newblock \emph{Journal of Econometrics}, 156\penalty0 (2):\penalty0 277--283,
  2010.
\newblock ISSN 0304-4076.
\newblock \doi{10.1016/j.jeconom.2009.10.030}.

\bibitem[Wang et~al.(2019)Wang, Zhang, and Paap]{Wang2019}
W.~Wang, X.~Zhang, and R.~Paap.
\newblock {To pool or not to pool: What is a good strategy for parameter
  estimation and forecasting in panel regressions?}
\newblock \emph{Journal of Applied Econometrics}, 34\penalty0 (5):\penalty0
  724--745, 2019.
\newblock ISSN 10991255.
\newblock \doi{10.1002/jae.2696}.

\bibitem[Wozniak(2020)]{Wozniak2020}
M.~Wozniak.
\newblock {Forecasting the Unemployment Rate Over Districts With the Use of
  Distinct Methods}.
\newblock \emph{Studies in Nonlinear Dynamics and Econometrics}, 24\penalty0
  (2):\penalty0 657--666, 2020.
\newblock ISSN 15583708.
\newblock \doi{10.1515/snde-2016-0115}.

\bibitem[Yin et~al.(2021)Yin, Liu, and Lin]{Yin2019}
S.-Y. Yin, C.-A. Liu, and C.-C. Lin.
\newblock {Focused Information Criterion and Model Averaging for Large Panels
  with a Multifactor Error Structure}.
\newblock \emph{Journal of Business and Economic Statistics}, 39\penalty0
  (1):\penalty0 54--68, 2021.
\newblock \doi{10.1080/07350015.2019.1623044}.

\bibitem[Zhang and Liu(2019)]{Zhang2019b}
X.~Zhang and C.-A. Liu.
\newblock {Inference After Model Averaging in Linear Regression Models}.
\newblock \emph{Econometric Theory}, 35\penalty0 (4):\penalty0 816--841, 2019.
\newblock ISSN 14694360.
\newblock \doi{10.1017/S0266466618000269}.

\bibitem[Zhang and Liu(2024)]{Zhang2024}
X.~Zhang and C.-A. Liu.
\newblock {A Unified Approach to Focused Information Criterion and Plug-In
  Averaging Method}.
\newblock \emph{Statistica Sinica}, 34:\penalty0 771--792, 2024.
\newblock ISSN 10170405.
\newblock \doi{10.5705/ss.202021.0266}.

\bibitem[Zhang et~al.(2014)Zhang, Zou, and Liang]{Zhang2014}
X.~Zhang, G.~Zou, and H.~Liang.
\newblock {Model averaging and weight choice in linear mixed-effects models}.
\newblock \emph{Biometrika}, 101\penalty0 (1):\penalty0 205--218, 2014.
\newblock ISSN 00063444.
\newblock \doi{10.1093/biomet/ast052}.

\end{thebibliography}
\end{small}

\numberwithin{thm}{section}
\numberwithin{lemma}{section}

\setcounter{section}{0}
\renewcommand\thesection{A.\arabic{section}}
\numberwithin{equation}{section}

\part*{Proofs of Results in the Main Text}
  
Under assumption \ref{asm:local} we work conditional on $\curl{\bEta_1, \bEta_2, \dots}$. 
We use $\E[\cdot]$ to   denote the expectation operator conditional on $\curl{\bEta_1, \bEta_2, \dots}$, whereas 
$\E_{\bEta}[\cdot]$ is the expectation taken with respect the distribution of $\bEta$.
All results are shown to hold with probability one with respect to the distribution of $\bEta$ (denoted   $\bEta$-a.s.).
 
\section{Proof of Lemma \ref{lemma:individual}}

Recall that the data vector $\bz_{i\,t}$ takes values in $\mathcal{Z}\subset \R^d$ and define the data matrix $\bz_i = (\bz_{i\,1}',\ldots,\bz_{i\,T}')'$ that takes values in $ \mathcal{Z}^T= \prod_{t=1}^{T}\mathcal{Z}$.
Recall that that the parameter vector $\btheta= (\theta_1, \dots, \theta_p)$ takes values in $\Theta \subset \mathbb R^p$.
We denote by $\nabla m(\btheta, \bz_{i\,t})$ the gradient vector of $m$ with respect to $\btheta$,
by $\nabla^2 m(\btheta, \bz_{i\,t})$ the Hessian matrix of $m$ with respect to $\btheta$,
by $\nabla_{\theta_k} m(\btheta, \bz_{i\,t})$ the partial derivative of $m$ with respect to $\theta_k$, and 
by $\nabla^2_{\btheta\,\theta_k}$ the gradient vector of $\nabla_{\theta_k} m(\btheta, \bz_{i\,t})$ with respect to $\btheta$.

We establish a mean value theorem that does not require compactness of $\Theta$.

\begin{lemma}
	Suppose assumption \ref{asm:mestimation} is satisfied. 
	Then for each unit $i$, any $T$ and any $k=1,\ldots, p$ there exists a measurable function $\tilde{\btheta}_{i\,k}$ from $\mathcal{Z}^T$ to $\Theta$
	such that the individual estimator $\hat{\btheta}_i$ of eq. \eqref{equation:unitSpecificEstimator} satisfies
	\begin{align}
	& \dfrac{1}{T}\sum_{t=1}^T \nabla_{\theta_k}  m(\hat{\btheta}_i, \bz_{i\,t})  =  \dfrac{1}{T}\sum_{t=1}^T \nabla_{\theta_k} m({\btheta}_i, \bz_{i\,t}) \\
	& \quad   + \left[ \dfrac{1}{T}\sum_{t=1}^T \nabla^2_{\btheta\, \theta_k}  m(\tilde{\btheta}_{i\,k}, \bz_{i\,t})\right]'\left(\hat{\btheta}_i - \btheta_i \right)~, \label{equation:mvtStatement}
	\end{align}
	where $\tilde{\btheta}_{i\,k}$ lies on the segment joining $\hat{\btheta}_i$ and $\btheta_i$.
			
	Further, suppose \ref{asm:focus} is satisfied.
	Then for each $i$ and any $T$ there exist measurable functions $\bar{\btheta}_i$, $\acute{\btheta}_i$ and $\check{\btheta}_i$ from $\mathcal{Z}^T$ to $\Theta$ such that
	the individual estimator $\hat{\btheta}_i$ of eq. \eqref{equation:unitSpecificEstimator} satisfies
	\begin{align}
	\mu(\hat{\btheta}_i) & = \mu(\btheta_1) + \nabla \mu(\bar{\btheta}_i)' (\hat{\btheta}_i -\btheta_1 )~, \label{equation:firstOrderMuAroundUnitI} \\
	\mu(\hat{\btheta}_i) & = \mu(\btheta_1) + \bd'_1(\hat{\btheta}_i -\btheta_1 ) +\dfrac{1}{2}(\hat{\btheta}_i-\btheta_1)'\nabla^2 \mu(\acute{\btheta}_i ) (\hat{\btheta}_i-\btheta_1)~, \label{equation:secondOrderMuAroundUnit1} \\
	\mu(\hat{\btheta}_i)	 &  =	\mu(\btheta_i) + \bd'_i(\hat{\btheta}_i -\btheta_i) +\dfrac{1}{2}(\hat{\btheta}_i-\btheta_i)'\nabla^2 \mu(\check{\btheta}_i) (\hat{\btheta}_i-\btheta_i)~,\label{equation:secondOrderMuAroundUnitI}
	\end{align}
	where  $\bd_1=\nabla \mu(\btheta_1)$; $\bar{\btheta}_i$ and $\acute{\btheta}_i$ lie on the segment joining $\hat{\btheta}_i$ and $\btheta_1$; and    $\check{\btheta}_i$ lies on the segment joining $\hat{\btheta}_i$ and $\btheta_i$.
	\label{lemma:measurableSelector}
\end{lemma}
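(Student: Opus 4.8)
The plan is to reduce each of the four displayed identities to a one-dimensional Taylor expansion along the line segment joining the relevant pair of parameter values, and then to upgrade the classical pointwise-in-$\bz_i$ existence of an intermediate point to a \emph{measurable} selection of that point. Throughout I will use that $\Theta$ is convex (Assumption \ref{asm:mestimation}(i)), so every segment with endpoints in $\Theta$ lies in $\Theta$; that $\hat{\btheta}_i$ is a measurable function of $\bz_i$ (measurable maximum theorem, using that $m$ is Carath\'eodory by Assumption \ref{asm:mestimation}(ii)); and that $m(\cdot,\bz)$ and $\mu(\cdot)$ are sufficiently smooth (Assumptions \ref{asm:mestimation}(ii) and \ref{asm:focus}).

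For \eqref{equation:mvtStatement}, fix $i$, $k$ and $T$, write $\bv = \hat{\btheta}_i(\bz_i) - \btheta_i$, and set $\phi(s;\bz_i) = T^{-1}\sum_{t=1}^T \nabla_{\theta_k} m(\btheta_i + s\bv, \bz_{i\,t})$ for $s \in [0,1]$. Then $\phi$ is $C^1$ in $s$ with $\phi'(s;\bz_i) = [T^{-1}\sum_{t=1}^T \nabla^2_{\btheta\,\theta_k} m(\btheta_i + s\bv, \bz_{i\,t})]'\bv$, and the scalar mean value theorem yields some $s$ with $\phi(1) - \phi(0) = \phi'(s)$; choosing $\tilde{\btheta}_{i\,k} = \btheta_i + s\bv$ gives the identity. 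The expansions \eqref{equation:firstOrderMuAroundUnitI}--\eqref{equation:secondOrderMuAroundUnitI} are obtained the same way by parametrizing $\psi(s) = \mu(\btheta_{\star} + s(\hat{\btheta}_i - \btheta_{\star}))$ with base point $\btheta_{\star} \in \{\btheta_1, \btheta_i\}$: \eqref{equation:firstOrderMuAroundUnitI} follows from the first-order mean value theorem applied to $\psi$ with $\btheta_{\star} = \btheta_1$, while \eqref{equation:secondOrderMuAroundUnit1} and \eqref{equation:secondOrderMuAroundUnitI} follow from Taylor's theorem with Lagrange remainder applied to $\psi$ with $\btheta_{\star} = \btheta_1$ and $\btheta_{\star} = \btheta_i$ respectively, using $\psi'(0) = \nabla\mu(\btheta_{\star})'(\hat{\btheta}_i - \btheta_{\star})$ and $\psi''(s) = (\hat{\btheta}_i - \btheta_{\star})'\nabla^2\mu(\btheta_{\star} + s(\hat{\btheta}_i - \btheta_{\star}))(\hat{\btheta}_i - \btheta_{\star})$.

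The substantive step, and the one I expect to be the main obstacle, is to choose the intermediate point measurably in $\bz_i$ rather than merely pointwise. In each of the four cases the required $s$ solves an equation $h(s, \bz_i) = 0$, where $h$ is of Carath\'eodory type: continuous in $s$ for fixed $\bz_i$ (by the smoothness assumptions) and measurable in $\bz_i$ for fixed $s$ (since $\hat{\btheta}_i$ is measurable). The solution set $S(\bz_i) = \{s \in [0,1] : h(s,\bz_i) = 0\}$ is nonempty by the classical results just invoked and closed by continuity, so $s^{*}(\bz_i) \coloneqq \min S(\bz_i)$ is well defined. I would verify that $s^{*}$ is measurable by noting that for each $c \in [0,1]$ the map $\bz_i \mapsto \inf_{s \in [0,c]} \abs{h(s,\bz_i)}$ is measurable — evaluate the infimum over a countable dense subset of $[0,c]$ and invoke continuity in $s$ — so that $\{s^{*} \le c\} = \{\bz_i : \inf_{s\in[0,c]}\abs{h(s,\bz_i)} = 0\}$ is measurable; alternatively one may appeal to the Kuratowski--Ryll-Nardzewski selection theorem for the closed-valued measurable correspondence $\bz_i \mapsto S(\bz_i)$. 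It then remains only to observe that $\tilde{\btheta}_{i\,k}(\bz_i) = \btheta_i + s^{*}(\bz_i)\bv$, and likewise $\bar{\btheta}_i$, $\acute{\btheta}_i$, $\check{\btheta}_i$, are measurable as compositions of measurable maps, take values in $\Theta$ by convexity, and lie on the stated segments by construction.
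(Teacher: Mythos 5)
Your proposal is correct and follows essentially the same route as the paper: reduce each identity to a one-dimensional mean value/Taylor expansion along the segment (valid by convexity of $\Theta$), observe that the intermediate point solves $h(s,\bz_i)=0$ for a Carath\'eodory function $h$ whose zero set is a nonempty closed-valued measurable correspondence, and select measurably — the paper does exactly this and invokes the Kuratowski--Ryll-Nardzewski theorem, which you also cite as an alternative. Your additional hands-on construction of the selector as $\min S(\bz_i)$, with measurability checked via countable infima, is a valid elementary substitute for KRN but does not change the substance of the argument.
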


\begin{proof}
	Fix $k\in \curl{1, \dots, p}$ and define the function $f_i: \mathcal{Z}^T\times [0, 1]\to \R$ as\begin{align}
	f_i( \bz_i, y) &  = \dfrac{1}{T}\sum_{t=1}^T \nabla_{\theta_k} m(\hat{\btheta}_i, \bz_{i\,t}) - \dfrac{1}{T}\sum_{t=1}^T \nabla_{\theta_k}  m({\btheta}_i, \bz_{i\,t}) \\ 
	& \quad  -  \left[ \dfrac{1}{T}\sum_{t=1}^T \nabla^2_{\btheta\, \theta_k}  m(y \hat{\btheta}_i + (1-y)\btheta_i, \bz_{i})\right]'(\hat{\btheta}_i - \btheta_i ) ~.
	\end{align}
	\ref{asm:mestimation} implies that $f_i$ is well-defined, as for each $y \in [0,1]$ we have that $y \hat{\btheta}_i + (1-y)\btheta_i \in \Theta$. $f_i$  is a measurable function of $\bz_i$ for every fixed value $y\in [0, 1]$, as $\hat{\btheta}_i$ and $m$ are measurable functions of $\bz_i$ and $m$ is continuously differentiable in $\btheta$. $f_i$ 
	is a continuous function of $y$ for every value of $\bz_i$.

	Define the correspondence $\varphi_i: \mathcal{Z}^T\to [0, 1]$ as $\varphi_i(\bz_i) = \curl*{y\in [0, 1]: f_i(\bz_i, y)=0 }$.
	The function $f_i$ satisfies the assumptions of corollary 18.8 in \cite{Aliprantis2006}, and so $\varphi_i$ is a measurable correspondence.
	 $\varphi_i(\bz_i)$ is nonempty for every $\bz_i$, as 
	by the mean value theorem, for every fixed value of $\bz_i$ there exists some $\tilde y\in [0, 1]$ such that 
\begin{align}
\dfrac{1}{T}\sum_{t=1}^T \nabla_{\theta_k} m(\hat{\btheta}_i, \bz_{i\,t}) & =  \dfrac{1}{T}\sum_{t=1}^T \nabla_{\theta_k} m({\btheta}_i, \bz_{i\,t})
\\ & \quad +  \left[ \dfrac{1}{T}\sum_{t=1}^T \nabla^2_{\btheta,\theta_k}  m(\tilde y \hat{\btheta}_i + (1-\tilde y)\btheta_i, \bz_{i\,t})\right]'\left(\hat{\btheta}_i - \btheta_i \right).
\end{align}
 In addition,  $\varphi_i(\bz_i)$ is closed for every $\bz_i$ as  $m$ is twice continuously differentiable in $\btheta$ by assumption \ref{asm:mestimation}.
Then by the Kuratowski-Ryll-Nardzewski  measurable selection theorem (theorem 18.13 in \cite{Aliprantis2006}), $\varphi_i(\bz_i)$ admits a measurable selector $\tilde{y}_{i\,k}= \tilde{y}_{i\, k}(\bz_i)$. 
Finally, define $\tilde{\btheta}_{i\,k}= \tilde{y}_{i\,k}\hat{\btheta}_i+ (1-\tilde{y}_{i\,k})\btheta_i$ and note that $\tilde{\btheta}_{i\,k}$ satisfies the requirements of the lemma.
This establishes the first claim of the lemma.

The proof of the second claim of the lemma is analogous.
\end{proof}

The following lemma is needed to prove lemmas \ref{lemma:individual} and \ref{lemma:individualMoments}.

\begin{lemma}
	Suppose \ref{asm:mestimation} is satisfied. 
	Let $\tilde{\btheta}_{i\,j} : \mathcal Z^T \rightarrow \mathbb R^p$ for $j=1, \dots, p$ be a sequence of measurable functions that lie on the segment joining $\btheta_i$ and $\hat{\btheta}_i$ and define 
	\begin{equation}
	\hat{\bH}_{i\,T}   = \begin{bmatrix}
	\left[ \dfrac{1}{T}\sum_{t=1}^T \nabla^2_{\btheta, \theta_1} m(\tilde{\btheta}_{i\, 1}, \bz_{i\,t})\right]'\\
	\cdots\\
	\left[ \dfrac{1}{T}\sum_{t=1}^T \nabla^2_{\btheta, \theta_p} m(\tilde{\btheta}_{i\, p}, \bz_{i\,t})\right]'
	\end{bmatrix}.
	\end{equation}
	Then for all $T>T_0$ the matrix $\hat{\bH}_{i\, T}$ (i) is a.s.~nonsingular and (ii) satisfies
	\begin{equation} 
	\E\left[\norm{\bH_i^{-1} - \hat{\bH}_{i\,T}^{-1} }_{\infty}^{\frac{2(2+\delta)(1+\delta)}{\delta} }\right]  \leq p^{\frac{(2+\delta)(1+\delta)}{\delta} } \underline{\lambda}_{\bH}^{-\frac{2(2+\delta)(1+\delta)}{\delta} } C_{\nabla^2 m} ~,
	\end{equation}
	where $\bH_i= \lim_{T\to\infty}\E\left[ T^{-1}\sum_{t=1}^T \nabla^2 m( \btheta_i, \bz_{i\,t})\right]$.
	\label{lemma:nonsingularityHatH}
\end{lemma}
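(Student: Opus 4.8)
The plan is to control $\hat{\bH}_{i\,T}$ by comparing $\hat{\bH}_{i\,T}\bH_i^{-1}$ with the identity, row by row. Since $\nabla^2 m(\cdot,\cdot)$ is symmetric (assumption \ref{asm:mestimation}(ii)), the $k$-th row of $\hat{\bH}_{i\,T}$ is exactly the $k$-th row of the full sample Hessian $T^{-1}\sum_{t=1}^T\nabla^2 m(\tilde{\btheta}_{i\,k},\bz_{i\,t})$, evaluated at a point $\tilde{\btheta}_{i\,k}$ lying on the segment $[\btheta_i,\hat{\btheta}_i]$. Right-multiplication by $\bH_i^{-1}$ and subtraction of $\bI$ act row by row, so the $k$-th absolute row sum of $\hat{\bH}_{i\,T}\bH_i^{-1}-\bI$ coincides with that of $\big(T^{-1}\sum_{t=1}^T\nabla^2 m(\tilde{\btheta}_{i\,k},\bz_{i\,t})\big)\bH_i^{-1}-\bI$, which is at most $D_{i\,T}$ by the very definition of $D_{i\,T}$ in assumption \ref{asm:mestimation}(vii). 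Taking the maximum over $k=1,\dots,p$ gives $\norm{\hat{\bH}_{i\,T}\bH_i^{-1}-\bI}_\infty\le D_{i\,T}$. Since $D_{i\,T}<1$ a.s.\ for $T>T_0$, a Neumann-series argument shows $\hat{\bH}_{i\,T}\bH_i^{-1}$ is a.s.\ invertible, and multiplying by the invertible $\bH_i$ (assumption \ref{asm:mestimation}(viii)) yields claim (i).

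For claim (ii), set $\bM=\bI-\hat{\bH}_{i\,T}\bH_i^{-1}$, so that $\norm{\bM}_\infty\le D_{i\,T}$ and $\hat{\bH}_{i\,T}^{-1}=\bH_i^{-1}(\bI-\bM)^{-1}$. Then $\bH_i^{-1}-\hat{\bH}_{i\,T}^{-1}=-\bH_i^{-1}(\bI-\bM)^{-1}\bM$, so by submultiplicativity of $\norm{\cdot}_\infty$ and the Neumann bound $\norm{(\bI-\bM)^{-1}}_\infty\le(1-\norm{\bM}_\infty)^{-1}\le(1-D_{i\,T})^{-1}$,
\[
\norm{\bH_i^{-1}-\hat{\bH}_{i\,T}^{-1}}_\infty\le\norm{\bH_i^{-1}}_\infty\,\frac{D_{i\,T}}{1-D_{i\,T}}.
\]
Next, $\bH_i$ is symmetric with $\lambda_{\min}(\bH_i)\ge\underline{\lambda}_{\bH}>0$ by assumption \ref{asm:mestimation}(viii), hence $\norm{\bH_i^{-1}}_\infty\le\sqrt{p}\,\norm{\bH_i^{-1}}_2=\sqrt{p}/\lambda_{\min}(\bH_i)\le\sqrt{p}/\underline{\lambda}_{\bH}$.

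Writing $q=2(2+\delta)(1+\delta)/\delta$, raising the previous display to the power $q$ and taking $\E[\cdot]$ gives
\[
\E\big[\norm{\bH_i^{-1}-\hat{\bH}_{i\,T}^{-1}}_\infty^{q}\big]\le p^{q/2}\underline{\lambda}_{\bH}^{-q}\,\E\Big[\Big(\tfrac{D_{i\,T}}{1-D_{i\,T}}\Big)^{q}\Big]\le p^{q/2}\underline{\lambda}_{\bH}^{-q}C_{\nabla^2 m},
\]
where the final inequality is the moment bound in assumption \ref{asm:mestimation}(vii); since $q/2=(2+\delta)(1+\delta)/\delta$, this is precisely the asserted inequality.

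The main obstacle is the bookkeeping in the first step: $\hat{\bH}_{i\,T}$ is assembled from rows of Hessians evaluated at \emph{different} intermediate points $\tilde{\btheta}_{i\,k}$, yet each of these still lies on the common segment $[\btheta_i,\hat{\btheta}_i]$ over which the supremum defining $D_{i\,T}$ is taken; combined with the fact that $\norm{\cdot}_\infty$ is the maximum absolute row sum (so a uniform row-wise bound transfers directly to a bound on the matrix), this is what makes the clean estimate $\norm{\hat{\bH}_{i\,T}\bH_i^{-1}-\bI}_\infty\le D_{i\,T}$ go through. The remaining steps are routine Neumann-series and norm-equivalence estimates.
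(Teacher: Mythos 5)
Your proof is correct and follows essentially the same route as the paper: the row-by-row identification of $\hat{\bH}_{i\,T}\bH_i^{-1}-\bI$ with rows of $\big(T^{-1}\sum_t\nabla^2 m(\tilde{\btheta}_{i\,k},\bz_{i\,t})\big)\bH_i^{-1}-\bI$, the bound by $D_{i\,T}$, the perturbation estimate $\norm{\bH_i^{-1}-\hat{\bH}_{i\,T}^{-1}}_\infty\le\norm{\bH_i^{-1}}_\infty D_{i\,T}/(1-D_{i\,T})$ (which you derive via a Neumann series where the paper cites a textbook result), and the norm-equivalence step $\norm{\bH_i^{-1}}_\infty\le\sqrt{p}/\underline{\lambda}_{\bH}$ followed by the moment bound of assumption \ref{asm:mestimation}$(vii)$. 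No gaps.
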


\begin{proof}
	The proof of assertion $(i)$ is based on showing that $\norm{(\bH_i-\hat{\bH}_{i\, T}) \bH_i^{-1}}_{\infty}<1$ 
	holds almost surely, which implies that the matrix $\hat{\bH}_{i\, T}$ is a.s.~nonsingular.   This result follows from the standard observation that  if 
		$\norm{\bI-\bA}_{\infty}<1$, then $\bA$ is nonsingular. % Yes, it works for all induced norms https://math.stackexchange.com/questions/325891/left-cdot-right-is-an-induced-norm-if-left-a-right-1-how
		Write $I=\bH_i\bH_i^{-1}$ and $\bA = \hat{\bH}_{i\, T}\bH_i^{-1}$. Then   $\norm{\bI-\bA}_{\infty}= \norm{ (\bH_i-\hat{\bH}_{i\,T}) \bH_i^{-1}}_{\infty}<1$. The matrix $\bA$ is nonsingular, and $\hat{\bH}_{i\,T}= \bA\bH_i$ is a product of two nonsingular matrices.
		
	Let $\bH_i^{-1} = (h^{ij})$ and observe that 
	\begin{equation}
	\hat{\bH}_{i\, T}\bH_i^{-1}=\begin{bmatrix}
	\sum_{k=1}^p \nabla^2_{\theta_k\,\theta_1} m(\tilde{\btheta}_{i\,1}, \bz_{i\,t})h^{k1}   & \cdots & \sum_{k=1}^p \nabla^2_{\theta_k\,\theta_1} m(\tilde{\btheta}_{i\,1}, \bz_{i\,t})h^{kp} \\
	\sum_{k=1}^p \nabla^2_{\theta_k\,\theta_2} m(\tilde{\btheta}_{i\,2}, \bz_{i\,t})h^{k1}   & \cdots & \sum_{k=1}^p \nabla^2_{\theta_k\,\theta_2} m(\tilde{\btheta}_{i\,2}, \bz_{i\,t})h^{kp}\\ 
	\vdots  & \ddots & \vdots\\
	\sum_{k=1}^p \nabla^2_{\theta_k\,\theta_p} m(\tilde{\btheta}_{i\,p}, \bz_{i\,t})h^{k1}    & \cdots & \sum_{k=1}^p \nabla^2_{\theta_k\,\theta_p} m(\tilde{\btheta}_{i\,p}, \bz_{i\,t})h^{kp} 
	\end{bmatrix} ~.
	\end{equation}
	Row $j$ of $\hat{\bH}_{i\, T}\bH_i^{-1}-\bI$ coincides with row $j$ of $\left(T^{-1}\sum_{t=1}^T \nabla^2 m\left(\tilde{\btheta}_{i\,j}, \bz_{i\, t}\right)  \right)\bH^{-1}-\bI$. Then we have that
	\begin{align} 
	 \norm{(\bH_i-\hat{\bH}_{i\, T}) \bH_i^{-1}}_{\infty} & = \norm{	\hat{\bH}_{i\, T}\bH_i^{-1}- \bI}_{\infty}\\
	& \leq \max_{1\leq j\leq p} \norm{ \left(T^{-1}\sum_{t=1}^T \nabla^2 m(\tilde{\btheta}_{i\,j}, \bz_{i\, t}) \right)\bH^{-1}_i-\bI   }_{\infty}\\
	&   \leq \sup_{ \btheta \in [\btheta_i, \hat{\btheta}_i]}  \norm{ \left(T^{-1}\sum_{t=1}^T \nabla^2 m( {\btheta}, \bz_{i\, t}) \right)\bH^{-1}_i-\bI   }_{\infty}\\
	&  \equiv D_{i\, T} \label{equation:I-ADitInequality} ~,
	\end{align}
	where the second inequality holds as all $\tilde{\btheta}_{i\,j}$ lie on the segment joining $\btheta_i$ and $\hat{\btheta}_i$ and where $D_{i\, T}$ is defined in \ref{asm:mestimation}.
	\ref{asm:mestimation} implies $D_{i\, T}<1$ a.s.~for $T>T_0$, and thus  $\norm{(\bH_i-\hat{\bH}_{i\, T}) \bH_i^{-1}}_{\infty}<1$ a.s.\ for $T>T_0$, which implies the first claim.\\
	As $\hat{\bH}_{i\, T}$ is invertible  for $T>T_0$ we have \citep[section 5.8]{Horn2012}
	\begin{equation}
	\norm{\bH_i^{-1}- \hat{\bH}_{i\,T}^{-1} }_{\infty} \leq \norm{\bH_i^{-1}}_{\infty} \dfrac{\norm{ \bH^{-1}_i\hat{\bH}_{i\,T} - \bI}_{\infty} }{1- \norm{ \bH^{-1}_i\hat{\bH}_{i\,T} - \bI}_{\infty} } 
	\leq \norm{\bH_i^{-1}}_{\infty} \dfrac{D_{i\,T}}{1-D_{i\, T}} ~,
	\end{equation}
	where the last inequality follows from \eqref{equation:I-ADitInequality}.
	Taking expectations, we obtain that 
	\begin{align} 
	\E\left[\norm{\bH_i^{-1} - \hat{\bH}_i^{-1} }_{\infty}^{\frac{2(2+\delta)(1+\delta)}{\delta} }\right] & \leq \norm{\bH_i^{-1}}^{\frac{2(2+\delta)(1+\delta)}{\delta} }_{{\infty}} \E\left[\left( \dfrac{D_{i\, T}}{1-D_{i\, T}} \right)^{\frac{2(2+\delta)(1+\delta)}{\delta} }\right] \\
	& \leq p^{\frac{(2+\delta)(1+\delta)}{\delta} } \norm{\bH_i^{-1}}^{\frac{2(2+\delta)(1+\delta)}{\delta} } C_{\nabla^2 m}\\
	& \leq p^{\frac{(2+\delta)(1+\delta)}{\delta} } \underline{\lambda}_{\bH}^{-\frac{2(2+\delta)(1+\delta)}{\delta} } C_{\nabla^2 m}~,
	\end{align}
	which establishes the second claim.
	\end{proof}

	% Second inequality: exercise 5.6.P5 in \cite{Horn2012}: $\norm{A}_{\infty}\leq \sqrt{p}\norm{A}$

\begin{proof}[Proof of lemma \ref{lemma:individual}]
	\ref{asm:mestimation} and Lemma \ref{lemma:measurableSelector} imply that 
\begin{align} 
	0 &  =\dfrac{1}{T}\sum_{t=1}^T \nabla_{\theta_k}  m(\hat{\btheta}_i, \bz_{i\,t})  \\
	  & =  \dfrac{1}{T}\sum_{t=1}^T \nabla_{\theta_k}  m({\btheta}_i, \bz_{i\,t})   + \left[ \dfrac{1}{T}\sum_{t=1}^T \nabla^2_{\btheta, \theta_k} m(\tilde{\btheta}_{i\, k}, \bz_{i\,t})\right]'\left(\hat{\btheta}_i - \btheta_i \right) ~,
\end{align}
	where $\tilde{\btheta}_{i\,k}$ lies on the segment joining $\btheta_i$ and $\hat{\btheta}_i$. 	
	 Define the matrix
 	\begin{equation}\label{equation:definitionHhat}
	 \hat{\bH}_{i\,T}   = \begin{bmatrix}
	 \left[ \dfrac{1}{T}\sum_{t=1}^T \nabla^2_{\btheta, \theta_1} m(\tilde{\btheta}_{i\, 1}, \bz_{i\,t})\right]'\\
	 \cdots\\
	 \left[ \dfrac{1}{T}\sum_{t=1}^T \nabla^2_{\btheta, \theta_p} m(\tilde{\btheta}_{i\, p}, \bz_{i\,t})\right]'
	 \end{bmatrix}.
	 \end{equation}
	 As all $\hat{\btheta}_{i\,k}$ lie between $\btheta_i$ and $\hat{\btheta}_i$, by lemma \ref{lemma:nonsingularityHatH} the matrix $\hat{\bH}_{i\, T}$ is a.s.~nonsingular for $T>T_0$. 	 	
	Observe that $\hat{\btheta}_i - \btheta_i = (\hat{\btheta}_i -\btheta_1) - (\btheta_i-\btheta_1)$. Combining the above two observations, we obtain that for $T>T_0$ it holds that
	\begin{equation}
	\sqrt{T}\left(\hat{\btheta}_i-\btheta_1 \right)  = -\hat{\bH}_{i\, T}^{-1} \dfrac{1}{\sqrt{T}}\sum_{t=1}^T \nabla  m({\btheta}_i, \bz_{i\,t})   + (\bEta_i- \bEta_1).
	\end{equation}
	By assumption \ref{asm:mestimation} and lemma \ref{lemma:nonsingularityHatH}, it holds that
	\begin{equation}
	-\hat{\bH}_{i\, T}^{-1} \dfrac{1}{\sqrt{T}}\sum_{t=1}^T \nabla  m({\btheta}_i, \bz_{i\,t})  \Rightarrow N(0, \bV_i).
	\end{equation}
	The convergence is joint as all units are independent  by  \ref{asm:mom_and_dep}.\\ 
	The second assertion follows from the delta method and the observation that $\nabla \mu(\btheta_1)=\nabla \mu(\btheta_0+T^{-1/2}\bEta_1)\to \nabla \mu(\btheta_0)= \bd_0$ under the continuity assumption of \ref{asm:focus}.
\end{proof}

\section{Proof of Theorem \ref{theorem:risk}}

Before   presenting the proof of theorem \ref{theorem:risk} we introduce a number of intermediate results.

 \begin{lemma}\label{lemma:individualMoments}
 	Suppose \ref{asm:local} and \ref{asm:mestimation} are satisfied.
 	Let $\delta$ be as in \ref{asm:mestimation}. 
	Then there exist finite constants $C_{\hat{\btheta}, 1},  C_{\hat{\btheta}, 1+\delta/2}, C_{\hat{\btheta}, 2}, C_{\hat{\btheta}, 2+\delta}$, which do not depend on $i$ or $T$, such that the following moment bounds hold for the individual estimator \eqref{equation:unitSpecificEstimator} for all $T>T_0$
 	\begin{align} 
 	\E\left[\norm{ \sqrt{T}(\hat{\btheta}_i-\btheta_i)}^k\right]  	& \leq C_{\hat{\btheta}, k}~, \quad k=1,1+\delta/2, 2, 2+\delta,\\
 		  \E\left[\norm{\sqrt{T}(\hat{\btheta}_i-\btheta_1)}^2\right] &  \leq  C_{\hat{\btheta}, 2}   + 2C_{\hat{\btheta}, 1}\norm{\bEta_i-\bEta_1} + \norm{\bEta_i-\bEta_1}^2~.
 	\end{align}  
% 	All the $C$-constants do not depend on $i$.

 \end{lemma}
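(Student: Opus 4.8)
The plan is to combine three ingredients already in hand: the linearization of $\hat{\btheta}_i$ established in the proof of Lemma \ref{lemma:individual}, the deterministic spectral bound on $\bH_i^{-1}$ from \ref{asm:mestimation}(viii) together with the high-order moment bound on $\hat{\bH}_{i\,T}^{-1}-\bH_i^{-1}$ from Lemma \ref{lemma:nonsingularityHatH}, and the moment bound on the normalized score from \ref{asm:mestimation}(v); these are balanced via Hölder's inequality so that the exponents line up exactly, and the second display then follows from the local-heterogeneity decomposition of \ref{asm:local} together with Cauchy--Schwarz.

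First I would recall that for $T>T_0$ — so that $\hat{\btheta}_i\in\mathrm{int}(\Theta)$ a.s.\ by \ref{asm:mestimation}(iii) and $\hat{\bH}_{i\,T}$ is a.s.\ invertible by Lemma \ref{lemma:nonsingularityHatH} — the mean-value expansion \eqref{equation:mvtStatement} of Lemma \ref{lemma:measurableSelector}, stacked over coordinates exactly as in the proof of Lemma \ref{lemma:individual}, gives $\sqrt{T}(\hat{\btheta}_i-\btheta_i)=-\hat{\bH}_{i\,T}^{-1}\,T^{-1/2}\sum_{t=1}^T\nabla m(\btheta_i,\bz_{i\,t})$ with $\hat{\bH}_{i\,T}$ as in \eqref{equation:definitionHhat}. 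By the definition of the spectral norm, $\norm{\sqrt{T}(\hat{\btheta}_i-\btheta_i)}\le\norm{\hat{\bH}_{i\,T}^{-1}}_{\mathrm{op}}\norm{T^{-1/2}\sum_{t=1}^T\nabla m(\btheta_i,\bz_{i\,t})}$, and, using equivalence of the operator norm and the $\norm{\cdot}_\infty$ norm on the fixed space $\R^{p\times p}$ (with a constant $c_p$ depending only on $p$) and the triangle inequality, $\norm{\hat{\bH}_{i\,T}^{-1}}_{\mathrm{op}}\le c_p\big(\norm{\bH_i^{-1}}_\infty+\norm{\hat{\bH}_{i\,T}^{-1}-\bH_i^{-1}}_\infty\big)$. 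By \ref{asm:mestimation}(viii), $\norm{\bH_i^{-1}}_\infty\le c_p\underline{\lambda}_{\bH}^{-1}$ uniformly in $i$; combining this with Lemma \ref{lemma:nonsingularityHatH} and the bound $(a+b)^r\le 2^{r-1}(a^r+b^r)$ yields $\E[\norm{\hat{\bH}_{i\,T}^{-1}}_{\mathrm{op}}^r]\le C$ for $r=2(2+\delta)(1+\delta)/\delta$, with $C$ independent of $i$ and $T$, hence the same bound for every $r'\le r$ by Lyapunov's inequality.

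Next, for each $k\in\{1,1+\delta/2,2,2+\delta\}$ I would apply Hölder's inequality to $\E[\norm{\hat{\bH}_{i\,T}^{-1}}_{\mathrm{op}}^{k}\,\norm{T^{-1/2}\sum_{t=1}^T\nabla m(\btheta_i,\bz_{i\,t})}^{k}]$ with conjugate exponents $(p_k,q_k)$ chosen so that $kq_k=2(1+\delta)$, i.e.\ $q_k=2(1+\delta)/k$ and $p_k=2(1+\delta)/(2(1+\delta)-k)$. The score factor is then bounded by $C_{\nabla m}^{1/q_k}$ via \ref{asm:mestimation}(v), while the Hessian factor requires the moment $kp_k=2k(1+\delta)/(2(1+\delta)-k)$ of $\norm{\hat{\bH}_{i\,T}^{-1}}_{\mathrm{op}}$. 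Since $k\mapsto 2k(1+\delta)/(2(1+\delta)-k)$ is increasing on $(0,2(1+\delta))$ and $k\le 2+\delta<2(1+\delta)$, the largest required exponent occurs at $k=2+\delta$ and equals exactly $2(2+\delta)(1+\delta)/\delta=r$; thus all four values of $k$ are covered by the bound of the previous step. Collecting the resulting (uniform in $i$ and $T$) constants from the two factors defines $C_{\hat{\btheta},k}$.

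Finally, for the second display I would use \ref{asm:local} to write $\sqrt{T}(\hat{\btheta}_i-\btheta_1)=\sqrt{T}(\hat{\btheta}_i-\btheta_i)+(\bEta_i-\bEta_1)$, expand $\norm{\cdot}^2$, take the conditional expectation (under which $\bEta_i-\bEta_1$ is nonrandom), and bound the cross term by $\abs{\E[\sqrt{T}(\hat{\btheta}_i-\btheta_i)]'(\bEta_i-\bEta_1)}\le\norm{\E[\sqrt{T}(\hat{\btheta}_i-\btheta_i)]}\,\norm{\bEta_i-\bEta_1}\le C_{\hat{\btheta},1}\norm{\bEta_i-\bEta_1}$ (Cauchy--Schwarz and Jensen), which gives the claimed inequality. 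The main obstacle is the bookkeeping of exponents and constants: one must verify that the Hölder split never demands a higher moment of $\norm{\hat{\bH}_{i\,T}^{-1}}$ than Lemma \ref{lemma:nonsingularityHatH} supplies — this is precisely why the exponent $2(2+\delta)(1+\delta)/\delta$ appears in \ref{asm:mestimation}(vii) and Lemma \ref{lemma:nonsingularityHatH} — and that every intermediate constant is genuinely free of $i$ and $T$, which traces back to the uniformity of the constants in \ref{asm:mestimation}(v),(vii),(viii) and of $\underline{\lambda}_{\bH}$.
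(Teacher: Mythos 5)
Your proposal is correct and follows essentially the same route as the paper: the same linearization $\sqrt{T}(\hat{\btheta}_i-\btheta_i)=-\hat{\bH}_{i\,T}^{-1}T^{-1/2}\sum_t\nabla m(\btheta_i,\bz_{i\,t})$, the same reliance on Lemma \ref{lemma:nonsingularityHatH} and assumption \ref{asm:mestimation}$(v)$, and the same Hölder exponent bookkeeping that makes the worst case $k=2+\delta$ demand exactly the $2(2+\delta)(1+\delta)/\delta$ moment supplied for $\hat{\bH}_{i\,T}^{-1}-\bH_i^{-1}$. The only (immaterial) differences are that the paper first splits $\hat{\bH}_{i\,T}^{-1}=\bH_i^{-1}+(\hat{\bH}_{i\,T}^{-1}-\bH_i^{-1})$ and applies Hölder only to the deviation term, proving the $k=2+\delta$ case and deducing the lower moments by Jensen, whereas you bound $\norm{\hat{\bH}_{i\,T}^{-1}}$ wholesale and run Hölder once per $k$; the second display is handled identically in both.
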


\begin{proof}
	Let the matrix $\hat{\bH}_{i\, T}$ be defined as in eq.  \eqref{equation:definitionHhat}. 
	By lemma \ref{lemma:nonsingularityHatH} the matrix $\hat{\bH}_{i\, T}$ is non-singular for $T>T_0$. 	 Then, as in the proof of lemma \ref{lemma:individual},  for $T>T_0$ it holds that
	 \begin{align}
	  \sqrt{T}\left(\hat{\btheta}_i-\btheta_i \right) & = -\hat{\bH}^{-1}_{i\, T} \dfrac{1}{\sqrt{T}}\sum_{t=1}^T \nabla  m({\btheta}_i, \bz_{i\,t})\\
	 & = - \bH_{i}^{-1} \dfrac{1}{\sqrt{T}}\sum_{t=1}^T \nabla  m({\btheta}_i, \bz_{i\,t}) + \left(\bH_i^{-1} - \hat{\bH}_{i\, T}^{-1} \right)\dfrac{1}{\sqrt{T}}\sum_{t=1}^T \nabla  m({\btheta}_i, \bz_{i\,t}) ~,
	 \end{align}
	 where $\bH_i= \lim_{T\to\infty}\E\left(\nabla^2 T^{-1} \sum_{t=1}^T m( \btheta_i, \bz_{i\,t})\right)$.
	We separately bound the $(2+\delta)$-th moment of the norm for the two terms above. For the first term we have  
	\begin{align}
	& \E\left[ \norm{\bH_i^{-1} \dfrac{1}{\sqrt{T}}\sum_{t=1}^T \nabla  m({\btheta}_i, \bz_{i\,t}) }^{2+\delta} \right]\\
	& 
	\leq \E\left[ \norm{\bH_i^{-1}}^{2+\delta} \norm{\dfrac{1}{\sqrt{T}}\sum_{t=1}^T \nabla  m({\btheta}_i, \bz_{i\,t}) }^{2+\delta}     \right] \\
	&   \leq \norm{\bH_i^{-1}}^{2+\delta}  \E\left[ \norm{\dfrac{1}{\sqrt{T}}\sum_{t=1}^T \nabla  m({\btheta}_i, \bz_{i\,t}) }^{2+\delta}    \right]\\
	& 	\leq \underline{\lambda}_{\bH}^{-2-\delta} C_{\nabla m}^{ \frac{2+\delta}{2(1+\delta)} },
	\end{align}
	where the first inequality follows from $\norm{Ax}\leq \norm{A}\norm{x}$, and the  last line follows by assumption \ref{asm:mestimation} and 
	by Jensen's inequality.	
	
 	For the second term we have
\begin{align*}
 & 	\E\left[ \norm{\left(\bH_i^{-1} - \hat{\bH}_{i\, T}^{-1} \right)\dfrac{1}{\sqrt{T}}\sum_{t=1}^T \nabla  m({\btheta}_i, \bz_{i\,t})}^{2+\delta}  \right]\\
 & \leq  p^{\frac{2+\delta}{2}}	\E\left[ \norm{\left(\bH_i^{-1} - \hat{\bH}_{i\, T}^{-1} \right)\dfrac{1}{\sqrt{T}}\sum_{t=1}^T \nabla  m({\btheta}_i, \bz_{i\,t})}^{2+\delta}_{\infty}  \right]\\
 & \leq  p^{\frac{2+\delta}{2}}		\E\left[ \norm{\bH_i^{-1} - \hat{\bH}_{i\, T}^{-1} }_{\infty}^{2+\delta}\norm{\dfrac{1}{\sqrt{T}}\sum_{t=1}^T \nabla  m({\btheta}_i, \bz_{i\,t})}^{2+\delta}_{\infty}  \right]\\
 & \leq   p^{\frac{2+\delta}{2}}	 \left(  \E\left[\norm{\bH_i^{-1} - \hat{\bH}_{i\, T}^{-1} }_{\infty}^{\frac{2(2+\delta)(1+\delta)}{\delta} }\right] \right)^{  \frac{\delta}{2(1+\delta)} } \left( \E\left[ \norm{ \frac{1}{\sqrt{T}}\sum_{t=1}^T \nabla  m( {\btheta}_i, \bz_{i\,t})  }^{2(1+\delta)}_{\infty} \right] \right)^{\frac{1+\delta/2}{1+\delta}}\\
 & \leq  p^{\frac{2+\delta}{2}}	 \left(p^{\frac{(2+\delta)(1+\delta)}{\delta} } \underline{\lambda}_{\bH}^{-\frac{2(2+\delta)(1+\delta)}{\delta} } C_{\nabla^2 m} \right)^{\frac{\delta}{2(1+\delta)}} \left( \E\left[   \norm{ \frac{1}{\sqrt{T}}\sum_{t=1}^T \nabla  m( {\btheta}_i, \bz_{i\,t})  }^{2(1+\delta)} \right] \right)^{\frac{1+\delta/2}{1+\delta}}\\
 & \leq  p^{\frac{2+\delta}{2}}	 \left(p^{\frac{(2+\delta)(1+\delta)}{\delta} } \underline{\lambda}_{\bH}^{-\frac{2(2+\delta)(1+\delta)}{\delta} } C_{\nabla^2 m} \right)^{\frac{\delta}{2(1+\delta)}} C_{\nabla \mu}^{\frac{1+\delta/2}{1+\delta}},
 \end{align*}
 where the second inequality   follows from $\norm{Ax}_{\infty}\leq \norm{A}_{\infty}\norm{x}_{\infty}$; the third inequality from Hölder's inequality applied with $p={(1+\delta)}/{(1+\delta/2)}>1$; the fourth inequality from lemma \ref{lemma:nonsingularityHatH},  and the last line follows by assumption \ref{asm:mestimation}. 
Finally, we conclude that 
 \begin{align}
 & \E\left[\norm{\sqrt{T}(\hat{\btheta}_i-\btheta_i)}^{2+\delta}  \right]\\
 & \leq 2^{1+\delta}\left[\underline{\lambda}_{\bH}^{-2-\delta} C_{\nabla m}^{ \frac{2+\delta}{2(1+\delta)} }+ p^{\frac{2+\delta}{2}}	\left(p^{\frac{(2+\delta)(1+\delta)}{\delta} } \underline{\lambda}_{\bH}^{-\frac{2(2+\delta)(1+\delta)}{\delta} } C_{\nabla^2 m} \right)^{\frac{\delta}{2(1+\delta)}} C_{\nabla \mu}^{\frac{1+\delta/2}{1+\delta}}  \right] \\
 &   \equiv C_{\hat{\btheta}, 2+\delta} ~,
 \end{align}
where we note that $C_{\hat{\btheta}, 2+\delta}$ does not depend on $i$ or $T$. By Jensen's inequality we have 
\begin{align} 
\E\left[ \norm{\sqrt{T}\left(\hat{\btheta}_i-\btheta_i\right)}^2  \right]  & \leq C_{\hat{\btheta}, 2+\delta}^{\frac{2}{2+\delta}} \equiv  C_{\hat{\btheta}, 2},\\
\E\left[ \norm{\sqrt{T}\left(\hat{\btheta}_i-\btheta_i\right)}^{1+\delta/2}  \right]  & \leq C_{\hat{\btheta}, 2+\delta}^{\frac{1}{2}} \equiv  C_{\hat{\btheta}, 1+\delta/2},\\
\E\left[ \norm{\sqrt{T}\left(\hat{\btheta}_i-\btheta_i\right)}   \right]  & \leq C_{\hat{\btheta}, 2+\delta}^{\frac{1}{2+\delta}} \equiv  C_{\hat{\btheta}, 1} ~,
\end{align}
which establishes the first part of the claim.\\
Next we note that 
\begin{align}
	&  \E\left[\norm{\sqrt{T}(\hat{\btheta}_i-\btheta_1)}^2\right] =  { \E\left[ T(\hat{\btheta}_i-\btheta_1)'(\hat{\btheta}_i-\btheta_1)  \right]}\\
 	& \leq \E\left[\norm{\sqrt{T}(\hat{\btheta}_i-\btheta_i)}^2\right]  + 2\abs*{\E\left[T(\hat{\btheta}_i-\btheta_i)'(\btheta_i-\btheta_1)  \right]}  + T\left(\btheta_i-\btheta_1 \right)'\left(\btheta_i-\btheta_1 \right)\\
	& \leq \E\left[\norm{\sqrt{T}(\hat{\btheta}_i-\btheta_i)}^2\right] +  2\norm{\bEta_i-\bEta_1}\E\left[\norm{\sqrt{T}(\hat{\btheta}_i-\btheta_i)}\right] +  \norm{\bEta_i-\bEta_1}^2\\
	& \leq  C_{\hat{\btheta}, 2}   + 2  C_{\hat{\btheta}, 1} \norm{\bEta_i-\bEta_1} +  \norm{\bEta_i-\bEta_1}^2~,
\end{align}
where in the first inequality we add and subtract $\btheta_i$ in both parentheses,
in the third inequality we apply the Cauchy-Schwarz inequality to the cross term and observe that under \ref{asm:local} $\sqrt{T}(\btheta_i-\btheta_1)=\bEta_i-\bEta_1$. 
This establishes the second part of the claim.
\end{proof}
 
\begin{lemma}\label{lemma:muSecondMoment}
	Suppose \ref{asm:mestimation} and \ref{asm:focus} are satisfied. Let $\delta$ be as in assumption \ref{asm:mestimation}. 
	Then for all $i$ and  $T>T_0$ it holds that 
	\begin{align}
	\E\left[ \abs*{\mu(\hat{\btheta}_i)}^{2+\delta}\right]& <\infty\\
	\E\left[\abs*{\sqrt{T}(\mu(\hat{\btheta}_i)-\mu(\btheta_i) )}^{2+\delta} \right] & \leq   C_{\nabla \mu}^{2+\delta}  C_{\hat{\btheta}, 2+\delta}	
	\end{align}
	
\end{lemma}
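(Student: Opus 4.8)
The plan is to bound $\mu(\hat{\btheta}_i)-\mu(\btheta_i)$ pointwise by a constant multiple of $\norm{\hat{\btheta}_i-\btheta_i}$ and then invoke the moment bounds already established in lemma \ref{lemma:individualMoments}. The first step is to note that, since $\Theta$ is convex by assumption \ref{asm:mestimation}$(i)$ and $\norm{\nabla\mu(\btheta)}<C_{\nabla\mu}$ for every $\btheta\in\Theta$ by assumption \ref{asm:focus}, the map $\mu$ is globally Lipschitz on $\Theta$ with constant $C_{\nabla\mu}$: for any $\btheta,\btheta'\in\Theta$ the segment joining them lies in $\Theta$, so the mean value theorem (equivalently, a first-order expansion produced as in lemma \ref{lemma:measurableSelector}) together with the Cauchy--Schwarz inequality gives $\abs{\mu(\btheta)-\mu(\btheta')}\le C_{\nabla\mu}\norm{\btheta-\btheta'}$. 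Since $\btheta_i$ and $\hat{\btheta}_i$ both lie in $\Theta$ (the latter by assumption \ref{asm:mestimation}$(iii)$ for $T>T_0$), this yields, realization by realization,
\[
\abs{\sqrt{T}\left(\mu(\hat{\btheta}_i)-\mu(\btheta_i)\right)} \le C_{\nabla\mu}\,\norm{\sqrt{T}(\hat{\btheta}_i-\btheta_i)}.
\]

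For the second claim I would raise this inequality to the power $2+\delta$ and take conditional expectations, obtaining
\[
\E\left[\abs{\sqrt{T}(\mu(\hat{\btheta}_i)-\mu(\btheta_i))}^{2+\delta}\right] \le C_{\nabla\mu}^{2+\delta}\,\E\left[\norm{\sqrt{T}(\hat{\btheta}_i-\btheta_i)}^{2+\delta}\right] \le C_{\nabla\mu}^{2+\delta}\,C_{\hat{\btheta},2+\delta},
\]
where the last inequality is the $k=2+\delta$ moment bound of lemma \ref{lemma:individualMoments}; since both $C_{\nabla\mu}$ and $C_{\hat{\btheta},2+\delta}$ are uniform in $i$ and $T$, the resulting bound holds for all $i$ and all $T>T_0$. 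For the first claim I would use the triangle inequality $\abs{\mu(\hat{\btheta}_i)}\le \abs{\mu(\btheta_i)}+\abs{\mu(\hat{\btheta}_i)-\mu(\btheta_i)}$ and the $c_r$-inequality. Because all statements are conditional on $\sigma(\bEta_1,\bEta_2,\dots)$, the parameter $\btheta_i=\btheta_0+\bEta_i/\sqrt{T}$ is a fixed point of $\Theta$ and hence $\mu(\btheta_i)$ is a finite deterministic number (continuity of $\mu$), while $\E[\abs{\mu(\hat{\btheta}_i)-\mu(\btheta_i)}^{2+\delta}]=T^{-(1+\delta/2)}\,\E[\abs{\sqrt{T}(\mu(\hat{\btheta}_i)-\mu(\btheta_i))}^{2+\delta}]<\infty$ by the second claim; therefore $\E[\abs{\mu(\hat{\btheta}_i)}^{2+\delta}]<\infty$.

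I do not expect a genuine obstacle here: the argument just combines the global Lipschitz property of $\mu$ --- immediate from convexity of $\Theta$ and the gradient bound of assumption \ref{asm:focus} --- with the moment bound already proved in lemma \ref{lemma:individualMoments}. The only points needing a little care are that $T>T_0$ is used so that $\hat{\btheta}_i\in\Theta$ and the mean value argument applies, and that, thanks to the conditioning on $\sigma(\bEta_1,\bEta_2,\dots)$, the centering term $\mu(\btheta_i)$ is a constant rather than a random variable, so only its finiteness (not any of its moments) is needed.
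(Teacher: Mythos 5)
Your proposal is correct and follows essentially the same route as the paper: a first-order (mean value) expansion of $\mu$ around $\btheta_i$, the gradient bound $\norm{\nabla\mu}<C_{\nabla\mu}$ from assumption \ref{asm:focus} together with Cauchy--Schwarz to get $\abs{\mu(\hat{\btheta}_i)-\mu(\btheta_i)}\le C_{\nabla\mu}\norm{\hat{\btheta}_i-\btheta_i}$, the $(2+\delta)$-moment bound of lemma \ref{lemma:individualMoments}, and the $C_r$ inequality with the deterministic finite term $\mu(\btheta_i)$ for the first claim. The only cosmetic difference is that you invoke the global Lipschitz property directly rather than routing the expansion through the measurable selector of lemma \ref{lemma:measurableSelector}, which changes nothing of substance.
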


\begin{proof}
	Equation \eqref{equation:firstOrderMuAroundUnitI} in lemma \ref{lemma:measurableSelector} implies
	$\mu(\hat{\btheta}_i) = \mu(\btheta_i) + \bar{\bd}_i'(\hat{\btheta}_i-\btheta_i)$,
	where $\bar{\bd}_i = \nabla \mu\left(\bar{\btheta}_i \right)$ for $\bar{\btheta}_i$  on the segment joining $\btheta_i$ and $\hat{\btheta}_i$.	
	Raising both sides to the power of $(2+\delta)$ and applying the $C_r$ inequality we obtain that 
	\begin{equation}
\abs*{ 	\mu(\hat{\btheta}_i)}^{2+\delta} \leq 2^{1+\delta}\left[  \abs*{\mu(\btheta_i)}^{2+\delta} + \abs*{ \bar{\bd}_i'(\hat{\btheta}_i-\btheta_i)}^{2+\delta}\right]. 
% + 2\abs*{ \mu(\btheta_i) \check{\bd}_i'(\hat{\btheta}_i-\btheta_i)}\right]
	\end{equation}
	By assumption \ref{asm:focus} and the Cauchy-Schwarz inequality it holds that
	$ \abs*{\bar{\bd}_i'(\hat{\btheta}_i-\btheta_i)}^{2+\delta}\leq \norm{\bar{\bd}_1}^{2+\delta}\norm{\hat{\btheta}_i-\btheta_i}^{2+\delta}\leq  C_{\nabla \mu}^{2+\delta} \norm{\hat{\btheta}_i-\btheta_i}^{2+\delta}$, hence by lemma \ref{lemma:individualMoments} it follows that \begin{equation} 
	\E\left[ \left|\bar{\bd}_i'(\hat{\btheta}_i-\btheta_i)\right|^{2+\delta}\right] \leq \dfrac{ C_{\nabla \mu}^{2+\delta}   C_{\hat{\btheta}, 2+\delta}}{T^{(2+\delta)/2}},
	\end{equation} where the constants are independent on $i$. Then both claims of the lemma follow.	
\end{proof}
	
%	To prove the bound on bias, observe that
%	\begin{align}
%			\abs*{\E\left(\mu(\hat{\btheta}_i)-\mu(\btheta_i) \right)} & = \abs*{ \E\left(\check{\bd}_i'(\btheta_i-\btheta_i) \right) }
%	\end{align}

We need an extension of a weighted law of large numbers due to \cite{Rohatgi1971}. 

\begin{lemma}\label{lemma:weightedLLN}
	Suppose
	\begin{enumerate}[label=(\roman*), noitemsep,topsep=0pt,parsep=0pt,partopsep=0pt]
	\item $X_1, X_2, \dots$ is a sequence of independent random variables such that $\E({X_1})=0$ and  $\sup_{i}\E[\abs*{X_i}^{1+1/\gamma}]<\infty$ for some $\gamma\in (0, 1]$;
	\item $\{ \bw_N \}_N$ with $\bw_N\in \R^{\infty}$ is a sequence of weight vectors such that $w_{i\,N}\geq 0$ for $i>0$, $\sum_{i=1}^N w_{i\,N}\leq 1$, and $w_{j\,N}=0$ for $j>N$;
	\item $\bw\in\R^{\infty}$ is a weight vector such that $w_i\geq 0$ for $i>0$, $\sum_{i=1}^{\infty} w_i\leq  1$; and 
	\item $\{ \bw_N \}$ and $\bw$ are such that $\sup_{i} \abs{w_{i\,N}-w_i} = O(N^{-\gamma})$.
	\end{enumerate}
	%Then $\sum_{i=1}^{\infty} w_iX_i$ converges a.s.\ and  $\sum_{i=1}^N w_{i\,N}X_i \xrightarrow{a.s.} \sum_{i=1}^{\infty} w_iX_i$.
	Then   $\sum_{i=1}^{\infty} w_iX_i$ exists a.s. and  $\sum_{i=1}^N w_{i\,N}X_i \xrightarrow{a.s.} \sum_{i=1}^{\infty} w_iX_i$.
\end{lemma}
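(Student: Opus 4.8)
The plan is to peel off the limiting weighted series and reduce everything to showing that the approximation error $R_N:=\sum_{i=1}^N(w_{i\,N}-w_i)X_i$ tends to $0$ almost surely. First, since $\gamma\in(0,1]$ we have $1+1/\gamma\ge 2$, hence $\sigma^2:=\E X_1^2<\infty$ by Lyapunov's inequality. The terms $w_iX_i$ are independent and mean zero with $\sum_{i\ge1}\var(w_iX_i)=\sigma^2\sum_{i\ge1}w_i^2\le\sigma^2\big(\sup_i w_i\big)\sum_{i\ge1}w_i\le\sigma^2<\infty$, so $\sum_{i=1}^\infty w_iX_i$ converges almost surely by Kolmogorov's convergence theorem; in particular $\sum_{i\le N}w_iX_i\to\sum_{i\ge1}w_iX_i$ and $\sum_{i>N}w_iX_i\to0$ a.s. Writing $\sum_{i\le N}w_{i\,N}X_i=\sum_{i\le N}w_iX_i+R_N$, the assertion of the lemma follows once $R_N\to0$ a.s.

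Set $a_{N\,i}=w_{i\,N}-w_i$ for $i\le N$ and $a_{N\,i}=0$ for $i>N$. This array is pointwise null ($a_{N\,i}\to0$ for each fixed $i$, since $|w_{i\,N}-w_i|\le CN^{-\gamma}$ by (iv)), has rows that vanish uniformly at rate $N^{-\gamma}$ ($\sup_i|a_{N\,i}|\le CN^{-\gamma}$ for $N\ge N_0$), and is uniformly $\ell^1$--bounded ($\sum_i|a_{N\,i}|\le\sum_{i\le N}w_{i\,N}+\sum_{i\le N}w_i\le2$). Almost--sure convergence of $\sum_i a_{N\,i}X_i$ for such an array, a mean-zero i.i.d.\ sequence, and the moment exponent $1+1/\gamma$ matched to the row-decay rate $N^{-\gamma}$, is exactly the content of Pruitt's (1966) weighted law of large numbers; the one point requiring verification is that his argument is insensitive to the ``sub-Toeplitz'' normalization $\sum_i w_{i\,N}\le1$ assumed here in place of $\sum_i w_{i\,N}=1$.

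Concretely, I would reproduce Pruitt's scheme by truncating at the moment-matched level: put $Y_i=X_i\I(|X_i|\le i^\gamma)$, $Z_i=X_i-Y_i$, $c_i=\E Y_i=-\E Z_i$. The moment hypothesis yields $\sum_i\Pr(|X_i|>i^\gamma)\le\E|X_1|^{1/\gamma}<\infty$ and $\sum_i|c_i|\le\sum_i\E\big[|X_1|\I(|X_1|>i^\gamma)\big]=\E\big[|X_1|\,\#\{i:i<|X_1|^{1/\gamma}\}\big]\le\E|X_1|^{1+1/\gamma}<\infty$. Hence, by Borel--Cantelli, a.s.\ $Z_i=0$ for all large $i$, so $\sum_i a_{N\,i}Z_i$ is eventually a fixed finite sum whose coefficients tend to $0$; and $\big|\sum_i a_{N\,i}c_i\big|\le\big(\sup_i|a_{N\,i}|\big)\sum_i|c_i|\le CN^{-\gamma}\sum_i|c_i|\to0$. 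For the bounded, centered remainder $\widetilde R_N:=\sum_i a_{N\,i}(Y_i-c_i)$ one has $\E\widetilde R_N^{\,2}=\sum_i a_{N\,i}^2\var(Y_i-c_i)\le\sigma^2\sum_i a_{N\,i}^2\le\sigma^2\big(\sup_i|a_{N\,i}|\big)\sum_i|a_{N\,i}|\le 2\sigma^2CN^{-\gamma}\to0$, so $\widetilde R_N\to0$ in probability.

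The main obstacle is upgrading $\widetilde R_N\to0$ from convergence in probability to almost-sure convergence along the full sequence: because the rows of $\{a_{N\,i}\}$ bear no monotone or telescoping relation to one another, $\widetilde R_N$ cannot be recovered from its values along a sparse subsequence (the device that makes Cesàro-type averages work via Kronecker's lemma is unavailable). This is precisely the step carried out by Pruitt's blocking argument: a dyadic decomposition of the index $N$, a further truncation of the row-$N$ summands at a level slightly below $N^\gamma$, and a maximal/exponential inequality for bounded independent summands that renders the block exceedance probabilities summable so that Borel--Cantelli applies on blocks. I would go through this argument and check that each estimate uses only $\sup_i|a_{N\,i}|=O(N^{-\gamma})$, $\sum_i|a_{N\,i}|=O(1)$, $a_{N\,i}\to0$ and $\E|X_1|^{1+1/\gamma}<\infty$, none of which is affected by replacing ``$=1$'' with ``$\le1$'' in the row sums. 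Assembling the three pieces gives $R_N\to0$ a.s., and combined with the first paragraph this yields $\sum_{i\le N}w_{i\,N}X_i\to\sum_{i\ge1}w_iX_i$ a.s.
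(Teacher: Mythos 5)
Your proposal is correct and follows essentially the same route as the paper: the same decomposition $\sum_{i\le N}w_{i\,N}X_i=\sum_{i\le N}w_iX_i+\sum_{i\le N}(w_{i\,N}-w_i)X_i$, the Kolmogorov two-series theorem for the first sum, and an appeal to Theorem 2 of \cite{Pruitt1966} for the null array $w_{i\,N}-w_i$ (whose conditions you verify just as the paper does, including the observation that the mean-zero assumption makes the exact Toeplitz normalization irrelevant). Your additional sketch of the truncation and second-moment steps inside Pruitt's argument is accurate but not needed, since, like the paper, you ultimately rely on Pruitt's blocking argument for the almost-sure upgrade.
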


Observe that the limit sequence of weights can be defective. If $w_{i\,N} = N^{-1}\I_{i\leq N}$ (equal weights), the above result becomes a standard SLLN with a second moment assumption.

\begin{proof}
	Define $\tilde{\bw}_N\in\R^{\infty}$ by $\tilde{w}_{i\,N} = w_{i\,N}-w_i$ for $i\leq N$ and $\tilde{w}_{i\,N}=0$ for $i>N$.
	Then 
	\begin{equation} 
	\sum_{i=1}^N w_{i\,N}X_i = \sum_{i=1}^N w_i X_i + \sum_{i=1}^N (w_{i\,N}-w_i)X_i = \sum_{i=1}^N w_i X_i + \sum_{i=1}^N \tilde w_{i\,N} X_i 
	\end{equation}
	holds.
	For any $n$ it holds that $\sum_{i=1}^{n} \var(w_i X_i)=\sum_{i=1}^n w_i^2 \E(X_i^2) = \E(X_i^2)\sum_{i=1}^n w_i^2\leq \E(X_i^2)<\infty$ since $\gamma \leq 1$. 
	Hence the Kolmogorov two-series theorem \citep[lemma 5.16]{Kallenberg2021} implies that  $\sum_{i=1}^{N} w_iX_i \xrightarrow{a.s.} \sum_{i=1}^{\infty} w_iX_i$.
	The vector $\tilde{\bw}_N$ satisfies the conditions of theorem 2 of \cite{Rohatgi1971}. 
	Hence the same theorem implies that $\sum_{i=1}^{\infty}\tilde{w}_{i\,N} X_i \xrightarrow{a.s.}0$. The claim of the lemma then follows.
\end{proof}

\begin{lemma}\label{lemma:sumsOfEta}
	
	Suppose that the assumptions of theorem \ref{theorem:risk} are satisfied.
	Then (i) $ \sum_{i=1}^{\infty} w_i\bEta_i$ exists $\bEta$-a.s.\ and it holds that
	\begin{equation}
	\sum_{i=1}^N w_{i\,N}(\bEta_i-\bEta_1)  \xrightarrow{a.s.} \sum_{i=1}^{\infty} w_i\bEta_i - \bEta_1 ~,
	\end{equation}
	and (ii) $\sup_{N}\sum_{i=1}^N w_{i\,N}\norm{\bEta_i-\bEta_1}^k<\infty$ is finite $\bEta$-a.s. for $k=1, 1+\delta/2, 2, 2+\delta$ for the choice of $\delta$ in \ref{asm:mestimation}.
 
\end{lemma}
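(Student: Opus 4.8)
The plan is to obtain both claims from the weighted law of large numbers in Lemma \ref{lemma:weightedLLN}, invoked with $\gamma = 1/2$. The ingredients I would use are: $\{\bEta_i\}$ is i.i.d.\ with $\E_{\bEta}\norm{\bEta_i}^{12} < \infty$ by \ref{asm:local}, so every raw moment of $\norm{\bEta_i}$ of order at most $12$ exists; the normalisation $\sum_{i=1}^N w_{i\,N} = 1$; and the weight approximation rate $\sup_i|w_{i\,N} - w_i| = o(N^{-1/2}) = O(N^{-1/2})$, which is exactly condition (iv) of Lemma \ref{lemma:weightedLLN} with $\gamma = 1/2$, so that Lemma \ref{lemma:weightedLLN} requires only a finite moment of order $1 + 1/\gamma = 3$ of the (centred) summands. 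All statements are pathwise in $\{\bEta_i\}$, hence ``$\bEta$-a.s.'', and the weight sequences $\{\bw_N\}$ and $\bw$ of Theorem \ref{theorem:risk} are treated as the admissible inputs of Lemma \ref{lemma:weightedLLN}.

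For part (i) I would apply Lemma \ref{lemma:weightedLLN} coordinate by coordinate. For each $k\in\{1,\dots,p\}$ the scalar sequence $\{\eta_{i\,k}\}_i$ of $k$-th coordinates of the $\bEta_i$ is i.i.d., has mean zero by \ref{asm:local}, and satisfies $\E|\eta_{i\,k}|^{3} \le \E\norm{\bEta_i}^{3} < \infty$. Lemma \ref{lemma:weightedLLN} then gives that $\sum_{i=1}^{\infty} w_i\eta_{i\,k}$ exists and $\sum_{i=1}^N w_{i\,N}\eta_{i\,k} \xrightarrow{a.s.} \sum_{i=1}^{\infty} w_i\eta_{i\,k}$; stacking over $k$ shows $\sum_{i=1}^{\infty} w_i\bEta_i$ exists $\bEta$-a.s.\ and $\sum_{i=1}^N w_{i\,N}\bEta_i \xrightarrow{a.s.} \sum_{i=1}^{\infty} w_i\bEta_i$. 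Since $\sum_{i=1}^N w_{i\,N} = 1$ we may write $\sum_{i=1}^N w_{i\,N}(\bEta_i - \bEta_1) = \sum_{i=1}^N w_{i\,N}\bEta_i - \bEta_1$, and the stated limit follows at once.

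For part (ii), fix $k \in \{1, 1+\delta/2, 2, 2+\delta\}$ with $\delta$ as in \ref{asm:mestimation}, so $1 \le k \le 2+\delta < 3$. By the $C_r$-inequality together with $\sum_{i=1}^N w_{i\,N}=1$,
\[
\sum_{i=1}^N w_{i\,N}\norm{\bEta_i - \bEta_1}^k \le 2^{k-1}\sum_{i=1}^N w_{i\,N}\norm{\bEta_i}^k + 2^{k-1}\norm{\bEta_1}^k ,
\]
so it suffices to bound $\sup_N \sum_{i=1}^N w_{i\,N}\norm{\bEta_i}^k$ $\bEta$-a.s. Set $c_k = \E_{\bEta}\norm{\bEta_1}^k$, finite since $k<3<12$. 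The variables $X_i = \norm{\bEta_i}^k - c_k$ are i.i.d., mean zero, and satisfy $\E|X_i|^{3} \le 4\big(\E\norm{\bEta_i}^{3k} + c_k^3\big) < \infty$ because $3k \le 3(2+\delta) < 9 \le 12$. Applying Lemma \ref{lemma:weightedLLN} with $\gamma=1/2$ to $\{X_i\}$ gives that $\sum_{i=1}^N w_{i\,N}(\norm{\bEta_i}^k - c_k)$ converges a.s., so $\sum_{i=1}^N w_{i\,N}\norm{\bEta_i}^k = \sum_{i=1}^N w_{i\,N}(\norm{\bEta_i}^k - c_k) + c_k$ converges and is in particular bounded in $N$, $\bEta$-a.s. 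Doing this for each of the four exponents finishes the argument.

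The work here is bookkeeping rather than a substantive difficulty: one checks that the rate $o(N^{-1/2})$ matches the moment budget (it forces only a third-moment condition on the centred summands, whereas \ref{asm:local} supplies raw moments up to order $12$ and the heaviest summand $\norm{\bEta_i}^{2+\delta}$ needs raw moments of order $3(2+\delta)<9$), and that the identity $\sum_{i=1}^N w_{i\,N}=1$ makes the subtraction of $\bEta_1$ in (i) and the recentring by $c_k$ in (ii) lossless. The one point deserving a short remark is that Lemma \ref{lemma:weightedLLN} is phrased for deterministic weight arrays, so covering the $\sigma(\bEta_1,\dots,\bEta_N)$-measurable weights permitted in Theorem \ref{theorem:risk} requires treating the weight sequence as given; for the present plan I take $\{\bw_N\}$ and $\bw$ as fixed sequences meeting the stated hypotheses.
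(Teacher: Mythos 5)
Your proof is correct and follows essentially the same route as the paper's: part (i) via Lemma \ref{lemma:weightedLLN} with $\gamma=1/2$ applied to the i.i.d.\ mean-zero $\bEta_i$, and part (ii) via the $C_r$-inequality bound $\norm{\bEta_i-\bEta_1}^k\le 2^{k-1}(\norm{\bEta_i}^k+\norm{\bEta_1}^k)$ followed by the same weighted LLN using the moment budget $3k\le 3(2+\delta)<12$. Your explicit recentring of $\norm{\bEta_i}^k$ by $c_k$ before invoking Lemma \ref{lemma:weightedLLN} (whose hypothesis is $\E(X_1)=0$) is a small but welcome refinement that the paper's version leaves implicit.
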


\begin{proof}
	
Notice that $\sum_{i=1}^N w_{i\,N}(\bEta_i-\bEta_1)  = 	\sum_{i=1}^N w_{i\,N}\bEta_i-\bEta_1$. 
By assumption \ref{asm:local} $\bEta_i$ are independent random vectors with finite third moments and $\sup_{i} \abs*{w_{i\,N}-w_i}= O(N^{-1/2})$. 
Lemma \ref{lemma:weightedLLN} then implies that $\sum_{i=1}^{\infty} w_i\bEta_i$ exists $\bEta$-a.s.\ and that  $\sum_{i=1}^N w_{i\,N}\bEta_i\xrightarrow{a.s.} \sum_{i=1}^{\infty} w_i \bEta_i$, which establishes the first claim.\\			
Consider $\norm{\bEta_i-\bEta_1}^k$ and note that the triangle and $C_r$ inequalities imply that
\begin{equation}
\norm{\bEta_i-\bEta_1}^k \leq (\norm{\bEta_i} + \norm{\bEta_1})^k\leq 2^{k-1}(\norm{\bEta_k}^k+\norm{\bEta_1}^k) ~,
\end{equation} 
which, in turn, implies 
\begin{align}\label{equation:boundOnNormSumsEtaDifferences}
	\sum_{i=1}^N w_{i\,N} \norm{\bEta_i-\bEta_1}^k \leq 2^{k-1}	\sum_{i=1}^N w_{i\,N} \norm{\bEta_i}^k +  2^{k-1}\norm{\bEta_1}^k ~.
\end{align}
Observe that $\norm{\bEta_i}^k$ are independent random variables with $\sup_i\E_{\bEta}\left[\norm{\bEta_i}^{3k}\right]<\infty$ for $k\in [1, 2+\delta]$ by \ref{asm:local}.
Then lemma \ref{lemma:weightedLLN} applies with $\gamma=1/2$, and  $\sum_{i=1}^N w_{i\,N} \norm{\bEta_i}^k$ converges almost surely, which implies that $\sup_{N} \sum_{i=1}^N w_{i\,N} \norm{\bEta_i}^k<\infty$ $\bEta$-a.s.. 
Since $\norm{\bEta_1}$ is also $\bEta$-a.s. finite, together with  eq. \eqref{equation:boundOnNormSumsEtaDifferences}, this implies the second claim.	 
\end{proof}

Finally, we present the proof of theorem \ref{theorem:risk}.

\begin{proof}[Proof of theorem \ref{theorem:risk}]
First, from lemma  \ref{lemma:muSecondMoment} it follows  for each $N$ and $T>T_0$
\begin{equation} 
	\E\left[\hat{\mu}(\bw_N)-\mu(\btheta_1)  \right]^2<\infty~,
\end{equation}
establishing the second assertion of the theorem. \\
The MSE of the averaging estimator expressed as a sum of squared bias and variance is 
\begin{equation} 
T\times \E\left[\hat{\mu}(\bw_N)-\mu(\btheta_1)  \right]^2 = \left(\sum_{i=1}^Nw_{i\,N}\E\left(\sqrt{T}(\mu(\hat{\btheta}_i)-\mu(\btheta_1))\right)  \right)^2 + T\var\left(\sum_{i=1}^N w_{i\,N}(\mu(\hat{\btheta}_i)) \right).
\end{equation}
We examine the bias and the variance separately.  
We first focus on the bias. 
By eq. \eqref{equation:secondOrderMuAroundUnit1} of lemma \ref{lemma:measurableSelector}, we have 
\begin{equation}\label{equation:theorem2expansionMuHat}
\mu(\hat{\btheta}_i) = \mu(\btheta_1) + \bd'_1\left(\hat{\btheta}_i -\btheta_1 \right) +\dfrac{1}{2}(\hat{\btheta}_i-\btheta_1)'\nabla^2 \mu(\acute{\btheta}_i) (\hat{\btheta}_i-\btheta_1),
\end{equation}
where  $\bd_1=\nabla \mu(\btheta_1)$ and  $\acute{\btheta}_i$ lies on the segment joining $\hat{\btheta}_i$ and $\btheta_1$.  
The bias of $\mu(\hat \btheta_i)$ is
\begin{align} 
&  {\sqrt{T}\E\left(\mu(\hat{\btheta}_i)  - \mu(\btheta_1)\right)  } \\
& = { \E\left[ \bd_1'\sqrt{T}(\hat{\btheta}_i -\btheta_1) +   \dfrac{1}{2}(\hat{\btheta}_i-\btheta_1)'\nabla^2 \mu(\acute{\btheta}_i) \sqrt{T}(\hat{\btheta}_i-\btheta_1)  \right]} \\
&  = \E\left[ \bd_1'\sqrt{T}(\hat{\btheta}_i -\btheta_i ) +   \dfrac{1}{2}(\hat{\btheta}_i-\btheta_1)'\nabla^2 \mu(\acute{\btheta}_i) \sqrt{T}(\hat{\btheta}_i-\btheta_1)  \right] \\
& \quad +   \sqrt{T}\bd_0'(\btheta_i-\btheta_1)  +   (\bd_1-\bd_0)'\sqrt{T}(\btheta_i-\btheta_1)   \\
 & =    \E\left[ \bd_1'\sqrt{T}(\hat{\btheta}_i -\btheta_i) +   \dfrac{1}{2}(\hat{\btheta}_i-\btheta_1)'\nabla^2 \mu(\acute{\btheta}_i) \sqrt{T}(\hat{\btheta}_i-\btheta_1)  \right]\\
 & \quad  + \bd_0'(\bEta_i-\bEta_1)  +  (\bd_1-\bd_0)' (\bEta_i-\bEta_1)~,  \label{equation:biasOfIndividualEstimator}
\end{align}
where in the first equality we use eq. \eqref{equation:theorem2expansionMuHat}; in the  second equality
$\btheta_1$ is replaced by $\btheta_i$ in the first term using $\bd_1'\sqrt{T}(\hat{\btheta}_i-\btheta_1)- \bd_1'(\bEta_i-\bEta_1)={\bd_1'}\sqrt{T}(\hat{\btheta}_i-\btheta_i)$;  $\bd_0=\nabla \mu(\btheta_0)$; and we use the locality assumption \ref{asm:local} in the last equality as $\sqrt{T}(\btheta_i-\btheta_1) = \bEta_1-\bEta_1$.
Define 
\begin{equation}
A_{i\, T} \equiv \E\left[ \bd_1'\sqrt{T}\left(\hat{\btheta}_i -\btheta_i \right) \right]+ \dfrac{1}{2} \E \left[ (\hat{\btheta}_i-\btheta_1)'\nabla^2 \mu(\acute{\btheta}_i) \sqrt{T}(\hat{\btheta}_i-\btheta_1)  \right] +   (\bd_1-\bd_0)' (\bEta_i-\bEta_1) ~,
\end{equation}
and note that by eq. \eqref{equation:biasOfIndividualEstimator}, the bias of the averaging estimator can be written as
\begin{align} \label{equation:biasExpressionA}
	\sum_{i=1}^Nw_{i\,N}\E\left(\sqrt{T}(\mu(\hat{\btheta}_i)-\mu(\btheta_1))\right) = \sum_{i=1}^N w_{i\, N}\bd_0'(\bEta_i-\bEta_1) + \sum_{i=1}^N w_{i\, N}A_{i\, T}~.
\end{align}
We then proceed by showing that $\abs*{\sum_{i=1}^N w_{i\,N} A_{i\, T}} \leq M/\sqrt{T}\to 0$ for some constant $M<\infty$ independent of $N $(recall that all statements are almost surely with respect to the distribution of $\bEta$ in line with assumption \ref{asm:local}, and $M$ may depend on the sequence $\curl{\bEta_1, \bEta_2, \dots}$).
%$M$ depends on the  sequence of $\curl{\bEta_1, \bEta_2, \dots}$ only, and the sequence is held fixed.
%
Note that
\begin{enumerate}
	\item By Hölder's inequality, we obtain $\abs*{\bd_1'\E\left(\sqrt{T}(\hat{\btheta}_i-\btheta_i) \right)} \leq \norm{\bd_1}_{\infty}\norm{\sqrt{T}\E(\hat{\btheta}_i-\btheta_i)}_1$ $\leq  {C_{\nabla \mu}C_{Bias}}T^{-1/2}  $, where the last bound follows from assumptions \ref{asm:bias} and \ref{asm:focus};
	\item By assumption \ref{asm:focus} the eigenvalues of $\nabla^2 \mu$ are bounded in absolute value by $C_{\nabla^2 \mu}$. Then  \begin{equation*}
	\hspace{-12pt}	\abs*{ \E(\hat{\btheta}_i-\btheta_1)'\nabla^2 \mu(\acute{\btheta}_i) \sqrt{T}(\hat{\btheta}_i-\btheta_1)}\leq {C_{\nabla^2 \mu}}T^{-1/2} \Big[C_{\hat{\btheta}, 2}   + 2C_{\hat{\btheta}, 1}\norm{\bEta_i-\bEta_1} + \norm{\bEta_i-\bEta_1}^2  \Big] 
	\end{equation*} where the bound is given by lemma \ref{lemma:individualMoments};   
\item By assumption \ref{asm:focus}, 	$\norm{\bd_1-\bd_0}\equiv \norm{\nabla \mu(\btheta_0+T^{-1/2}\bEta_1) - \nabla \mu(\btheta_0)}\leq {C_{\nabla^2 \mu} }\norm{\bEta_1}T^{-1/2}$.
\end{enumerate}
All the $C_{\cdot}$-constants do not depend in $i$.
Combining the above results, we obtain by  the triangle  and Cauchy-Scwharz inequalities that
\begin{align*} 
 \abs*{A_{i\, T}}
& \leq  \dfrac{1}{\sqrt{T}}\left[ C_{\nabla \mu}C_{Bias} +   C_{\nabla^2 \mu} C_{\hat{\btheta}, 2} + {C_{\nabla^2 \mu}} \norm{{\bEta_i-\bEta_1}}^2 + {C_{\nabla^2 \mu}}  (  2C_{\hat{\btheta}, 1}  + \norm{\bEta_1}  )  \norm{{\bEta_i-\bEta_1}}      \right]. 
\end{align*}
Define 
\begin{align}  
M & =  C_{\nabla \mu}C_{Bias} +   C_{\nabla^2 \mu}C_{\hat{\btheta}, 2}  +  {C_{\nabla^2 \mu}}{ } \sup_N\sum_{i=1}^N w_{i\,N} \norm{{\bEta_i-\bEta_1}}^2 \\ 
& \quad +  {C_{\nabla^2 \mu}} \left(  2C_{\hat{\btheta}, 1}  + \norm{\bEta_1} \right)  \sup_N\sum_{i=1}^N w_{i\,N} \norm{{\bEta_i-\bEta_1}}  ~,
\end{align}
and observe that $M$ does not depend on $N$ or $T$, and by lemma \ref{lemma:sumsOfEta}  $M<\infty$ ($\bEta$-a.s.). 
Take the weighted average of $A_{i\, T}$ to obtain
\begin{align}
\abs*{\sum_{i=1}^N w_{i\,N}A_{i\, T} } & \leq \sum_{i=1}^N w_{i\,N}\abs{A_{i\, T}} \leq \dfrac{M}{\sqrt{T}} \xrightarrow{ } 0 \text{ as }N, T\to\infty ~. \label{equation:theorem2AboundSum}
\end{align}
By lemma  \ref{lemma:sumsOfEta},  $\sum_{i=1}^N w_{i\, N}\bd_0'(\bEta_i-\bEta_1) \to  \sum_{i=1}^{\infty} w_i \bd_0'\bEta_0  - \bd_0'\bEta_i$, where the infinite sum exists.
Combining this with eqs. \eqref{equation:biasExpressionA} and \eqref{equation:theorem2AboundSum}, we obtain that the bias converges as $N, T\to\infty$:
\begin{align}  {\sum_{i=1}^Nw_{i\,N}\E\left(\sqrt{T}\left(\mu(\hat{\btheta}_i)-\mu(\btheta_1)\right)\right) } 
 \xrightarrow{ } \sum_{i=1}^{\infty} w_i \bd_0'\bEta_0  - \bd_0'\bEta_i, ~  \text{ ($\bEta$-a.s.)} \label{equation:theorem2biasLimit}
\end{align} 

Now turn to the variance series and observe that 
\begin{align*}
& T\times \var\left(\sum_{i=1}^N w_{i\,N}(\mu(\hat{\btheta}_i)) \right)\\ & = T\sum_{i=1}^N w_{i\,N}^2 \var\left( \mu(\hat{\btheta}_i)\right)\\
& =  \sum_{i=1}^N w_{i\,N}^2\left[ \E\left[\sqrt{T}\left(\mu(\hat{\btheta}_i)-\mu(\btheta_i) \right) \right]^2  - \left[\sqrt{T}\left(\E\left(\mu(\hat{\btheta}_i) \right)-\mu(\btheta_i) \right) \right]^2\right] ~.
\end{align*}
We tackle the two sums separately. 
First we show that   \begin{equation} 
\sup_N \sum_{i=1}^{N} w_{i\,N}^2 \left[ \sqrt{T}\left(\mu(\btheta_i)-\E\left(\mu(\hat{\btheta}_i) \right) \right)  \right]^2= O(T^{-1})
\end{equation}
The argument is similar to that leading up to eq. \eqref{equation:theorem2AboundSum}.  By eq. \eqref{equation:secondOrderMuAroundUnitI} of lemma \ref{lemma:measurableSelector}, we can expand $\mu(\hat{\btheta}_i)$ around $\btheta_i$ to obtain that 
\begin{equation} 
 \sqrt{T}\left( \E\left(\mu(\hat{\btheta}_i) \right) - \mu(\btheta_i) \right) =  \E\left[ \bd_1'\sqrt{T}\left(\hat{\btheta}_i -\btheta_i \right) +   \dfrac{1}{2}(\hat{\btheta}_i-\btheta_i)'\nabla^2 \mu(\check{\btheta}_i) \sqrt{T}(\hat{\btheta}_i-\btheta_i)  \right] ~, 
\end{equation}
for some $\check{\btheta}_i$ on the segment joining $\btheta_i$ and $\hat{\btheta}_i$. Similarly to the above, we conclude by lemma \ref{lemma:individualMoments} and assumption \ref{asm:bias} that
\begin{align}
	\abs*{\E\left[ \bd_1'\sqrt{T}\left(\hat{\btheta}_i -\btheta_i \right)\right]  } & \leq \dfrac{C_{\nabla \mu}C_{Bias}}{\sqrt{T}}\\
	\abs*{\E\left[(\hat{\btheta}_i-\btheta_i)'\nabla^2 \mu(\check{\btheta}_i) \sqrt{T}(\hat{\btheta}_i-\btheta_i) \right] }  & \leq \dfrac{C_{\nabla^2 \mu} C_{\hat{\btheta}, 2} }{\sqrt{T}} ~.
\end{align}
From this it immediately follows that 
\begin{equation}\label{equation:theorem2varianceLimit2}
 \sum_{i=1}^N w_{i\,N}^2 \left[\sqrt{T}\left(\E\left(\mu(\hat{\btheta}_i) \right)-\mu(\btheta_i) \right) \right]^2\leq  \dfrac{1}{T}\left[C_{\nabla \mu}C_{Bias} + C_{\nabla^2 \mu}C_{\hat{\btheta}, 2} \right]^2~,
\end{equation}
where the right hand side does not depend on $i$ or $N$.

Second, we show that 
\begin{equation} 
\sum_{i=1}^N w_{i\,N}^2  \E\left[\sqrt{T}\left(\mu(\hat{\btheta}_i)-\mu(\btheta_i) \right) \right]^2\to \sum_{i=1}^{\infty} w_i^2\bd_0'\bV_i\bd_0.
\end{equation}  Define $X_{i\,T}= \E\left[\sqrt{T}(\mu(\hat{\btheta}_i)-\mu(\btheta_i)) \right]^2$. By lemma \ref{lemma:muSecondMoment} there exists a constant $C_X<\infty$ that does not depend on $i$ or $T$ such that $X_{iT}\leq C_X$ for $T>T_0$.
Then
\begin{align}
& \sum_{i=1}^N w_{i\,N}^2  \E\left[\sqrt{T}\left(\mu(\hat{\btheta}_i)-\mu(\btheta_i) \right) \right]^2  \\
& \equiv \sum_{i=1}^N w_{i\,N}^2 X_{i\, T} \\
%&  = \sum_{i=1}^N (w_{i\,N}^2-w_i^2)X_{i\,T} + \sum_{i=1}^N w_i^2 X_{i\,T} \\ 
& = \sum_{i=1}^N w_i^2\bd_0'\bV_i\bd_0  +  \sum_{i=1}^N (w_{i\,N}^2-w_i^2)\bd_0'\bV_i\bd_0  +  \sum_{i=1}^N (w_{i\,N}^2-w_i^2)(X_{i\,T}-\bd_0'\bV_i\bd_0) \\
& \quad + \sum_{i=1}^N w_i^2 (X_{i\,T}-\bd_0'\bV_0\bd_0) .
\end{align}
We deal with the four sums separately:
\begin{enumerate}
\item 	 By  \ref{asm:mestimation},   $\sum_{i=1}^N w_i^2 \bd_0'\bV_i\bd_0\leq  \bar{\lambda}_{\bSigma}\underline{\lambda}_{\bH}^2  \norm{\bd_0}^2$. Accordingly $\curl*{\sum_{i=1}^N w_i^2 \bd_0'\bV_i\bd_0}_{N=1}^{\infty}$ forms a bounded non-decreasing sequence. Thus  $\sum_{i=1}^{N} w_i^2\bd_0'\bV_i\bd_0\to  \sum_{i=1}^\infty w_i^2 \bd_0'\bV_i\bd_0$.
\item    Consider $\sum_{i=1}^N (w_{i\,N}^2- w_i^2)\bd_0'\bV_i\bd_0$
	\begin{align}
	\abs*{\sum_{i=1}^N (w_{i\,N}^2- w_i^2)\bd_0'\bV_i\bd_0} &  = \abs*{\sum_{i=1}^N (w_{i\,N}- w_i)(w_{i\,N}+w_i)\bd_0'\bV_i\bd_0}\\
	&  \leq \sup_{j} \abs*{w_{j\,N}-w_j } \sum_{i=1}^N (w_{i\,N}+w_{i})\bd_0\bV_i\bd_0\\
	& \leq 2 \bar{\lambda}_{\bSigma}\underline{\lambda}_{\bH}^2  \norm{\bd_0}^2  \sup_{j} \abs*{w_{j\,N}-w_j }  \to 0~,
	\end{align}
	where we have used \ref{asm:mestimation}.
	
\item 	Similarly we obtain that
\begin{align} 
	\abs*{\sum_{i=1}^N (w_{i\,N}^2- w_i^2)(X_{i\,T}-\bd_0'\bV_i\bd_0)} &  = \abs*{\sum_{i=1}^N (w_{i\,N}- w_i)(w_{i\,N}+w_i)(X_{i\,T}-\bd_0'\bV_i\bd_0)}\\
&  \leq \sup_{j} \abs*{w_{j\,N}-w_j } \sum_{i=1}^N (w_{i\,N}+w_{i})\abs*{X_{i\,T}-\bd_0\bV_i\bd_0}\\
& \leq 2 \left[ \bar{\lambda}_{\bSigma}\underline{\lambda}_{\bH}^2   \norm{\bd_0}^2  + C_X \right] \sup_{j} \abs*{w_{j\,N}-w_j }  \to 0~.
\end{align}

\item Last,  we apply the dominated convergence theorem to show that  $\sum_{i=1}^N w_i^2 (X_{i\,T}-\bd_0'\bV_i\bd_0) \to 0$. 

Define $f_{N, T}:\mathbb{N}\to \R$ as   $f_{N, T}(i)= w_{i\,N}^2 (X_{i\,T}-\bd_0'\bV_i\bd_0)$ if $i\leq N$ and $f_{N, T}(i)=0$ if $i>N$.
 For each $i$,  $\curl*{ \sqrt{T}(\mu(\hat{\btheta}_i)-\btheta_i), T= T_0+1, \dots}$ form a family with uniformly bounded $(2+\delta)$th moments (by lemma \ref{lemma:muSecondMoment}). By lemma \ref{lemma:individual} $\sqrt{T}(\mu(\hat{\btheta}_i)-\btheta_i)\Rightarrow N(0, \bd_0'\bV_i\bd_0)$, hence by Vitali's convergence theorem  the second moments converge as $X_{i\,T}\rightarrow \bd_0'\bV_i\bd_0$. This convergence is equivalent to the observation that for each  $i$ 
$f_{N, T}(i)$  converges to zero as $N, T\to\infty$ . 

Next, $f_{N, T}$ is dominated: for any $i$ it holds that $\abs*{f_{N, T}(i)}\leq w_i^2\abs{X_{i\,T} - \bd_0'\bV_i\bd_0 }\leq w_i(C_X+\bar{\lambda}_{\bSigma}\underline{\lambda}^2_{\bH}\norm{\bd}_0^2)$. The bound is summable:  $\sum_{i=1}^{\infty} w_i(C_X+\bar{\lambda}_{\bSigma}\underline{\lambda}^2_{\bH}\norm{\bd}_0^2)\leq (C_X+\bar{\lambda}_{\bSigma}\underline{\lambda}^2_{\bH}\norm{\bd}_0^2) $, which is independent of $N$ and $T$.

The dominated convergence theorem applies and  so
 \begin{equation} 
 \sum_{i=1}^N w_i^2 (X_{i\,T}-\bd_0'\bV_i\bd_0)= \sum_{i=1}^{\infty} f_{N, T}(i) \to \sum_{i=1}^{\infty} 0 =0 \text{ as }N, T\to\infty.
 \end{equation}	
\end{enumerate} 
Combining the above arguments, we obtain that as $N, T\to\infty$
\begin{equation}\label{equation:theorem2varianceLimit1}
\sum_{i=1}^N w_{i\,N}^2  \E\left[\sqrt{T}\left(\mu(\hat{\btheta}_i)-\mu(\btheta_i) \right) \right]^2 \rightarrow \sum_{i=1}^{\infty} w_i^2\bd_0'\bV_i\bd_0~.
\end{equation}
Combining together equations
\eqref{equation:theorem2biasLimit}, \eqref{equation:theorem2varianceLimit2}, and \eqref{equation:theorem2varianceLimit1} shows that as $N, T\to\infty$
 \begin{equation} 
T\times \E\left[\hat{\mu}(\bw_N)-\mu(\btheta_1)  \right]^2 \to  \left( \sum_{i=1}^{\infty} w_i \bd_0'\bEta_i-\bd_0'\bEta_1 \right)^2 + \sum_{i=1}^{\infty}w_i^2 \bd_0'\bV_i\bd_0~.
 \end{equation}

\end{proof}
 
\section{Proof of Lemma \ref{lemma:etaEstimators}}

	\begin{proof}[Proof of lemma \ref{lemma:etaEstimators}]
	First assertion: in notation of the proof of lemma \ref{lemma:individual}, for $T>T_0$
	\begin{align} 
 	\sqrt{T}\left( \hat{\btheta}_i -\hat{\btheta}_1 \right) &  = \bEta_i - \bEta_1  + \sqrt{T} \Bigg(  \hat{\bH}_{i\, T}^{-1} \dfrac{1}{T}\sum_{t=1}^T \nabla  m(\hat{\btheta}_i, \bz_{i\,t})  -  \hat{\bH}_{1\,T}^{-1}\dfrac{1}{T}\sum_{t=1}^T \nabla  m(\hat{\btheta}_1, \bz_{1\,t})   \Bigg)~.
	\end{align}
By lemma \ref{lemma:individual},	the  term in parentheses tends to $\bZ_i-\bZ_1\sim N(\bEta_i-\bEta_1, \bV_i+\bV_1)$, as $\bZ_1$ and $\bZ_i$ are independent. Convergence is joint by lemma \ref{lemma:individual} since $\sqrt{T}\left(\hat{\btheta}_i -\hat{\btheta}_1 \right) = \sqrt{T}\left(\hat{\btheta}_i -\btheta_1 \right) - \sqrt{T}\left(\hat{\btheta}_1 -\btheta_1 \right)$.  
	
	Now turn to the second assertion. First, it holds that
	\begin{align}\label{equation:scaledMGtheta1}
&	\sqrt{T}\left(\frac{1}{N} \sum_{i=1}^N \hat{\btheta}_i - \btheta_1  \right) \xrightarrow{p} -\bEta_1
	\end{align}
as $N, T\to\infty$	by  theorem OA.1.1 in the Online Appendix,   with the $\mu$ the identity map (which satisfies condition \ref{asm:focus}).
%\footnote{\textcolor{ black}{Formally, we only establish theorem \ref{theorem:fixed} for a scalar parameter $\mu$. To see that it applies to the case of vector $\hat{\btheta}$ and $\mu(\btheta)=\btheta$, it is sufficient to apply the Cramér-Wold device. The Cramér-Wold device succeeds because for each $\bc\in \R^{\dim \btheta}$ $\mu(\btheta)=\bc'\btheta$ is a scalar parameter that satisfies assumption \ref{asm:focus}. The corresponding gradient is $\bd_0= c$. 
%Alternatively, the assertion can be seen by applying lemma \ref{lemma:seqJoint} directly to the MG estimator, the steps remain unchanged.  } }
  Then 
\begin{equation} 
\sqrt{T}\left(\hat{\btheta}_1- \frac{1}{N} \sum_{i=1}^N \hat{\btheta}_i  \right) = \sqrt{T}\left(\hat{\btheta}_1- \btheta_1\right) +\sqrt{T}\left( \btheta_1 - \frac{1}{N} \sum_{i=1}^N \hat{\btheta}_i  \right)\Rightarrow \bZ_1  + \bEta_1\sim N(\bEta_1, \bV_1), 
\end{equation}
by lemma \ref{lemma:individual} and Slutsky's theorem.
\end{proof}
 
\section{Proof of Theorems \ref{theorem:randomWeights:fixed} and \ref{theorem:randomWeights:large}}

	\begin{proof}[Proof of theorem \ref{theorem:randomWeights:fixed}]
	Lemma \ref{lemma:etaEstimators} implies that
	\begin{align} 
	\sqrt{T} (\hat{\btheta}_i -\hat{\btheta}_1) & \Rightarrow \bZ_i-\bZ_1
	\end{align}
	jointly for all $i=1,\ldots,N$. Hence jointly for all $i$ and $j$ it holds that
	\begin{align} 
 \left[ \hat{\bPsi}_{\bar{N}}\right]_{i\,i} & \Rightarrow \bd_0'( (\bZ_i-\bZ_1)(\bZ_i-\bZ_1)' +\bV_i)\bd_0  & = & \left[ \overline{\bPsi}_{\bar{N}}\right]_{i\,i} ,\\
	\left[ \hat{\bPsi}_{\bar{N}}\right]_{i\,j} & \Rightarrow \bd_0'( (\bZ_i-\bZ_1)(\bZ_j-\bZ_1)')\bd_0 & = & \left[ \overline{\bPsi}_{\bar{N}}\right]_{i\,j} , \quad i\neq j.
	\end{align}
	Note that $\hat{\bPsi}_{\bar{N}}$ is finite-dimensional, and all its elements jointly converge  as $ T\to\infty$. Then the continuous mapping theorem readily implies that for any $\bw^{\bar{N}}\in\Delta^{\bar{N}}$
	\begin{equation} 
\widehat{LA\mhyphen MSE}_{\bar{N}}(\bw^{\bar{N}})\Rightarrow \overline{LA\mhyphen MSE}_{\bar{N}}(\bw^{\bar{N}}) \coloneqq \bw^{\bar{N}'}\overline{\bPsi}_{\bar{N}}\bw^{\bar{N}} ~,
	\end{equation}
	which establishes the first claim.\\
	The second claim is an implication of the argmax theorem (theorem 3.2.2 in \cite{VanderVaart1996}). The conditions of that theorem are satisfied since we have that
	 \begin{enumerate}
	 	\item By the first assertion of the theorem,   $\widehat{LA\mhyphen MSE}_{\bar{N}}(\bw^{\bar{N}})\Rightarrow  \overline{LA\mhyphen MSE}_{\bar{N}}(\bw^{\bar{N}})$ as $T\to\infty$ for every $\bw^{\bar{N}}$ in the compact set $\Delta^{\bar{N}}$.
 
	 	\item The limit problem  $\argmin_{\bw^{\bar{N}}\in\Delta^{{\bar{N}}}} \bw^{\bar{N}'}{\overline{\bPsi}}_{\bar{N}}\bw^{{\bar{N}}}$ is a problem of minimizing a strictly convex continuous function on a compact convex set $\Delta^{\bar{N}}$, hence it has a unique solution. Strict convexity of the objective function  follows since $\overline{\bPsi}_{\bar{N}}$ is positive definite. To see that $\overline{\bPsi}_{\bar{N}}$ is positive definite, it is sufficient to observe that for any $\bw\neq 0$ $\bw'\overline{\bPsi}_{\bar{N}}\bw\geq \min_{i: w_i\neq 0} w_i^2\bd_0'{\bV}_i{\bd}_0>0$. \textcolor{black}{The inequality follows as $\bw'\overline{\bPsi}_{\bar{N}}\bw$ is formally the MSE associated with the problem with individual variances given by $\bV_i$ and biases of the form $(\bZ_i-\bZ_1)$.  Hence  $\bw'\overline{\bPsi}_{{\bar{N}}}\bw$= Bias$^2(\bw)$ + Variance$(\bw)\geq $ Variance$(\bw)\geq $ the minimal component of variance}.  % In simple words, MSE = SqBias + Var\geq Var \geq min Var, using no covariance terms
	 Last, 	$\min_{i: w_i\neq 0} w_i^2\bd_0'{\bV}_i{\bd}_0>0$ since $\bV_i$ is positive definite by assumption \ref{asm:mestimation} and $\bd_0\neq 0$.  
	 	\item The weights $\hat{\bw}^{\bar{N}}$ minimize $\widehat{LA\mhyphen MSE}_M(\bw^{\bar{N}})$ over the compact set $\Delta^{\bar N}$ for all $T$.	
	 \end{enumerate} 
	Then the argmax theorem applies and $\hat{\bw}^{\bar{N}}\Rightarrow \overline{\bw}^{\bar{N}}=\argmin_{\bw^{\bar{N}}\in \Delta^{\bar{N}}} \bw^{{\bar{N}}'}\overline{\bPsi}_{\bar{N}}\bw^{\bar{N}}$ as $T\to\infty$.\\
	The third claim follows from joint convergence of the weights, the estimators being averaged, and the continuous mapping theorem. 
\end{proof}

\begin{proof}[Proof of theorem \ref{theorem:randomWeights:large}]

 First assertion:	  let $\bw^{{\bar{N}, 	\infty}}\in\tilde{\Delta}^{\bar{N}}$. Then by lemma \ref{lemma:etaEstimators} and Slutsky's theorem we conclude that as $N, T\to\infty$
% \footnote{\textcolor{black}{Observe that the value of $N$ only affects $ 	\sqrt{T} (N^{-1}\sum_{i=1}^N \hat{\btheta}_i - \btheta_1   )   \xrightarrow{p} -\bEta_1$, as in eq. \eqref{equation:scaledMGtheta1} in the proof of lemma \ref{lemma:etaEstimators}.  }}
 \begin{align}
&  \widehat{LA\mhyphen MSE}_{\infty}(\bw^{\bar{N}, \infty})\\
   &  =  \bw^{\bar{N}, \infty'}\hat{\bPsi}_{\bar{N}}\bw^{\bar{N}, \infty} + \left[   \left(1-\sum_{i=1}^{\bar{N}}w_i^{\bar{N}, \infty} \right) \left( 	\sqrt{T}\hat{\bd}_1'\left(\hat{\btheta}_1- \frac{1}{N} \sum_{i=1}^N \hat{\btheta}_i  \right)\right)
 \right. 
 \\ & \quad
 \left. - 2\sum_{i=1}^{\bar{N}}w_i^{\bar{N}, \infty}\hat{\bd}_1' \sqrt{T}\left(\hat{\btheta}_i-\hat{\btheta}_1\right)  \right]
 \left( 1-\sum_{i=1}^{\bar{N}}w_i^{\bar{N}, \infty} \right)
 \left( 	\sqrt{T}\hat{\bd}_1'\left(\hat{\btheta}_1- \frac{1}{N} \sum_{i=1}^N \hat{\btheta}_i  \right)
 \right)\\
\Rightarrow    & \overline{LA\mhyphen MSE}_{\infty}(\bw^{\bar{N}, \infty})\\
 & \coloneqq\bw^{\bar{N}, \infty'} \overline{\bPsi}_{\bar{N}}\bw^{\bar{N}, \infty} + \left[   \left(1-\sum_{i=1}^{\bar{N}}w_i^{\bar{N}, \infty} \right) \bd_0'\left(\bEta_1+ \bZ_1\right)
 \right. 
 \\ & \quad
 \left. - 2\sum_{i=1}^{\bar{N}}w_i^{\bar{N}, \infty}\bd_0'  \left(\bZ_i-\bZ_1\right)  \right]
 \left( 1-\sum_{i=1}^{\bar{N}}w_i^{\bar{N}, \infty} \right)
 \bd_0'\left(\bEta_1 + \bZ_1
 \right)
 \end{align}

 Second assertion: follows by the same logic as in the fixed-$N$ regime (theorem \ref{theorem:randomWeights:fixed}). 
 The objective function $\widehat{LA\mhyphen MSE}_{\infty}(\bw^{\bar{N},\infty})$ can be represented as a quadratic function $\bx'\hat{\bQ}\bx$, where $\bx\in \Delta^{\bar{N}+1}$ stands in for $\left(\bw^{\bar{N},\infty}, 1- \sum_{i=1}^{\bar{N},\infty} w_i\right)$, and 
	\begin{align*} 
	\hat{\bQ}  & = \begin{pmatrix}
	\hat{\bPsi}_{\bar{N}}  & \hat{\bb}\\
	\hat{\bb}' & T \left[ \hat{\bd}_1'\left(\hat{\btheta}_1- \frac{1}{N} \sum_{i=1}^N \hat{\btheta}_i  \right)\right]^2
	\end{pmatrix} \Rightarrow \overline{\bQ} = \begin{pmatrix}
	\overline{\bPsi}_{\bar{N}} & \overline{\bb} \\
	\overline{\bb}'& \left[\bd_0'(\bEta_1 + \bZ_1) \right]^2
	\end{pmatrix}\\
	\hat{\bb} & = \begin{pmatrix}
-\hat{\bd}_1'T(\hat{\btheta}_1-\hat{\btheta}_1) \left(\hat{\btheta}_1- \frac{1}{N}\sum_{i=1}^N \hat{\btheta}_i \right)'\hat{\bd}_1\\
\vdots\\ 
-\hat{\bd}_1'T(\hat{\btheta}_{\bar{N}}-\hat{\btheta}_1) \left(\hat{\btheta}_1- \frac{1}{N}\sum_{i=1}^N \hat{\btheta}_i \right)'\hat{\bd}_1 & 
	\end{pmatrix}\Rightarrow \overline{\bb}=  \begin{pmatrix}
	\bd_0'\left(\bZ_1-\bZ_1 \right)(\bEta_1 + \bZ_1)'\bd_0\\
	\vdots\\
		\bd_0'\left(\bZ_{\bar{N}}-\bZ_1 \right)(\bEta_1 + \bZ_1)'\bd_0
	\end{pmatrix}.
	\end{align*}
We now verify the condition of the argmax theorem for the problem of minimizing $\bx'\hat{\bQ}\bx$ over $\Delta^{\bar{N}+1}$:
\begin{enumerate}
	\item By the first assertion of the theorem, for any $\bx$ in the compact set $\Delta^{\bar{N}+1}$ it holds that $\bx'\hat{\bQ}\bx\Rightarrow \bx'\overline{\bQ}\bx$ as $N, T\to\infty$ jointly.

	\item The limit problem  $\argmin_{\bx \in\Delta^{\bar{N}+1}} \bx'\overline{\bQ}\bx$ is a problem of minimizing a strictly convex continuous function on a compact convex set $\Delta^{\bar{N}+1}$, hence it has a unique solution. 
	Similarly to the above, strict convexity follows from positive definiteness of $\overline{\bQ}$. To establish positive definitiness, first let $\bx\neq 0$ such that at least one of first $\bar{N}$ coordinates are nonzero. For such an $\bx$  it holds that $\bx'\overline{\bQ}\bx\geq \min_{i=1, \dots, \bar{N}, x_i\neq 0} x_i^2\bd_0'{\bV}_i{\bd}_0>0$ where the inequality follows as in the proof of theorem \ref{theorem:randomWeights:fixed}. %, since $\bV_i$ is p.d. by assumption \ref{asm:mestimation} and $\bd_0\neq 0$.
	 Alternatively, if the first $\bar{N}$ coordinates of $\bx$ are zero, then $\bx'\overline{\bQ}\bw= x_{\bar{N}+1}^2\left(\bd_0'(\bEta_1+\bZ_1) \right)^2>0$ ($(\bZ_1)$-a.s.). 
	%   \textcolor{DarkOrchid}{the inequality follows as $\bw'\bPsi^*\bw$ is formally the MSE of using a  vector of weights $\bw$ (not necessarily lying in $\Delta^M$), hence  $\bw'\bPsi^*\bw$= Sq. Bias$(\bw)$ + Variance$(\bw)\geq $ Variance$(\bw)\geq $ Minimal component of variance} % In simple words, MSE = SqBias + Var\geq Var \geq min Var, using no covariance terms
	\item The vector $\hat{\bx}^{\bar{N}, \infty} = ( \hat{\bw}^{\bar{N}, \infty}, 1-\sum_{i=1}^{\bar{N}} \hat{w}_i^{\bar{N}, \infty}) $ minimizes $\bx'\hat{\bQ}\bx$ over the compact set $\Delta^{\bar{N}+1}$ for all $N>\bar{N}, T$.
 
\end{enumerate}  
Then the argmax theorem shows that  $\hat{\bx}^{\bar{N}, \infty}\Rightarrow \overline{\bx}^{\bar{N}, \infty}\coloneqq\argmin_{\bx \in\Delta^{\bar{N}+1}} \bx'\overline{\bQ}\bx$. Finally, it is sufficient to observe that $\hat{\bw}^{\bar{N}, \infty}$ comprises the first $\bar{N}$-coordinates of $\hat{\bx}^{\bar{N}, \infty}$, and $\overline{\bw}^{\bar{N}, \infty}$ comprises the first $\bar{N}$ coordinates of $\overline{\bx}^{\bar{N},\infty}$.

The last assertion follows from the joint convergence of $\left(\hat{\bw}^{\bar{N}, \infty}\right)$,  $\sqrt{T}(\mu(\hat{\btheta}_2-\mu(\btheta_1))), \dots$, and  $\sqrt{T}(\mu(\hat{\btheta}_{\bar{N}})-\mu(\btheta_1)))$  as $N, T\to\infty$, and from the fact that $\sqrt{T}(\sum_{j=\bar{N}+1}^{N} v_{j\,N-\bar{N}} \mu(\hat{\btheta}_i)-\mu(\btheta_1))\xrightarrow{p} -\bd_0'\bEta_1 $ by theorem OA.1.1 in the Online Appendix.
\end{proof}

\end{document}